\newcommand{\tdeg}{\mathcal{T}_{\deg}} 
\newcommand{\target}{\textsc{target}}
\newlist{condenum}{enumerate}{1} % 'condenum': a new, enumerate-like list env.
\setlist[condenum]{label=(F\arabic*)., 
ref=(F\arabic*), wide}
\newcommand{\spp}{\text{Sparsity}} 
\newcommand{\lca}{\textsc{LCA}}
\newcommand{\Q}{\widehat{Q}}
\newcommand{\m}{\widehat{m}}
\newcommand{\CT}{\text{COST}}
\newcommand{\wCT}{\text{WCOST}}
\newcommand{\lv}{\textsc{leaves}}
\newcommand{\est}{O\left(\frac{1}{\varphi^2}\right)\cdot  \wCT_{\wt{H}}(\wt{T}) + \textsc{TotalClustersCost}(G)+ O\left(\frac{\xi m n k^2 }{\varphi^2} \right) }
\newcommand{\contrib}{\textsc{contrib}}
\newcommand{\eqdef}{:=}
\newcommand{\rsm}{\textsc{RandomSampleModel}}
\newcommand{\estg}{\textsc{EST}}
\newcommand{\estgen}{\frac{1}{a} \cdot \wCT_{\wt{H}}(\wt{T}) + \frac{b}{a} mn + \textsc{TotalClustersCost}(G)}
\newcommand{\zext}{z_{\text{ext}}}
\newcommand{\runningtimeiv}{ O^*\left(n^{1/2+O(\e/\varphi^2)}\cdot \left(\frac{1}{\xi}\right)^{O(1)}\right)}
\newcommand{\rank}{\mathrm{rank}}
\newcommand{\cC}{\mathcal{C}}
\newcommand{\cD}{\mathcal{D}}
\newcommand{\cL}{\mathcal{L}}
\newcommand{\cT}{\mathcal{T}}
\newcommand{\G}{\mathcal{G}}
\newcommand{\E}{\mathbb{E}}
\newcommand{\R}{\mathbb{R}}
\newcommand{\inner}[2]{\langle #1, #2 \rangle}
\theoremstyle{plain}
\newtheorem{theorem}{Theorem}[section]
\newtheorem{lemma}[theorem]{Lemma}
\theoremstyle{definition}
\newtheorem{definition}[theorem]{Definition}
\theoremstyle{remark}
\newtheorem{remark}[theorem]{Remark}
\newtheorem{lemmma}{Lemma}
\newtheorem{mydef}{Definition}
\newtheorem{mycor}{Corollary}
\newtheorem{observation}{Observation}
\newtheorem{claim}{Claim}
\newtheorem{fact}{Fact}
\newcommand{\e}{\epsilon}
\newcommand{\poly}{\text{poly}}
\newcommand{\vol}{\text{vol}}
\newcounter{todocounter}
\newcommand{\adp}[1]
  {\ensuremath{\left\langle #1 \right\rangle_{\scriptscriptstyle apx}}}
\newcommand{\rdp}[1]
  {\ensuremath{\left\langle #1 \right\rangle}}
\newcommand{\wt}[1]{\widetilde{#1}}
\newcommand{\return}{\textbf{return }}
\newcommand{\mycolor}[1]{\textcolor{black}{#1}}
\DeclareMathOperator*{\argmin}{arg\,min}
\def\polylog{\operatorname{polylog}}
\title{Approximating Dasgupta Cost in Sublinear Time\\ from a Few Random Seeds}
\author{%
Michael Kapralov\\EPFL \and
Akash Kumar\\IIT Bombay \and 
Silvio Lattanzi\\Google Research \and
Aida Mousavifar \\Google \and 
Weronika Wrzos-Kaminska \\EPFL}
\date{}
\begin{document}

\maketitle

\begin{abstract}
 Testing graph cluster structure has been a central object of study in property testing since the foundational work of Goldreich and Ron [STOC'96] on expansion testing, i.e. the problem of distinguishing between a single cluster (an expander) and a graph that is far from a single cluster. 
 More generally, a $(k, \e)$-clusterable graph $G$ is a graph whose vertex set admits a partition into $k$ induced expanders, each with outer conductance bounded by $\e$.  A recent line of work initiated by Czumaj, Peng and Sohler [STOC'15] has shown how to test whether a graph is close to $(k, 
\e)$-clusterable, and to locally determine which cluster a given vertex belongs to with misclassification rate $\approx \e$, but no sublinear time algorithms for learning the structure of inter-cluster connections are known. As a simple example, can one locally distinguish between the `cluster graph' forming a line and a clique? 

 In this paper, we consider the problem of testing the {\em hierarchical} cluster structure of $(k, \e)$-clusterable graphs in sublinear time. Our measure of hierarchical clusterability is the well-established Dasgupta cost, and our main result is an algorithm that approximates Dasgupta cost of a $(k, \e)$-clusterable graph in sublinear time, using a small number of randomly chosen {\em seed vertices} for which cluster labels are known. Our main result is an $O(\sqrt{\log k})$ approximation to Dasgupta cost of $G$ in  $\approx n^{1/2+O(\e)}$ time using $\approx n^{1/3}$ seeds, effectively giving a sublinear time simulation of the algorithm of Charikar and Chatziafratis [SODA'17] on clusterable graphs. To the best of our knowledge, ours is the first result on approximating the hierarchical clustering properties of such graphs in sublinear time.
 
\end{abstract}

%!TEX root = ./main.tex
%\newpage

 \vspace{-10pt}
\section{Introduction}
\label{sec:intro}

Graph clustering is a central problem in data analysis, with applications in a wide variety of scientific disciplines from data mining to social science, statistics and more. The overall objective in these problems is to partition
the vertex set of the graph into disjoint ``well connected'' subgraphs which are sparsely connected to each other.  It is quite common in the practice of graph clustering that besides the graph itself one is given a list of vertices with correct cluster labels for them, and one must extend this limited amount of cleanly labeled data to a clustering of the entire graph. This corresponds to the widely used {\em seeded} model (see, e.g.,~\cite{basu2002semi} and numerous follow up works, e.g.,~\cite{demiriz1999semi, kulis2009semi, sinkkonen2002clustering, AshtianiB15}). The central question that we consider in this paper is 

\vspace{0.05in}

\fbox{
\parbox{0.9\textwidth}{
\begin{center}
	What can be learned about the cluster structure of the input graph from a few seed nodes in sublinear time?
\end{center}
}
}
\vspace{0.05in}

Formally, we work with the classical model for well-clusterable graphs~\cite{CzumajPS15}, where the input graph $G = (V, E)$ is assumed to admit a partitioning into a disjoint union of $k$ induced expanders $C_1,\ldots, C_k$ with outer conductance bounded by $\epsilon\ll 1$ and inner conductance being $\Omega(1)$. We refer to such instances as \textit{$(k, \Omega(1), \e)$-clusterable}  graphs, or $(k, \e)$-clusterable  graphs for short. Such graphs have been the focus of significant attention in the property testing literature~\cite{CzumajS07,KaleS08,NachmiasS10}, starting from the seminal work of \cite{GoldreichR11}. A recent line of work has shown how to design nearly optimal sublinear time clustering oracles for such graphs, i.e., algorithms that can consistently answer clustering queries on such a graph from a local exploration only. However, existing works do not show how to learn the structure of connections between the clusters. In particular, to the best of our knowledge, no approach in existing literature can resolve the following simple question:

\vspace{0.05in}

\fbox{
\parbox{0.9\textwidth}{
\begin{center}
Distinguish between the clusters being arranged in a line 
and the clusters forming an (appropriately subsampled) clique (See Fig.~\ref{fig:linevsclique})
. 
\end{center}
}
}

\vspace{0.05in}
\begin{figure}[H]
	\centering
		    \begin{tikzpicture}[scale=1, every node/.style={circle, draw, minimum size=1cm}]
        % First Graph: Clusters in a line
        \node[fill=gray!30] (A) at (0,2)  {\(C_1\)};
        \node[fill=gray!30] (B) at (2,2) {\(C_2\)};
        \node[fill=gray!30] (C) at (4,2) {\(C_3\)};
        \node[fill=gray!30] (D) at (6,2) {\(C_4\)};
        \node[fill= gray!30] (E) at (8,2) {\(C_k\)};
  
        % Inter-cluster edges (Parallel edges for balance)
      %  \foreach \i/\j in {A/B, B/C, C/D, D/E} {
       %     \draw[dashed] (\i) -- (\j);
       %     \draw[dashed] ([yshift=5pt] \i.east) -- ([yshift=5pt] \j.west);
       %     \draw[dashed] ([yshift=-5pt] \i.east) -- ([yshift=-5pt] \j.west);
       % }
        \draw[bend left=15] (A) to (B);
                \draw[bend left=25] (A) to (B);
        \draw[bend left=5] (A) to (B);
        \draw[bend right=5] (A) to (B);
        \draw[bend right=15] (A) to (B);
        \draw[bend right=25] (A) to (B);
        \draw[bend left=25] (B) to (C);
        \draw[bend left=15] (B) to (C);
        \draw[bend left=5] (B) to (C);
        \draw[bend right=5] (B) to (C);
        \draw[bend right=15] (B) to (C);
        \draw[bend right=25] (B) to (C);

             \draw[bend left=25] (C) to (D);
        \draw[bend left=15] (C) to (D);
        \draw[bend left=5] (C) to (D);
        \draw[bend right=5] (C) to (D);
        \draw[bend right=15] (C) to (D);
             \draw[bend right=25] (C) to (D);
                   \draw[bend left=25] (D) to (E);
      \draw[bend left=15] (D) to (E);
        \draw[bend left=5] (D) to (E);
        \draw[bend right=5] (D) to (E);
        \draw[bend right=15] (D) to (E);
                \draw[bend right=25] (D) to (E);
        
        % Second Graph: Clusters in a clique arranged in a regular pentagon
        \node[fill=gray!30] (X) at ({2*cos(90)+12}, {2*sin(90)+2}) {\(C_1\)};
        \node[fill=gray!30] (Y) at ({2*cos(162)+12}, {2*sin(162)+2}) {\(C_2\)};
        \node[fill=gray!30] (Z) at ({2*cos(234)+12}, {2*sin(234)+2}) {\(C_3\)};
        \node[fill=gray!30] (W) at ({2*cos(306)+12}, {2*sin(306)+2}) {\(C_4\)};;
        \node[fill=gray!30] (V) at ({2*cos(378)+12}, {2*sin(378)+2}) {\(C_k\)};;
        
        % Inter-cluster edges (More connections to balance total edges)
        \foreach \i/\j in {X/Y, Y/Z, Z/W, W/V, V/X, X/Z, X/W, Y/W, Y/V, Z/V} {
            \draw[bend left = 10] (\i) to (\j);
             \draw (\i) to (\j);
            
            \draw[bend right = 10] (\i) to (\j);
        }
    \end{tikzpicture}
	  \caption{Clusters arranged in a line (Left);  Clusters forming a clique (Right)}
	\label{fig:linevsclique}
\end{figure}

 More generally, we would like to design a sublinear time algorithm that approximates the {\em hierarchical clustering} properties of $k$-clusterable graphs.  Hierarchical clustering is a useful primitive in machine learning and data science with essential applications in information retrieval \cite{ManningRS, Berkhin06}, social networks \cite{GilbertSZJB} and phylogenetics \cite{EisenSBB}. Informally, in hierarchical clustering the objective is to construct a hierarchy of partitions that explain the cluster structure of the graph at different scales -- note that such a partitioning looks very different in the two cases (line vs. clique) above.  Formally, the quality of such a hierarchy of partitions is often evaluated using Dasgupta cost~\cite{dasgupta2016cost}, and the main question studied in our paper is

\vspace{0.05in}

\fbox{
\parbox{0.9\textwidth}{
\begin{center}
	Is it possible to approximate the Dasgupta cost of a $(k, \Omega(1), \e)$-clusterable graph using few queries to the input graph and a few correctly clustered seed vertices?
\end{center}
}
}

\vspace{0.05in}
In practice, an algorithm operating in the seeded model~\cite{basu2002semi} most often does not have full control over the seeds, but rather is given a list generated by some external process. To model this, we assume that the seed vertices are sampled independently from the input graph, with probability proportional to their degrees: we refer to this model as the {\em random sample model}. 

\vspace{0.05in}

\paragraph{The case $k=1$, i.e., approximating the Dasgupta cost of an expander.}  
When $k=1$, our input is a single expander, i.e., a single cluster, we approximate its Dasgupta cost in sublinear time using degree queries on the seeds. At first glance one might think that Dasgupta cost of an expander can be approximated well simply as a function of its number of vertices and average degree, but this is only the case for {\em regular} expanders. The irregular case is nontrivial, a $\poly(1/\varphi)$ approximation was recently given  by~\cite{MSun21}.  As our first result, we give an algorithm approximating Dasgupta cost of an (irregular) $\varphi$-expander using $\approx n^{1/3}$ seed vertices (and degree queries on these vertices). This, somewhat surprisingly, turns out to be a tight bound. Specifically, we show

\begin{theorem}[Approximating Dasgupta cost of an expander]\label{thm:iwehgiwehgh}
Dasgupta cost of a $\varphi$-expander can be approximated to within a $\text{poly}(1/\varphi)$ factor using degree queries on $\approx n^{1/3}$ seed vertices. Furthermore, the bound of $\approx n^{1/3}$ is tight up to polylogarithmic factors.
\end{theorem}

\paragraph{The case $k>1$}  For $ k >1$  we leverage recent results on clustering oracles to decompose the problem of approximating the Dasgupta cost into two: {\bf (1)} approximating Dasgupta cost of individual clusters and {\bf (2)} approximating Dasgupta cost of the contracted graph, in which each cluster is contracted into a supernode. 
Such a decomposition is only possible for bounded degree graphs, see Example 4.2 in {\cite{MSun21}}, so this is the setting we work in. We show that access to a few seed vertices is sufficient to obtain oracle access to the cut function (and, more generally, quadratic form of the Laplacian) of the contracted graph in time $\approx n^{1/2+O(\e)}$.   Our main result is Theorem~\ref{thm:infprepros-tree} below:Theorem~\ref{thm:infprepros-tree} below:

\begin{restatable}[Informal version of Theorem~\ref{thm:estdcost}]{theorem}{treepreproc}\label{thm:infprepros-tree}
There exists an algorithm that for every  $(k, \Omega(1), \e)$-clusterable bounded degree graph $G = (V, E)$ estimates the Dasgupta cost of $G$ up to $O(\sqrt{\log k})$ factor in the random sample model in time $\approx n^{1/2+O(\epsilon)}\cdot (d_{\max})^{O(1)}$. 
\end{restatable}

\begin{remark}
We remark that our algorithm for estimating Dasgupta cost from Theorem~\ref{thm:estdcost} can be made to provide an oracle access to a low cost hierarchical clustering tree.
\end{remark}

\begin{remark}
One can verify by adapting the lower bound of $\Omega(n^{1/2})$ on expansion testing due to Goldreich and Ron~\cite{GR02}  that at least $\Omega(\sqrt{n/k})$ queries are needed for a $o(k/\log k)$ approximation for constant $k$ in this model. The proof is a rather direct adaptation of the classical result of Goldreich and Ron, and we therefore do not present it.    
\end{remark}

\begin{remark}
Recall that in our {\em random sample} model for seed vertices the seeds are sampled independently with probability proportional to their degrees.  This model matches quite closely what happens in practice in the sense that the algorithm does not always have full control over the seeds~\cite{basu2002semi}. One can also consider the stronger model in which the algorithm can ask for correct label of {\em any} vertex of its choosing. This model is significantly stronger, and in particular, one can design an algorithm for obtaining the same approximation of Dasgupta cost as our Theorem~\ref{thm:infprepros-tree} above, but with time complexity polynomial in $d$, $\log n$ and $1/\epsilon$.
\end{remark}

We note that the currently best known approximation to the Dasgupta cost on $n$-vertex graphs is $O(\sqrt{\log n})$, achieved by the recursive sparsest cut algorithm of \cite{charikar2017approximate}. Our approximation is $O(\sqrt{\log k})$, matching what the Charikar and Chatziafratis algorithm achieves on $k$-node graphs. In fact, our main technical contribution is an efficient way of simulating this algorithm in sublinear time on $k$-clusterable graphs.

{\bf Related work on $(k, \Omega(1), \e)$-clusterable graphs.} Such graphs have been extensively studied in the property testing framework as well as local computation models. Its testing version, where one essentially wants to determine $k$, the number of clusters in $G$, in sublinear time, generalizes the well-studied problem of testing graph
expansion, where one wants to distinguish between an expander (i.e. a good single cluster) and a graph with a sparse cut (i.e., at least two clusters).  \cite{GoldreichR11} showed that expansion testing requires $\Omega(n^{1/2})$ queries, then \cite{CzumajS07,KaleS08,NachmiasS10} developed algorithms to distinguish an expander from a graph that is far from a graph with conductance $\epsilon$ in time $\approx n^{1/2+O(\epsilon)}$, which the recent work of~\cite{chiplunkar2018testing} showed to be tight.
The setting of $k>2$ has seen a lot of attention recently~\cite{CzumajPS15, chiplunkar2018testing,Peng20,GluchKLMS21}, with close to information theoretically optimal clustering oracles, i.e., small space data structures that provide quick access to an approximate clustering, obtained in~\cite{GluchKLMS21}. More recently, \cite{MSun21} studied hierarchical clustering of $k$-clusterable graphs and developed a nearly 
linear time algorithm that approximates the Dasgupta cost of the graph up to a constant factor. However, their algorithm to work requires significantly stronger assumptions on the input data i.e., $\epsilon \ll 1/k^{O(1)}$, and their algorithm does not run in {\em sublinear time}. Note that the problem of estimating the Dasgupta cost becomes non-trivial when $\epsilon\gg \frac{1}{k}$, i.e., when the Dasgupta cost of the graph is dominated by the outgoing 
edges between different clusters\footnote{\label{foot-cost} For instance, in a $d$-regular, $(k, \varphi, \e)$-clusterable graph, one can easily show that the Dasgupta cost is at least $\Omega(\frac{\varphi\cdot d\cdot n^2}{k})$, simply because of the contribution of the $k$ induced $\varphi$-expanders. On the other hand, the total number of edges running between the clusters is bounded by $\e\cdot d\cdot n$, and therefore their total contribution to the Dasgupta cost is $O(\e \cdot d\cdot n^2)$. Thus, the problem becomes non-trivial when $\epsilon\gg \frac{1}{k}$.}.

The most closely related work on our setting is~\cite{KKLM23} where the authors provide a sublinear algorithm for hierarchical clustering. However, their algorithm works under significantly stronger assumptions on their input instance. They introduce the notion of hierarchically clusterable graphs, which assumes a planted hierarchical clustering structure not only at the bottom level of the hierarchy but at \emph{every level}. Their result relies on several properties of such graphs. In contrast, we only assume that the input graph is $k$-clusterable. For this reason we cannot use the techniques developed in~\cite{KKLM23},  and we need to develop a completely new approach.

{Very recently, \cite{AKLP, ACLMW22} considered the problem of hierarchical clustering under Dasgupta 
objective in the streaming model. Both papers give a one pass $\widetilde{O}(n)$ memory streaming 
algorithm which finds a tree with Dasgupta cost within an $O(\sqrt{\log n})$ factor of the optimum 
in polynomial time. Additionally, \cite{AKLP} also considers this problem in the query model 
and presents an $O(\sqrt{\log n})$  approximate hierarchical clustering using $\widetilde{O}(n)$ 
queries without making any clusterability assumptions of the input graph. On the other hand, our 
algorithms assume the graph is $k$-clusterable and approximate the Dasgupta cost within an $O(\sqrt{\log k})$ 
in sublinear time.}

{\bf Related work on hierarchical clustering.} 
We briefly review developments in the area of algorithms for hierarchical clustering
since the introduction of Dasgupta's objective function.
Dasgupta designed an algorithm based on recursive sparsest-cut that provides $O(\log ^{3/2} n)$ 
approximation for his objective function. This was improved by Charikar and Chatizafratis
who showed that the recursive sparsest-cut algorithm already returns a tree with approximation 
guarantee $O(\sqrt{\log n})$ \cite{charikar2017approximate}. Furthermore, they showed that it's 
impossible to approximate the Dasgupta cost within a constant factor in general graphs under 
the Small-Set Expansion hypothesis.
More recently, \cite{cohen2018hierarchical} studied this problem in a regime in which the input graph is sampled from 
a Hierarchical Stochastic Block Model \cite{cohen2018hierarchical}. They construct a tree in nearly linear time that approximates Dasgupta cost of the graph up to a constant factor. \cite{cohen2018hierarchical}  uses a type of hierarchical stochastic block model, which generates close to regular expanders with high probability, and their analysis crucially relies on having dense clusters and large degrees. Our model allows for arbitrary expanders as opposed to dense random graphs and is more expressive in this sense.

{\bf Related work in semi-supervised active clustering.}
We note that our model is also related to the semi-supervised active clustering framework (SSAC) introduced in~\cite{AshtianiKB16}. In this model we are given a set $X$ of $n$ points and an oracle answering to same-cluster queries of the form ``are these two points in the same cluster?''. Thanks to its elegance and applications to crowdsourcing, the model received a lot of attention and has been extensively studied both in theory~\cite{ailon2018approximate, ailon2017approximate, BCLP20, BCLP21-density, huleihel2019same, mazumdar2017semisupervised, mazumdar2017clustering, NIPS2017_7054, saha2019correlation, vitale2019flattening} and in practice~\cite{firmani2018robust, gruenheid2015fault, verroios2015entity, verroios2017waldo} --- see also \cite{emamjomeh2018adaptive} for other types of queries. 

\vspace{-5pt}

\subsection{Basic definitions}

\begin{mydef}[Inner and outer conductance] \label{def:inner-outer-conductance}
Let $G = (V, E) $ be a graph. For a set $C\subseteq V$ and a set $S\subseteq C$, let $E(S,C\setminus S)$ be the set of edges with one endpoint in $S$ and the other in $C\setminus S$. The \textit{conductance of $S$ within $C$}, is $ \phi^G_C(S)=\frac{|E(S, C\setminus S)|}{\vol(S)} $. The \textit{outer conductance} of $C$ is defined to be 
$\phi^G_{\text{out}}(C)=\phi_V^G(C)=\frac{|E(C,V\setminus C)|}{\vol(C)}\text{.}$ The \textit{inner conductance} of $C\subseteq V$ is defined to be  $\phi_{\text{in}}^G(C)=\min_{S\subseteq C\text{,} 0<|S|\leq \frac{\vol(C)}{2}}\phi^G_C(S)$ if $|C|>1$ and one otherwise. 
\end{mydef}
For Theorem {\ref{thm:infprepros-tree}}, we assume that the degree of every vertex is maximal by adding self-loops, and use the notion of conductance corresponding to the graph with the added self-loops. We define $k$-\textit{clusterable} graphs as a class of instances that can be partitioned into $k$ expanders with small outer conductance:

\begin{restatable}[$(k,\varphi,\epsilon)$-clustering]{mydef}{clusterable}\label{def:clusterable}
Let $G=(V,E)$ be a graph. A $(k,\varphi,\epsilon)$-clustering of $G$ is a partition of vertices $V$ into disjoint subsets $C_1, \ldots, C_k$ such that for all $i\in [k]$, $\phi_{\text{in}}^G(C_i)\geq\varphi$, $\phi_{\text{out}}^G(C_i)\leq\epsilon$ and for all $i,j \in [k]$ one has $\eta \coloneqq \frac{|C_i|}{|C_j|} \in O(1).$ %(Definition \ref{def:inner-outer-conductance}). 
A graph $G$ is called $(k,\varphi,\epsilon)$-clusterable if there exists a $(k,\varphi,\epsilon)$-clustering for $G$.
\end{restatable}

\noindent {\bf Dasgupta cost.} Hierarchical clustering is the task of partitioning vertices of a graph into nested clusters. 
The nested partitions can be represented by a rooted tree whose leaves correspond to the vertices 
of graph, and whose internal nodes represent the clusters of vertices. Dasgupta 
introduced a natural optimization framework for formulating hierarchical clustering tasks as an 
optimization problem \cite{dasgupta2016cost}. We recall this framework now. Let $T$ be any rooted tree whose leaves are vertices of the graph. For any node $x$ of $T$, let $T[x]$ be the subtree rooted at $x$, and let $\lv(T[x]) \subseteq V$ denote the leaves of this subtree. For leaves $x,y \in V$ , let $\lca(x,y)$ denote the lowest common ancestor of $x$ and $y$ in $T$. In other words, $T[\lca(x,y)]$ is the smallest subtree whose leaves contain both $x$ and $y$. 

\begin{restatable}[Dasgupta cost \cite{dasgupta2016cost}]{mydef}{defdasgupta} 
\label{def:dasgupta-cost}
The Dasgupta cost of the tree $T$ for the graph $G = (V, E)$ is defined to be
$\CT_G(T) = \sum_{\{x,y\}\in E} |\lv(T[\lca(x, y)])|\text{.}$
\end{restatable}

\noindent {\bf The random sample model for seed vertices.} We consider a  {\em random sample} model for seed vertices, in which the algorithm is given a (multi)set $S$ of {\em seed} vertices, which are sampled independently with probability proportional to their degrees, together with their cluster label.

\section{Technical overview}

In this section, we give an overview of our main algorithmic result, stated below as Theorem \ref{thm:estdcost} (formal version of Theorem \ref{thm:infprepros-tree}). 
It postulates a sublinear time algorithm for estimating the Dasgupta cost of $k$-clusterable graphs. Here, we use $O^*$-notation to suppress $\poly(k)$, $\poly(1/\varphi)$, $\poly(1/\e)$  and $\polylog n$-factors. 

\begin{restatable}{theorem}{thmestdcost} \label{thm:estdcost}
	Let $k \geq 2$, $\varphi \in (0,1)$ and $\frac{\epsilon}{\varphi^2}$ be a sufficiently small constant. Let $G=(V,E)$ be a bounded degree graph that admits a $(k,\varphi,\epsilon)$-clustering $C_1, \ldots , C_k$. Let $|V|=n$. %and let $\eta \leq \frac{\min_{i\in [k]} |C_i|}{\max_{i\in [k]} |C_i|}, \frac{\min_{i\in [k]} \vol(C_i)}{\max_{i\in [k]} \vol(C_i)}$.
 
	There exists an algorithm ($\textsc{EstimatedCost}(G)$; Algorithm 
	\ref{alg:tree:mn}) that
	w.h.p. estimates the optimum Dasgupta cost of $G$ within an $O\left(\frac{\sqrt{\log k}}{\varphi^{O(1)}}\right)$ factor in time 
	$O^*\left(n^{1/2+O(\e/\varphi^2)} \cdot  (d_{\max})^{O(1)}\right)$ using $O^*\left(n^{O(\e/\varphi^2)}\cdot (d_{\max})^{O(1)}\right)$ seed queries.
\end{restatable}

Our algorithm consists of two main parts: First, we estimate the contribution from the inter-cluster edges to the Dasgupta cost. A natural approach is to contract the clusters $C_1, \ldots, C_k$ into supernodes, and use the Dasgupta cost of the contracted graph (defined below) as a proxy.
\begin{restatable}[Contracted graph]{mydef}{defcontract}
\label{def:contracted}
\label{def:weighetdcontracted}
Let $G=(V,E)$ be a graph and let $\cC = (C_1, \ldots , C_k)$ denote a partition of $V$ into disjoint subsets.  We say that the weighted graph $H=\left([k], {[k] \choose 2}, W, w\right)$ is a contraction of $G$ with respect to the partition $\cC$ if for every $i,j\in [k]$ we have $W(i,j)=|E(C_i, C_j)|$, and for every $i\in[k]$ we have $w(i)=|C_i|$. 
We denote the contraction of $G$ with respect to the partition $\cC $ by  $H = G/\cC$.
\end{restatable}
The problem is of course that it is not clear how to get access to this contracted graph in sublinear time, and our main contribution is a way of doing so. 
Our approach  amounts to first obtaining access to the quadratic form of the Laplacian of the contracted graph $H$, and then using the hierarchical clustering algorithm of \cite{charikar2017approximate} 
on the corresponding approximation to the contracted graph. Thus, we essentially show how to simulate the algorithm of~\cite{charikar2017approximate} in sublinear time on $(k, \varphi, \e)$-clusterable graphs.

The procedure \textsc{TotalClustersCost} approximates the contribution from the internal cluster edges to the Dasgupta cost. 

Algorithm \ref{alg:tree:mn} below presents our estimator for the Dasgupta cost of the graph.

\begin{algorithm}[H]
\caption{$\textsc{EstimatedCost}(G)$ {\qquad  \qquad\qquad\qquad\qquad \qquad \qquad \qquad  time  $\approx n^{1/2+O(\epsilon)}$}
}
\label{alg:tree:mn}
\begin{algorithmic}[1]
\STATE $\xi \gets \left(\frac{\varphi}{k\cdot d_{\max}}\right)^{O(1)}$
\STATE $\mathcal{D} \gets \textsc{InitializeWeightedDotProductOracle($G,\xi$)}$ \textit{\qquad\qquad\ \qquad \quad  \# See Algorithm \ref{alg:LearnEmbedding-tt}}
\STATE $\widetilde{H} \gets\textsc{ApproxContractedGraph}(G, \xi,\mathcal{D})$ \textit{\quad \ \#The Laplacian $\widetilde{\mathcal{L}}$ of $\widetilde{H}$ satisfies Equation \eqref{eq:l-lh-spec}}
\STATE $\widetilde{T} \gets \textsc{WeightedRecursiveSparsestCut}(\widetilde H)$ \textit{\qquad  \#Weighted version of Algorithm of~\cite{charikar2017approximate}}
\STATE $\estg \gets \est$
%\STATE \return ${\CT_{\wt{H}}(\wt{T})+  \textsc{TotalClustersCost}(G)} $ %{\qquad \qquad\qquad \qquad \qquad \qquad  time  $\approx n^{1/3}$}
\STATE \return \estg
\end{algorithmic}
\end{algorithm}
Our algorithm uses a weighted definition of Dasgupta cost (Definition~\ref{def:weightedcost}), which we denote $\wCT$, to relate the cost of $G$ and the contracted graph $H$. 
Then, our estimate EST in Algorithm \ref{alg:tree:mn} simply sums the contribution from the weighted Dasgupta cost of the tree $\widetilde{T}$ on the contracted graph $\widetilde{H}$, with the contribution from the clusters. We want to ensure that the estimate always provides an upper bound on the optimal Dasgupta cost of $G$. To this end, we scale the weighted Dasgupta cost $\wCT_{\widetilde{H}}(\widetilde{T})$ up by a factor of $O\left(\frac{1}{\varphi^2} \right)$ (to account for the multiplicative error), and add a term on the order of $\frac{\xi mn k^2}{\varphi^2}$ (to account for the additive error). That way we obtain an estimate EST such that

\[\CT(G) \leq \text{EST} \leq O\left(\frac{\sqrt{ \log k}}{\varphi^{O(1)}} \right)\CT(G), \]
where $\CT(G) $ denotes the optimum Dasgupta cost of $G$. 

We outline the main ideas behind accessing the contracted graph in Section~\ref{sec:overview}, and present the complete analysis in Section~\ref{sec:estimate-cost}. The\textsc{
TotalClustersCost} procedure simply outputs a fixed value that depends on $n$, $d$, and $k$. We provide more details on this in Section~\ref{sec:expandercost-overview} and present to full analysis in Section \ref{sec:clustercost}. 

We remark that with a little post-processing, our algorithms for estimating Dasgupta cost can be adapted to recover a low-cost hierarchical-clustering tree. To construct such a tree we first construct a tree $\widetilde{T}$ with $k$ leaves on the contracted graph. The algorithm constructs $\widetilde{T}$ in sublinear time $\approx n^{1/2+O(\epsilon)}$.
Then, for every cluster $C_i$ one can construct a particular tree $\mathcal{T}^i_{\deg}$ using (Algorithm $1$ of \cite{MSun21}) on the vertices of $C_i$. Finally, we can extend the leaf $i$ of the tree $\widetilde{T}$ by adding trees $\mathcal{T}^i$ as its direct child. Note that constructing $\mathcal{T}^i_{\deg}$ explicitly takes time $O(|C_i|)$, however, this step is only required if one intends to output the full hierarchical-clustering tree of $G$. Otherwise, for only estimating $\CT(G)$, we can estimate $\CT(\mathcal{T}^i_{\deg})$ as a function of the cluster size and the degree without explicitly constructing $\mathcal{T}^i_{\deg}$.

%!TEX root = ./main.tex
\vspace{-5pt}

\subsection{Estimating Dasgupta cost of an expander using seed queries}\label{sec:expandercost-overview}
In this section, we design an algorithm for estimating the Dasgupta cost of an irregular $\varphi$-expander up to $\poly(1/\varphi)$ factor using $\approx n^{1/3}$ seed queries. We also prove that this is optimal (Theorem \ref{thm:lw-Gk}) in subsection \ref{sec:tightrunningtime}. Later in the paper (in Section \ref{sec:clustercost}), we approximate the contribution of the clusters to the Dasgupta cost of a $d$-regular $(k,\varphi, \epsilon)$-clusterable graph. There, a more basic approach suffices. In this section, we focus on a single but irregular $\varphi$-expander.

\begin{restatable}{theorem}{singleclustercost}
	\label{thm:singleclustercost}
	Let $G=(V,E)$ be a $\varphi$-expander (possibly with self-loops).  Let $T^*$ denote the tree
	with optimum Dasgupta cost for $G$. 
	Then procedure \textsc{ClusterCost} (Algorithm \ref{alg:clustercontr}), uses $O^*\left(n^{1/3}\right)$ seed queries and with probability $1-n^{-101}$ returns a value such that:
	\[ \CT(T^*) \leq \textsc{ClusterCost}(G) \leq O\left(\frac{1}{\varphi^{5}}\right)\cdot  \CT(T^*) \text{.} \]
\end{restatable} 

We now outline the proof of Theorem \ref{thm:singleclustercost}.  

Let $G$ be a $\varphi$-expander, i.e., $\phi_{\text{in}}(G)\geq \varphi$. To estimate the Dasgupta cost of $G$, we use Theorem \ref{thm:MSun21:thm3} from \cite{MSun21}. This result shows that there is a specific tree called $\mathcal{T}_{\deg}$ on $G$ that approximates the Dasgupta cost of $G$ up to $O\left(\frac{1}{\varphi^4}\right)$.  For completeness, we include the algorithm (Algorithm \ref{alg:HCdeg}) for computing $\mathcal{T}_{\deg}$ from \cite{MSun21}. Note that Algorithm \ref{alg:HCdeg} from \cite{MSun21} runs in time $O(m+n\log{n})$, however, we don't need to explicitly construct $\mathcal{T}_{\deg}$. Instead,
we design an algorithm that estimates the cost of $\mathcal{T}_{\deg}$ in time $n^{1/3}$. 

\begin{algorithm}[H]
	\caption{\textsc{HCwithDegrees($G\{V\}$)} \cite{MSun21}}
	\label{alg:HCdeg}
	\begin{algorithmic}[1]
		\STATE \textbf{Input}: $G=(V, E, w)$ with the ordered vertices such that $d_{v_1}\geq \ldots \geq d_{v_{|V|}}$  
		\STATE \textbf{Output}: An HC tree $\mathcal{T}_{\deg}(G)$
		\IF{$|V|=1$}
		\STATE \textbf{return} the single vertex $V$ as the tree
		\ELSE
		\STATE $i_{\max} \coloneqq \lfloor \log_2{|V|-1} \rfloor$; $r \coloneqq 2^{i_{\max}}$; $A \coloneqq \{v_1,\ldots,v_r\}$; $B \coloneqq V\textbackslash A$
		\STATE Let $\mathcal{T}_1 \coloneqq $ \textsc{HCwithDegrees($G\{A\}$)}; $\mathcal{T}_2 \coloneqq $ \textsc{HCwithDegrees($G\{B\}$)}
		\STATE \textbf{return} $\mathcal{T}_{\deg}$ with $\mathcal{T}_1$ and $\mathcal{T}_2$ as the two children
		\ENDIF
	\end{algorithmic}
\end{algorithm}

\begin{restatable}[Theorem 3 in \cite{MSun21}]{theorem}{mstedg}\label{thm:MSun21:thm3}
	Given any graph $G=(V, E, w)$ with inner-conductance $\varphi$ as input, Algorithm \ref{alg:HCdeg} runs in $O(m+n\log{n})$ time, and returns an $HC$ tree 
	$\mathcal{T}_{\deg}(G)$ that satisfies $\CT_G(\mathcal{T}_{\deg}(G))=O(1/\varphi^4) \cdot {OPT}_G$.  
\end{restatable}
Our procedure for estimating the Dasgupta cost of the tree returned by Algorithm~\ref{alg:HCdeg} is based on a simple expression for the (approximate) cost of this tree that we derive (and later show how to approximate by sampling). 

Let $G = (V,E)$ be an arbitrary expander with vertices $x_1, x_2, \ldots x_n$ ordered such that $d_1 \geq d_2 \geq \ldots \geq d_n$, where $d_i = \text{deg}(x_i)$. We denote
by $\tdeg$ the Dasgupta Tree returned by Algorithm 1 of \cite{MSun21}. Specifically, we show that the cost  $\CT_G(\tdeg)$ is to within an $O(1/\varphi)$ factor approximated by
\begin{equation}\label{eq:iewh9u23rfeqwrf}
	\sum_{i=1}^{n} i \cdot d_i=\sum_{x\in V} \rank(x) \cdot \deg(x),
\end{equation}
where $\deg(x)$ is the degree of $x$ and $\rank(x)$ is the rank of $x$ in the ordering of vertices of $V$ in non-increasing order of degrees. The proof is rather direct, and is presented in the appendix (Lemmas~\ref{lem:dcost:lb} and~\ref{lem:dcost:ub}). Our task therefore reduces to approximating~\eqref{eq:iewh9u23rfeqwrf} in sublinear time. To achieve this, we partition the vertices into buckets according to their degree:  For every $d$ between $1$ and $n/\varphi$ that is a power of $2$, let $B_d := \{x \in V : d \leq \deg(x) <2d\}.$  We will refer to $B_d$ as the \emph{degree class} of $d$. Let $n_d:=|B_d|$ denote the size of the degree class, and let $r_d$ denote the highest rank in $B_d$.  Note that $r_d$ is the number of vertices in $G$ that have degree at least $d$, so we have $r_d= \sum_{t \geq d} n_t$. 

The vertices in $B_d$ have ranks $r_d, r_{d}-1,\dots, r_d -n_d+1$ and degrees in $[d,2d]$, which gives the bounds
\begin{equation}\label{eqn:rdnd:tech}
    \frac{d}{2} \cdot n_d \cdot r_d \leq \sum_{i = r_d -n_d+1}^{r_d} i \cdot d \leq \sum_{x \in B_d} \rank(x) \cdot \deg(x) \leq \sum_{i = r_d -n_d+1}^{r_d} i \cdot 2d \leq 2 d \cdot n_d \cdot r_d,
\end{equation}
so our task is further reduced to estimating the quantity 
\begin{equation}\label{eqn:estquantity}
    \sum_{d} d \cdot n_d \cdot r_d. 
\end{equation}
We do so by sampling: simply sample $\approx n^{1/3}$ vertices, and approximate the number of vertices $n_d$ and the highest rank $r_d$ of each degree class. This is summarized in Algorithm~\ref{alg:clustercontr} below. 
\begin{algorithm}[H]
	\caption{\textsc{ClusterCost$(G, S,\hat{m})$}
		\newline \textit{\# $S$ is a (multi)set of size $s$ of vertices in $G=(V, E)$ }
		\newline \textit{\#	$\hat{m}$ is a constant factor estimate of $|E|$} }
	\label{alg:clustercontr}
	\begin{algorithmic}[1]
            \FOR{every $d$ between $1$ and $n/\varphi$ that is a power of $2$}
		%\FOR{$d=2^0, 2^1,...,2^{\log (n/\varphi)}$}
            \STATE $\hat{n}_d \gets \frac{2\hat{m}}{s}|\{ v \in S: d \leq \deg(v) < 2d\}|$ \qquad \quad \textit{\# Estimate the number of vertices by sampling}
            \STATE $\hat{r}_d \gets \frac{2\hat{m}}{s}|\{ v \in S: d \leq \deg(v) \}|$ \qquad \qquad \quad  \textit{ \# Estimate the rank by sampling}
		\ENDFOR
		\STATE \return  $\sum_d \hat{n}_d \cdot \hat{r}_d$
	\end{algorithmic}
\end{algorithm}
While the algorithm is simple, the analysis is quite interesting, and the bound of $n^{1/3}$ on the number of seeds is tight! We now outline the main ideas behind the analysis of the algorithm.

Ideally, we would like to estimate the number of vertices $n_d$ and the highest rank $r_d$ of every degree class $d$. However, this is hard to achieve, as some degree classes may be small. The crux of the analysis is showing that with $\approx n^{1/3}$ samples, we can approximate $n_d$ and $r_d$ for any degree class that contributes at least a $(1/\log n)$-fraction of the Dasgupta cost. 

 Recalling that our model assumes degree proportional sampling, the expected number of samples from any degree class $B_t$ is
\[\frac{s}{2m} \sum_{x \in B_t} \deg (x)  \approx \frac{s}{m} n_t \cdot t, \] where $s$ is the total number of samples. 
Thus, we can estimate $n_t$ and $r_t$ whenever
$n_t \cdot t \geq \Omega^*\left(\frac{m}{s}\right).$

Now, consider the degree class $d$ with the highest degree mass. Since there are at most $\log n/\varphi$ different degree classes,  we have $n_d \cdot d \geq \Omega^*(m).$
Thus, we can estimate the contribution to the Dasgupta cost of any degree class $B_t$  which satisfies 

\begin{equation*}\label{eq:doscoverclass}
 \frac{n_d \cdot d}{n_t \cdot t} \leq O^*(s). 
\end{equation*}

Using the degree class $d$ as a reference, we show that any degree class $t$ that has a significant contribution to the Dasgupta cost, must have a sufficiently large degree mass compared to $d$. 

Specifically, if $B_t$ is a degree class that contributes at least  a $(1/\log n)$-fraction of the Dasgupta cost, i.e. 
\[ \sum_{x \in B_t} \rank(x) \deg(x) \geq \frac{1}{\log n} \sum_{x \in V} \rank(x) \deg(x), \]
then, by Equation \eqref{eqn:rdnd:tech}, we have 
\[2t \cdot r_t \cdot n_t \geq \sum_{x \in B_t} \rank(x) \deg(x) \geq  \frac{1}{\log n} \sum_{x \in V} \rank(x) \deg(x) \geq  \frac{1}{\log n} \sum_{i=1}^{r_t} i \cdot t \geq  \frac{1}{\log n} \cdot \frac{r_t^2}{2} \cdot t.\]
From this, we conclude that $n_t \gtrapprox r_t$, allowing us to use the quantity $n_t ^2 \cdot t$ as a further proxy for the contribution of $B_t$ to the Dasgupta cost.

Furthermore, we show that if $d$ is our high-degree-mass reference class and $t$ is any degree class that contributes at least a $1/\log n $ fraction of the Dasgupta cost, then $n_t^2 \cdot t \geq n_d^2 \cdot d$. Intuitively, this is because the contribution from $B_t$ is no smaller than the contribution from $B_d$. 

Therefore, the following optimization problem provides an upper bound on the sufficient number of samples. 
\begin{align*}
      \max_{t,n_t,d,n_d} \frac{n_d \cdot d}{n_t \cdot t}  & \quad  \\
      \text{ such that}&\\
       n_t^2 \cdot t & \geq  n_d^2\cdot  d \qquad \qquad \#\text{$B_t$  has large contribution to the Dasgupta Cost} \\
      &   n_t, n_d \leq n \qquad \quad \ \#\text{at most $n$ vertices} \\
    &   n_t,t, d \geq 1 \qquad \quad   \#\text{$B_t$ is non-empty and degrees are non-zero} \\
      &   n_d\geq 0 . 
\end{align*}

However, the above optimization problem is too weak. For example, setting $n_d = n^{1/2}$, $d = n$, $n_t = n$, $t=1$ gives a feasible solution with value $n^{1/2}$. But this solution would correspond to having $n^{1/2}$ vertices of degree $n$ and $n$ vertices of degree $1$, which is impossible in an actual graph. 
We remedy this by adding an additional constraint that encodes that  $t, n_t, d, n_d $ arise from a valid graph. 
\begin{align*}
      \max_{t,n_t,d,n_d} \frac{n_d \cdot d}{n_t \cdot t}  & \quad  \\
      \text{ such that}&\\
       n_t^2 \cdot t & \geq  n_d^2\cdot  d  \qquad \qquad \#\text{$B_t$  has large contribution to the Dasgupta cost} \\
            & d \leq n_d  \qquad \qquad \ \   \#\text{$B_d$ does not have too many edges to $V \setminus B_d$}\\
       &   n_t, n_d \leq n \qquad \quad  \ \#\text{at most $n$ vertices} \\
    &   n_t,t,d \geq 1 \qquad \quad  \#\text{$B_t$ is non-empty and degrees are non-zero} \\
      &   n_d\geq 0 . 
\end{align*}

A priori, there is no reason why the constraint $d \leq n_d$ should be satisfied by our reference class $B_d$. However, we show that for any graph, it is possible to find a reference class $B_d$ which satisfies $d \leq n_d$ and contributes a large fraction of the degree mass. Intuitively, this is because if all the high-degree-mass classes had $d> n_d$, then they would require too many edges to be routed outside of their degree class, eventually exhausting the available vertices. See proof of Lemma \ref{lemma:heavy_and_good} in Section \ref{sec:rankdeg} for the details.

Finally, we prove that the refined optimization problem has optimal value $\approx n^{1/3}$. Therefore $\approx n^{1/3}$ samples suffice to discover any degree class $t$ with a non-trivial contribution to the Dasgupta cost. The full analysis is presented in Appendix \ref{sec:exp-cost}. 

We also show that $\Omega(n^{1/3})$ seeds are necessary to approximate $\sum_{x\in V} \rank(x)\cdot \deg(x)$ to within any constant factor:

\begin{restatable}{theorem}{unionclusterableruntimetight}
	\label{thm:lw-Gk}
	For every positive constant $\alpha>1$ and $n$ sufficiently large, there exists a pair of expanders $G$ and $G'$ such that $\sum_{i=1}^n i \cdot d_i  \leq  n^2$, $\sum_{i=1}^n i \cdot d'_i \geq \alpha n^2$ and at least $\Omega(n^{1/3})$ vertices need to be queried in order to have probability above $2/3$ of distinguishing between them (where $d_1  \geq ... \geq d_n \geq 1$ is the degree sequence in $G$ and $d'_1 \geq ...\geq  d'_n \geq 1$ is the degree sequence in $G'$). 
\end{restatable}
Figure \ref{fig:lb} below illustrates the graphs $G$ and $G'$ from Theorem \ref{thm:lw-Gk}. Graph $G$ has of a set $A$ of $n^{2/3}$ vertices of degree $n^{2/3}$, and the remaining vertices have degree $1$. Graph $G’$ has a set $A’$ of $n^{2/3}$ vertices of degree $n^{2/3}$, but the remaining vertices have degree $\alpha$. In order to distinguish the two graphs, we need to query a vertex outside of $A$ or $A'$, but this requires  $\Omega(n^{1/3})$ queries in expectation. The proof of Theorem \ref{thm:lw-Gk} is straightforward, and is included in Section \ref{sec:tightrunningtime}. 

\begin{figure}[H]
	\centering
	\begin{subfigure}[b]{0.45\textwidth}
		\centering
		\begin{tikzpicture}
    \def\n{8} % Number of vertices in the clique
    \def\radius{1cm} % Radius of the circle for the clique
    \def\outerradius{2cm} % Radius of the circle for the outer nodes

    % Draw the clique
    \foreach \i in {1,...,\n}
    {
        \node[draw, circle, inner sep=2pt, fill=black] (C\i) at ({360/\n * (\i - 1)}:\radius) {};
    }

    % Draw edges between all pairs of clique vertices
    \foreach \i in {1,...,\n}
    {
        \foreach \j in {1,...,\n}
        {
            \ifnum\i<\j
                \draw (C\i) -- (C\j);
            \fi
        }
    }

    % Draw the outer nodes evenly spaced and connect each clique vertex to two outer nodes
    \foreach \i in {1,...,\n}
    {
        \node[draw, circle, inner sep=2pt, fill=black] (O\i a) at ({360/\n * (\i - 1) + 15}:\outerradius) {};
        \node[draw, circle, inner sep=2pt, fill=black] (O\i b) at ({360/\n * (\i - 1) - 15}:\outerradius) {};
        \draw (C\i) -- (O\i a);
        \draw (C\i) -- (O\i b);
    }
        \node[font=\Large] at (0, -\outerradius-1cm) {$G$};
\end{tikzpicture}
		%\caption{Caption for Figure 1}
		\label{fig:lb_G}
	\end{subfigure}
	\hfill
	\begin{subfigure}[b]{0.45\textwidth}
		\centering
		\begin{tikzpicture}
    \def\n{8} % Number of vertices in the clique
    \def\radius{1cm} % Radius of the circle for the clique
    \def\outerradius{2cm} % Radius of the circle for the outer nodes

    % Draw the clique vertices
    \foreach \i in {1,...,\n}
    {
        \node[draw, circle, inner sep=2pt, fill=black] (C\i) at ({360/\n * (\i - 1)}:\radius) {};
    }

    % Draw edges between all pairs of clique vertices except one edge for each vertex
    \foreach \i in {1,...,\n}
    {
        \pgfmathtruncatemacro{\next}{mod(\i,\n)+1} % Next vertex in the circle

        \foreach \j in {\next,...,\n}
        {
            \draw (C\i) -- (C\j);
        }
    }

    % Delete exactly one incident edge for every other vertex in the clique
    \foreach \i in {1,3,...,\n}
    {
        \pgfmathtruncatemacro{\prev}{mod(\i-2+\n,\n)+1} % Previous vertex in the circle
        \draw[white, line width=3pt] (C\i) -- (C\prev); % Hide one edge
    }

    % Draw the outer nodes evenly spaced and connect each outer node to two clique nodes
    \foreach \i in {1,...,\n}
    {
        \node[draw, circle, inner sep=2pt, fill=black] (O\i a) at ({360/\n * (\i - 1) + 15}:\outerradius) {};
        \node[draw, circle, inner sep=2pt, fill=black] (O\i b) at ({360/\n * (\i - 1) - 15}:\outerradius) {};
        
        \pgfmathtruncatemacro{\prev}{mod(\i-2+\n,\n)+1}
        \pgfmathtruncatemacro{\next}{mod(\i,\n)+1}
        
        \draw (C\i) -- (O\i a);
        \draw (C\prev) -- (O\i a);

        \draw (C\i) -- (O\i b);
        \draw (C\next) -- (O\i b);
    }

    % Label for the graph
    \node[font=\Large] at (0, -\outerradius-1cm) {$G'$};

\end{tikzpicture}
		%\caption{Caption for Figure 2}
		\label{fig:lb_G_prime}
	\end{subfigure}
	\caption{Illustration of the two instances in Theorem \ref{thm:lw-Gk}.}
	\label{fig:lb}
\end{figure}

Finally, we describe our procedure \textsc{TotalClustersCost} for approximating the total contribution of the clusters to the Dasgupta cost of a $d$-regular $(k,\varphi, \epsilon)$-clusterable graph. To approximate the contribution of a single cluster $C_i$, it suffices to apply the formula $ \sum_{x \in C_i} \rank(x) d = \frac{|C_i|(|C_i|+1)}{2}d \approx |C_i|^2d.$ The procedure \textsc{TotalClustersCost} simply sums up these contributions from all the clusters. See Section \ref{sec:clustercost} for more details. 

\subsection{Sublinear time access to the contracted graph} \label{sec:overview}
In this section, we consider a $(k,\varphi, \epsilon)$-clusterable graph with bounded maximum degree. For simplicity of presentation, we assume without loss of generality that the graph is $d$-regular. This is because, by a standard reduction, we can convert a degree $d$-bounded graph into a $d$-regular graph by adding self-loops to each vertex.

We denote the Laplacian of $G$ by $\cL_G$ and the normalized Laplacian of $G$ by $L_G$. 
We will use the following notation for additive-multiplicative approximation.
\begin{mydef}
For $x,y \in \mathbb{R}$, write 
$$x \approx_{a,b} y \text{\qquad \qquad if }a^{-1}\cdot y - b \leq x \leq a \cdot y+b.$$
For matrices $X,Y \in \mathbb{R}^{k \times k}$, write
$$X \approx_{a,b} Y \text{\qquad \qquad if }a^{-1}\cdot Y - b\cdot m \cdot I_k \preceq X \preceq a \cdot  Y+b\cdot m \cdot I_k.$$
\end{mydef}
Let $H = G/\cC$ be the contraction of $G$ with respect to the underlying clustering $\cC= (C_1, C_2, \ldots C_k)$ (Definition \ref{def:contracted}). We write $H = ([k], \binom{[k]}{2},W)$ to emphasize that the vertex set of the contracted graph $H$ corresponds 
to the clusters of $G$ and for $i,j \in [k]$, the pair $(i,j)$ is an edge of $H$ with weight
$W(i,j) = |E(C_i, C_j)|$. %We also allow self loops and we set $w(i,i) = |E(C_i, C_i)|$.
If we were explicitly given the adjacency/Laplacian matrix
of the contracted graph $H$, then finding a good Dasgupta tree for $H$ can be easily done by using
the algorithm of \cite{charikar2017approximate} which gives a $O(\sqrt{\log k})$
approximation to the optimal tree for $H$ (and an $\sqrt{\log k}/\varphi^{O(1)}$ approximation
to the optimal tree for $G$ as shown in Theorem~\ref{thm:estdcost}).

%Let $G = (V,E)$ be a $d$-regular $(k, \varphi, \e)$-clusterable graph, and let $\cC = (C_1, C_2, \ldots, C_k)$  be a $(k, \varphi, \e)$-clustering of $G$. %For ease of presentation in this overview, we assume for every $i \in [k]$, $|C_i| = \Theta(n/k)$. 
The problem is that we do not have explicit access to the Laplacian of the contracted graph (denoted by $\cL_H$).
However, to get a good approximation to the Dasgupta Cost of $H$,
it suffices to provide explicit access to a Laplacian $\wt{L}$ (which corresponds to a graph $\wt{H}$) 
where cuts in $\wt{H}$ approximate sizes of corresponding cuts in $H$ %(Lemma \ref{lem:cut-cost})
 in the following sense: 
%$\exists \beta_1<1< \beta_2$ and $\alpha > 0$ such that for all $S \subseteq [k]$, 
$\exists \alpha > 1, \beta>0$ and such that for all $S \subseteq [k]$, 
$$
 |\widetilde{E}(S,V\setminus S)| \approx_{\alpha, \beta} |E(S,V\setminus S)|. 
$$
Motivated by this observation, we simulate this access approximately  
 by constructing a matrix $\widetilde{\cL}$ which spectrally approximates $\cL_H$ in the sense that
\begin{equation}
\label{eq:l-lh-spec}
%\widetilde{\cL} - \xi\cdot m\cdot I_k \preceq \cL_H \preceq a \widetilde{\cL} + \xi \cdot m \cdot I_k
\cL_H \approx_{a, \xi} \widetilde{\cL} 
\end{equation}
in time $\approx m^{1/2+O(\e/\varphi^2)} \cdot \text{poly}(1/\xi)$ for some $0<\xi<1<a$ (see Theorem \ref{thm:contracted-graph}).
So, our immediate goal is to spectrally approximate $\cL_H$. We describe this next. \\

\vspace{-5pt}
\noindent {\bf Spectrally approximating $\cL_H$:} The key insight behind our spectral approximation $\widetilde{\cL}$ to $\cL_H$ comes from considering the case where our graph is a collection of $k$ disjoint expanders each on $n/k$ vertices. To understand this better, let $L_G = U \Lambda U^T$ denote the eigendecomposition of the Laplacian and let $M = U \Sigma U^T$ denote the eigendecomposition of the lazy random walk matrix. Letting $U_{[k]} \in \R^{n \times k}$ denote a matrix whose columns are the first $k$ columns of $U$, we will use random sampling to obtain our spectral approximation $\widetilde{\cL}$ to the matrix $(I - U_{[k]} \Sigma_{[k]} U_{[k]}^T)$. Indeed, for the instance consisting of $k$-disjoint equal sized expanders, note that $I - U_{[k]} \Sigma_{[k]} U_{[k]}^T = U_{-[k]} \Sigma_{-[k]} U_{-[k]}^T$ where $U_{[-k]} \in \R^{n \times (n-k)}$ is the matrix whose columns are the last $(n-k)$ columns of $U$. Using the information that $\lambda_n \geq \lambda_{n-1} \geq \cdots \geq \lambda_{k+1} \geq \varphi^2/2$, one can compare quadratic forms on $I - U_{[k]} \Sigma_{[k]} U_{[k]}^T$ and $L_G$ (the normalized Laplacian of $G$) to show
$$ \wt{\cL}\approx_{O(1/\varphi^2), \xi} \cL_H. $$ 

We will now describe this in more detail. First, we will show that the matrix $I - U_{[k]} \Sigma_{[k]} U_{[k]}^T$ approximates the quadratic forms of $L_G$ multiplicatively. Then, we describe how this allows us to approximate the quadratic forms of $\mathcal{L}_H$. Finally, we will outline how to approximate the matrix $I - U_{[k]} \Sigma_{[k]} U_{[k]}^T$ . 

First, we introduce a central definition to this work, which is the notion of spectral embedding. 

\begin{restatable}[$k$-dimensional spectral embedding]{mydef}{defembedding}\label{def:spec-emb}
For every vertex $x$ we let $f_x=U_{[k]}^T  \mathds{1}_x $ be the $k$-dimensional spectral embedding of vertex $x$.
\end{restatable}

The spectral embeddings of vertices in a graph provide rich geometric information which has been
shown to be useful in graph clustering \cite{lee2014multiway, CzumajPS15, chiplunkar2018testing, GluchKLMS21}.
The following remark asserts that the inner products between $f_x$ and
$f_y$ are well-defined even though the choice for these vectors may not be basis free. First, we need the following 
standard result on eigenvalues of $(k, \varphi, \e)$-clusterable graphs \cite{lee2014multiway, chiplunkar2018testing}.

\begin{restatable}[\cite{GluchKLMS21}]{lemmma}{GLKMlemma}
\label{lem:bnd-lambda}
Let $G=(V,E)$ be a $d$-regular graph that admits a $(k,\varphi,\epsilon)$-clustering. %Let $L_G$ be the normalized Laplacian matrix of $G$. Let $\lambda_1\leq \ldots \leq \lambda_n$ be eigenvalues of $L$, 
Then we have $\lambda_k\leq 2\epsilon$ and $\lambda_{k+1}\geq \frac{\varphi^2}{2}$. 
\end{restatable}

\begin{restatable}{remark}{subspacewelldefine}\label{rm:dot-products-well-defined}
	Take a $(k,\varphi,\epsilon)$-clusterable graph $G$ where $\e/\varphi^2$ smaller than a constant. Thus, 
	the space spanned by the bottom $k$ eigenvectors of the normalized Laplacian 
	of $G$ is uniquely defined, i.e. the choice of $U_{[k]}$ is unique up to multiplication 
	by an orthonormal matrix $R\in \R^{k \times k}$ on the right.  Indeed, 
	by Lemma~\ref{lem:bnd-lambda} it holds that $\lambda_k \leq 2 \e$ and $\lambda_{k+1} \geq \varphi^2/2$. 
	Thus, since we assume that $\e/\varphi^2$ is smaller than an 
	absolute constant, we have $2 \e < \varphi^2/2$ and thus, the subspace spanned by the 
	bottom $k$ eigenvectors of the 
	Laplacian, i.e. the space of $U_{[k]}$, is uniquely defined, as required. We note 
	that while the choice of $f_x$ for $x \in V$ is not unique, but  the dot product 
	between the spectral embedding of $x\in V$ and $y\in V$ is well defined, since for every orthonormal 
	$R\in \R^{k\times k}$ one has  
	\[\langle Rf_x, Rf_y\rangle=(Rf_x)^T(Rf_y)=\left(f_x\right)^T (R^TR) \left(f_y\right)=\left(f_x\right)^T\left(f_y\right)\text{.}\]
\end{restatable}

Since $G$ is $(k, \varphi, \e)$-clusterable, by Remark \ref{rm:dot-products-well-defined}, the space spanned by the bottom $k$ eigenvectors of the $M$ is uniquely defined. 
Thus, for any $z \in \R^n$,  $z^T (U_{[k]} \Sigma_{[k]} U_{[k]}^T) z$ is well defined. 

Having observed this, we will now show that quadratic forms of $I - U_{[k]}\Sigma_{[k]} U_{[k]}^T$ approximate quadratic forms of $L_G$ multiplicatively.
\begin{restatable}{lemmma}{quadraticformlemma}
\label{lem:lw-Mkmass}Suppose that $G$ is $d$-regular, and let $L_G$ and $M$ denote the normalized Laplacian and lazy random walk matrix of $G$. Let $M=U\Sigma U^T$ denote the eigendecomposition of $M$. Then for any vector $z\in \R^n$ with $||z||_2=1$ we have
%\[ 2- 2z^T U_{[k]}\Sigma_{[k]}U_{[k]}^T z \approx_{3/\varphi^2,0} z^T L_Gz\]
\[ \frac{1}{2}  \cdot  z^T L_G z \leq z^T\left(1-  U_{[k]}\Sigma_{[k]}U_{[k]}^T \right)z \leq  \frac{3}{\varphi^2} \cdot  z^T L_G z.\]
\end{restatable}
\begin{proof}
Recall that $U_{[k]}\in \R^{n\times k}$ is a matrix whose columns are the first $k$ 
columns of $U$, and $\Sigma_{[k]}\in \R^{k\times k}$ is a matrix whose columns are the first $k$ rows and columns of $\Sigma$. 
Let $U_{[-k]}\in \R^{n\times (n-k)}$ be matrix whose columns are the last $n-k$ 
columns of $U$, and $\Sigma_{[-k]}\in \R^{(n-k)\times (n-k)}$ be a matrix whose columns are the last $n-k$ rows and columns of $\Sigma$. Thus, the eigendecomposition of $M$ is $M=U\Sigma U^T= U_{[k]}\Sigma_{[k]} U_{[k]}^T + U_{[-k]}\Sigma_{[-k]} U_{[-k]}^T$.
Note that $M = I - \frac{L_G}{2}$,  thus we have
\begin{equation}
		\label{eq:mass-bottom}
		 z^T (U_{[k]}\Sigma_{[k]} U_{[k]}^T) z + z^T (U_{-[k]}\Sigma_{-[k]} U_{-[k]}^T) z = z^T M z= 1-\frac{z^T L_G z}{2}\text{,}
	\end{equation}
which by rearranging gives
	\begin{equation}
		\label{eq:contracted:lap}
		\frac{z^T L_G z}{2} \leq 1 - z^T (U_{[k]}\Sigma_{[k]} U_{[k]}^T) z = \frac{z^T L_G z}{2} + z^T (U_{-[k]}\Sigma_{-[k]} U_{-[k]}^T) z.
	\end{equation}

	The first inequality gives $1 - z^T (U_{[k]} \Sigma_{[k]} U_{[k]}^T) z \geq \frac{z^T L_G z}{2}$ as desired.
	
	To establish the second inequality above, we will show 
	$z^T (U_{-[k]} \Sigma_{-[k]} U_{-[k]}^T) z \leq \frac{2}{\varphi^2} z^T L_G z$.
	Let $z=\sum_{i=1}^n \alpha_i u_i$ be the eigendecomposition of vector $z$. Note that
	\[ z^T L_G z = \sum_{i=1}^n \lambda_i \alpha_i^2
	\geq \lambda_{k+1} \sum_{i=k+1}^n  \alpha_i^2.  \]
	By Lemma \ref{lem:bnd-lambda} we have $\lambda_{k+1}\geq \frac{\varphi^2}{2}$. This gives
	\begin{equation}
		\label{eq:alpha}
		\sum_{i=k+1}^n  \alpha_i^2 \leq \frac{z^T L_G z }{\lambda_{k+1}} \leq \frac{2}{\varphi^2}\cdot z^T L_G z.
	\end{equation}
Finally, putting \eqref{eq:contracted:lap} and \eqref{eq:alpha} together we get
	\begin{align*}
		1-z^T (U_{[k]}\Sigma_{[k]} U_{[k]}^T) z &= \frac{z^T L_G z}{2} + z^T (U_{-[k]}\Sigma_{-[k]} U_{-[k]}^T) z \leq z^T L_G z\left(\frac{1}{2}+\frac{2}{\varphi^2} \right) \leq \frac{3}{\varphi^2}\cdot  z^T L_G z \text{.}
	\end{align*}
\end{proof}

Now, we apply Lemma~\ref{lem:lw-Mkmass} to estimate the quadratic form of $\mathcal{L}_H$ on a vector $z\in \R^k$. To that effect, for $z \in \R^k$, we define $\zext \in \R^n$ as the natural extension of $z$ to $\R^n$: we let $\zext\in \R^n$ be the vector such that for every $x\in V$,  $\zext(x) = z_i$, where $C_i$ is the cluster that $x$ belongs to.

Note that $z^T \mathcal{L}_H z = \zext^T \mathcal{L}_G \zext = d\cdot \zext^T L_G \zext.$ 
%(see Observation \ref{lem:contracted-quadratic}). 
Thus, to estimate $z^T \mathcal{L}_H z$ it suffices to design a good estimate for $\zext^T L_G \zext$, for which we use $\zext^T(I - U_{[k]}\Sigma_{[k]} U_{[k]}^T)\zext$, as per Lemma~\ref{lem:lw-Mkmass}. 

Finally, we briefly discuss how to estimate the quantity $\zext^T(I - U_{[k]}\Sigma_{[k]} U_{[k]}^T)\zext$. We have
$$
\zext^T(I - U_{[k]}\Sigma_{[k]} U_{[k]}^T)\zext=||\zext||_2^2- \zext^TU_{[k]}\Sigma_{[k]} U_{[k]}^T\zext= \sum_{i\in [k]} |C_i| z_i^2- \zext^TU_{[k]}\Sigma_{[k]} U_{[k]}^T\zext.
$$
Since the first term on the RHS can be easily approximated in the random sample model, we concentrate on obtaining a good estimate for the second term. We have
\begin{equation}
	\zext^T U_{[k]} \Sigma_{[k]} U_{[k]}^T \zext = \sum_{i,j \in [k]} z_i z_j \sum_{\substack{x \in C_i\\y \in C_j}} \rdp{f_x, \Sigma_{[k]} f_y},
\end{equation}
and therefore in order to estimate $\zext^T  L_G \zext$, it suffices to use a few random samples to estimate the sum above, as long as one is able to compute high accuracy estimates for  $\rdp{f_x, \Sigma_{[k]} f_y}, x, y\in V,$ with high probability. We refer to such a primitive as a {\em weighted dot product oracle}, since it computes a weighted dot product between the $k$-dimensional spectral embeddings $f_x$ 
 and $f_y$ for $x,y\in V$. Assuming such an estimator, which we denote by $\textsc{WeightedDotProductOracle}$, our algorithm $\textsc{ApproxContractedGraph}$ (Algorithm~\ref{alg:quadratic} below) obtains an approximation $\mathcal{L}'$ to the Laplacian of the contracted graph.  \\
\vspace{-10pt}
\begin{algorithm}[H]
\caption{$\textsc{ApproxContractedGraph}(G, \xi,\mathcal{D})$ 
%\Comment {$\mathcal{D}:=\{\Psi,\Q\}$}
{\qquad \qquad \qquad  time  $m^{1/2+O(\epsilon)}\cdot \poly(1/\xi)$ }
}
\label{alg:quadratic}
%\label{alg:contracted-graph:main-body}
\begin{algorithmic}[1]

    \STATE $s \gets O^*\left(m^{O(\e/\varphi^2)} \cdot (1/\xi)^{O(1)}\right)$ \label{ln:s-num:main} \textit{\qquad \qquad \qquad \qquad \qquad \# See Theorem~\ref{thm:contracted-graph} for the exact value}
    
    \STATE $S\gets$ (multi)set of $s$ i.i.d random vertices together with their cluster label \label{ln:randset}
    \STATE $S_i \gets S\cap C_i$, $\widetilde{w}(i) \gets \frac{|S_i|}{s} \cdot n$, for all $i\in [k]$
    %\STATE $\xi' \gets \xi/8$. \label{ln:xivalsdp}
    \FOR{$i,j\in [k]$}
    \STATE Assign $K_{i,j} = \frac{\widetilde w(i)}{|S_i|} \cdot \frac{\widetilde w(j)}{|S_j|} \cdot \sum_{\substack{x \in S_i\\y \in S_j}} \langle f_x, \Sigma_{[k]} f_y\rangle_{\text{apx}}$ \label{ln:matk}
    \ENDFOR
    \STATE Assign $\cL' = d \cdot (I - K)$.\label{ln:matl}
    \STATE Use SDP to round $\cL'$ to a Laplacian $\widetilde \cL$ s.t $\frac{\varphi^2}{3} \cL' - \frac{\xi}{2}\cdot d n\cdot I_k \preceq \widetilde{\cL} \preceq 2 \cL' + \frac{\xi}{2} \cdot d n\cdot I_k.$ \label{ln:defmat}
    \STATE  $\widetilde{H} \gets \left([k], {[k] \choose 2}, \widetilde{W}, \widetilde{w} \right)$ \textit{\# $\widetilde{H}$ is the weighted graph with Laplacian $\widetilde{\cL}$ and vertex weights $\widetilde{w}(i)$} 
\STATE \textit{\qquad\qquad \qquad  \qquad  \qquad\qquad \qquad\qquad \qquad\qquad \quad\#  Note that $\wt{W}_e=-\wt{\cL}_e$ for every $e\in {[k] \choose 2}$}
\STATE \return $\widetilde{H}$
\end{algorithmic}
\end{algorithm}
%\vspace{-10pt}

\vspace{-5pt}
\noindent {\bf Estimating weighted dot products:} Our construction of \textsc{WeightedDotProductOracle} (Algorithm~\ref{alg:weighted-dot}) for estimating $\rdp{f_x, \Sigma_{[k]} f_y}$ proceeds along the lines of
~\cite{GluchKLMS21}. 
We run short random-walks of length $t \approx \log n/\varphi^2$ to obtain dot product access to the spectral embedding of vertices. %Fix $t \geq \log n/\varphi^2$, the length of the walks.
Given $x \in V$, let $m_x$ denote the probability distribution of endpoints of a $t$-step random-walks started from $x$.

We first show that %given a $(k, \varphi, \e)$-clusterable graph 
one can estimate $\rdp{m_x, m_y}$ 
in time $\approx m^{1/2 + O(\e/\varphi^2)} \cdot \text{poly}(1/\xi)$ with probability $1-n^{-100\cdot k}$. Then, we construct a Gram matrix $\mathcal{G}\in \R^{s\times s}$ such that $\mathcal{G}_{x,y} = \rdp{m_x, m_y}$ for every $x,y \in S$, where $S$ is a small set of sampled vertices with $|S|=s=m^{O(\epsilon)}$. %of random-walk distributions started from a small number of sampled vertices (i.e., $\mathcal{G}=\sum_{x\in S} m_x m_x^T$). 
Next, we apply an appropriate linear transformation to the Gram matrix $\mathcal{G}$ and use it to estimate $\rdp{f_x, \Sigma_{[k]} f_y}$ up to very tiny additive error $\frac{\xi}{n \cdot \poly(k)}$ (see Section \ref{sec:wdp:oracle}). \\

\vspace{-5pt}
\noindent {\bf Using Semidefinite Programming to round $\mathcal{L}'$:} As mentioned above, our proxy for the Laplacian $\cL_H$ is obtained via an approximation to $I - U_{[k]} \Sigma_{[k]} U_{[k]}^T$. However, this approximator might not even be a Laplacian. To allay this, we first show that using calls to weighted dot product oracle, we can approximate all the entries of $I - U_{[k]} \Sigma_{[k]} U_{[k]}^T$ to within a very good precision. Starting off from such an approximation, one can use semidefinite programming methods to round the intermediate approximator to a bonafide Laplacian $\widetilde{\cL}$. In some more detail, we show the following.

\begin{restatable}[Informal version of Theorem~\ref{thm:contracted-graph}]{theorem}{infthmcg}\label{thm:infthmcg}
	The algorithm $\textsc{ApproxContractedGraph}$ (Algorithm \ref{alg:quadratic}) when given a $(k, \Omega(1), \e)$-clusterable graph
	as input, uses a data structure $\cD$ obtained from  $\approx m^{1/2 + O(\e)}$ time preprocessing 
	routine, runs in time $\approx m^{1/2 + O(\e)}$, and finds a graph $\wt{H}$ with Laplacian $\wt{\cL}$
	such that with probability $1-n^{-100}$:
	%\[\Omega\left(\varphi^2\right) \cdot \cL_H -  \xi \cdot m\cdot  I_k \preceq \wt{\cL} \preceq O\left({1}/{\varphi^2}\right)\cdot \cL_H +  \xi \cdot m \cdot I_k.\]
\[ \wt{\cL} \approx_{O(1/\varphi^2), \xi}  \cL_H. \]
\end{restatable} 

%\vspace{-5pt}
\noindent  {\bf Approximating the Dasgupta Cost of the contracted graph $\widetilde{H}$:} Consider the graph $\wt{H} = \left([k], {[k] \choose 2}, \widetilde{W}, \widetilde{w}\right)$ returned by the $\textsc{ApproxContractedGraph}$ procedure (Algorithm \ref{alg:quadratic}). 

Once Theorem~\ref{thm:infthmcg} has been established, our estimation primitive $\textsc{EstimatedCost}$ (Algorithm \ref{alg:tree:mn}) uses a simple vertex-weighted extension of a result of \cite{charikar2017approximate}
to find a 
tree $\wt{T}$ on $\wt{H}$.

Specifically, we need the following definitions.

\begin{restatable}[\textbf{Vertex-weighted sparsest cut problem}]{mydef}{wsparsestcut}
Let $H=(V,E,W,w)$ be a vertex and edge weighted graph. For every set $S\in V$, we define the sparsity of cut $(S, V\setminus S)$ on graph $H$ as \[\spp_H(S)=\frac{W(S, V\setminus S)}{w(S)\cdot w(V\setminus S)}\text{,}\] where $w(S)=\sum_{x\in S}w(x)$. The vertex-weighted sparsest cut of graph $G$ is the cut with the minimum sparsity, i.e., $\arg\min_{S\subseteq V} \spp_H(S)$.
\end{restatable}

\begin{restatable}[\textbf{Vertex-weighted recursive sparsest cut algorithm (\textsc{WRSC})}]{mydef}{wrsc}
\label{def:wrc}
Let $\alpha>1$ and $H=(V,E,W,w)$ be a vertex and edge weighted graph. Let $(S, V\setminus S)$ be the vertex-weighted sparsest cut of $H$.  The vertex-weighted recursive sparsest cut algorithm on graph $H$ is a recursive algorithm that first finds a cut $(T,V\setminus T)$ such that $\spp_H(T) \leq \alpha\cdot \spp_H(S)$, and then recurs on the subgraph $H[T]$ and subgraph $H[V\setminus T]$.
\end{restatable}

Next, we first state results which help bound the Dasgupta Cost incurred by the tree one gets by using the \emph{vanilla} recursive sparsest cut algorithm on any graph. Then, in Corollary~\ref{cor:vertex-extention}, we present corresponding bounds for vertex-weighted graphs.

\begin{restatable}[Theorem 2.3 from \cite{charikar2017approximate}]{theorem}{charikarthm} \label{thm:charikar}
	Let $G=(V,E)$ be a graph. Suppose the RSC algorithm uses an $\alpha$ approximation
	algorithm for uniform sparsest cut. Then the algorithm RSC achieves an $O(\alpha)$
	approximation for the Dasgupta cost of $G$.
\end{restatable}

The following corollary from \cite{charikar2017approximate},  follows using the $O(\sqrt{\log |V|})$ approximation algorithm for the uniform sparsest cut.

\begin{restatable}[\cite{charikar2017approximate}]{mycor}{charikarcor}\label{cor:charikar}
	Let $G=(V,E)$ be a graph. Then algorithm RSC achieves an $O(\sqrt{\log |V|})$ approximation
	for the Dasgupta cost of $G$.
\end{restatable}
Since the clusters of $G$ have different sizes, and since the Dasgupta Cost of a graph is a function of the size of the lowest common ancestor of the endpoints of the edges, 
we use weighted Dasgupta cost to relate the cost of $G$ and the contracted graph $H$. 

\begin{restatable}[\textbf{Weighted Dasgupta cost}]{mydef}{defwtdcost}
\label{def:weightedcost}
	Let $G=(V,E,W, w)$ denote a vertex and edge weighted graph. For a tree $T$ with $|V|$ leaves (corresponding to vertices
	of $G$), we define the weighted Dasgupta cost of $T$ on $G$ as 
	$$\wCT_G(T) = \sum_{(x,y) \in E} W(x,y) \cdot \sum_{z\in \lv(T[\lca(x,y)])} w(z).$$
	\label{def:wtd:dcost}
\end{restatable}
We get the following guarantee on the Weighted Dasgupta Cost obtained by the  \textsc{WRSC} algorithm. 
\begin{restatable}{mycor}{wcharikarcor}
\label{cor:vertex-extention}
Let $H=(V,E,W,w)$  be a vertex and edge weighted graph. Then algorithm \textsc{WRSC} achieves an $O(\sqrt{\log |V|})$ approximation for the weighted Dasgupta cost of $H$.
\end{restatable}

Letting  $\widetilde{T}=\textsc{WRSC}(\widetilde{H})$  be the tree computed by Algorithm \ref{alg:tree:mn}, using Corollary \ref{cor:vertex-extention}, we show that
the estimate 
$$\estg \eqdef \est$$ computed by Algorithm \ref{alg:tree:mn} satisfies 
\[\CT(G) \leq \estg\leq O\left(\frac{\sqrt{ \log k}}{\varphi^{O(1)}} \right)\CT(G). \]
The details are presented in Section~\ref{sec:sketch:tree}.\\

The rest of the paper is structured as follows:
In Section~\ref{sec:estimate-cost}, we prove Theorem~\ref{thm:estdcost}. In Section~\ref{sec:exp-cost} we prove the guarantees of the \textsc{TotalClustersCost} procedure. Finally, in Section \ref{sec:wdp:oracle}, we prove the correctness of the \textsc{WeightedDotProductOracle}.

\newpage

{
\small
\bibliographystyle{alpha}
\bibliography{references}
}
\newpage

\tableofcontents
\appendix
%!TEX root = ./main.tex

\section{Proof of main result (Theorem~\ref{thm:estdcost})} \label{sec:estimate-cost}
Throughout this section, assume that $k \geq 2$, $\varphi\in (0,1)$ and that $\frac{\epsilon}{\varphi^2}$ is smaller than a positive sufficiently small
constant. We will always assume that $G=(V,E)$ is a graph that admits a $(k,\varphi,\epsilon)$-clustering $\mathcal{C} = C_1, \ldots , C_k$ with $\eta \coloneqq \frac{\max_{i \in [k]} |C_i|}{ \min_{i \in [k]} |C_i|} = O(1).$

Given a set $S \subseteq V$, we let $G[S]$ denote the induced subgraph on $S$, and we use the notation $G\{S\}$ to denote the graph $G[S]$ with self loops added so that every vertex in $G\{S\}$ has the same degree as in $G$. 

Recall that we use $O^*$-notation to suppress $\poly (k,  1/\varphi,  1/\e)$ and $\polylog n$ -factors. 
For $i \in \mathbb{N}$ we use $[i]$ to denote the set $\{1,2, \dots, i \}$. For a vertex $x\in V$, we say that $\mathds{1}_x\in\mathbb{R}^{n}$ is the indicator of $x$, that is, the vector which is $1$ at index $x$ and $0$ elsewhere. For a (multi)set $S\subseteq V$, we say that $\mathds{1}_S\in\mathbb{R}^{n}$ is the indicator of set $S$, i.e., $\mathds{1}_S=\sum_{x\in S}\mathds{1}_x$. For a multiset $I_S=\{x_1,\ldots,x_s\}$ of vertices from $V$ we abuse notation and also denote by $S$ the $n \times s$ matrix whose $i^{\text{\tiny{th}}}$ column is $\mathds{1}_{x_i}$. 

Our algorithm and analysis use spectral techniques, and therefore, we setup the following notation. For a symmetric matrix $A$, we write $\nu_i(A)$ (resp. $\nu_{\max}(A), \nu_{\min}(A))$ to denote the $i^{\text{th}}$ largest (resp. maximum, minimum) eigenvalue of $A$.

We also denote with $A_G$ the adjacency matrix of $G$ and let $\cL_G = d \cdot I - A_G$ denote
the Laplacian of $G$. Denote with $L_G$ the \textit{normalized Laplacian} 
of $G$ where $L_G=I-\frac{A_G}{d}$. We denote the eigenvalues of $L_G$ by  
$0 \leq \lambda_1 \leq \ldots \leq \lambda_n \leq 2$ and we write $\Lambda$ to refer to the 
diagonal matrix of these eigenvalues in non-decreasing order. We also denote by 
$(u_1,\ldots,u_n)$  an orthonormal basis of eigenvectors of $L_G$ and with $U \in \R^{n\times n}$  
the matrix whose columns are the orthonormal eigenvectors of $L_G$ arranged in non-decreasing order 
of eigenvalues. Therefore the eigendecomposition of $L_G$ is $L_G=U \Lambda U^T$. For any 
$1\leq k \leq n$ we write $U_{[k]}\in \R^{n\times k}$ for the matrix whose columns are the first $k$ 
columns of $U$. Now, we introduce a central definition to this work, which is the notion of a spectral embedding. 

\defembedding*

The spectral embeddings of vertices in a graph provide rich geometric information which has been
shown to be useful in graph clustering \cite{lee2014multiway, CzumajPS15, chiplunkar2018testing, GluchKLMS21}.

In this paper, we are interested in the class of graphs that admit a $(k, \varphi, \e)$-clustering.
We would often need to refer to weighted graphs of the form $H = ([k], \binom{[k]}{2}, W, w)$
where $W$ is a weight function on the edges of $H$ and $w$ is a weight function on the vertices
of $H$. We denote the Laplacian of $H$ as $\cL_H = D_H - W$ where $D_H \in \R^{k \times k}$ is a diagonal matrix 
where for every $i \in [k]$, $D_H(i,i) = \sum_{j \in [k]} W(i,j)$.
Our algorithms often require
to estimate the inner product between spectral embeddings of vertices $x$ and $y$ denoted  $f_x$ and $f_y$. For pairs of vertices $x, y\in V$ we use the notation 
$$
\langle f_x, f_y\rangle:=({f_x})^T(f_y)
$$
to denote the dot product in the embedded domain. The following remark asserts that the inner products between $f_x$ and
$f_y$ are well-defined even though the choice for these vectors may not be basis free.

First, we need the following 
standard result on eigenvalues of $(k, \varphi, \e)$-clusterable graphs \cite{lee2014multiway, chiplunkar2018testing}.

\GLKMlemma*

Now, we are ready to state the remark from before.
\subspacewelldefine*

We denote the transition matrix of the \textit{random walk associated with} $G$ by $M=\frac{1}{2}\cdot \left(I+\frac{A_G}{d}\right)$. From any vertex $v$, this random walk takes every edge incident on $v$ with probability $\frac{1}{2d}$, and stays on $v$ with the remaining probability which is at least $\frac{1}{2}$. Note that %this random walk is same as a lazy random walk on $G$ and that 
$M=I-\frac{L_G}{2}$. Observe that for all $i$, $u_i$ is also an eigenvector of $M$, with eigenvalue $1-\frac{\lambda_i}{2}$.
We denote with $\Sigma$ the diagonal matrix of the eigenvalues of $M$ in descending order.  Therefore the eigendecomposition of $M$ is $M=U \Sigma U^T$. We  write $\Sigma_{[k]}\in \R^{k\times k}$ for the matrix whose columns are the first $k$ rows and columns of $\Sigma$. Furthermore, for any $t$, $M^t$ is a transition matrix of random walks of length $t$. For any vertex $x$, we denote the probability distribution of a $t$-step random walk started from $x$ by $m_x=M^t \mathds{1}_x$. For a multiset $I_S=\{x_1,\ldots,x_s\}$ of vertices from $V$, let matrix $M^t S \in \R^{n\times s}$ is a matrix whose column are probability distribution of $t$-step random walks started from vertices in $I_S$. Therefore, the $i$-th column of $M^t S$ is $m_{x_i}$.

We recall standard (partial) ordering on $n$-by-$n$ symmetric matrices.
\begin{mydef} \label{def:lowener:order}
	Given two symmetric matrices $A, B \in \R^{n \times n}$, we say that matrix $A$
	precedes $B$ in the Loewner (or the PSD) order if $B - A$ is a positive semidefinite
	matrix. This is denoted as $A \preceq B$ or as $B \succeq A$. 
\end{mydef}

In this section, we present an algorithm to estimate the Dasgupta cost of  $(k, \varphi, \e)$-clusterable graphs in sublinear time assuming oracle access to the underlying clustering, i.e., in the \linebreak $\rsm$ , which we formally define below: 
\begin{mydef}[\textbf{The $\rsm$}]
%\begin{restatable}{mydef}{defssmodel}(\textbf{The semi-supervised model})
\label{def:ss:model} 
	Let $G = (V,E)$ be %a $d$-regular  
	graph that admits a $(k,\varphi,\epsilon)$-clustering $C_1,\ldots, C_k$. In the $\rsm$, we assume that there exists an oracle that gives us a (multi)set $S$ sampled independently with probability proportional to their degrees, together with their cluster label, i.e., for every vertex $x\in S$, a label $i \in [k]$ such that $C_i \ni x$.  
\end{mydef}
The following definition is central to our algorithm and analysis:

\defcontract*

In fact, our algorithm for estimating the Dasgupta cost of $G$ will first construct a low cost tree for the contracted graph $H$ and use its cost as a proxy for the Dasgupta cost of $G$.  The algorithm is reproduced in Section~\ref{sec:sketch:tree} below (see Algorithm~\ref{alg:tree}), and satisfies the following guaratees:

\thmestdcost*

We start by designing a sublinear time estimator for the quadratic form of $L_G$, where $L_G$ is the normalized Laplacian of the graph $G$. Later, we will use this estimator to obtain an approximation of the contracted graph. We can without loss of generality, assume that the graph is $d$-regular, by adding self-loops to the graph to make all the degrees $d$. 
Therefore, we prove the correctness of procedure \textsc{ApproxContractedGraph} (Algorithm \ref{alg:quadratic}) for {\em regular} graphs. Throughout this overview, in Section \ref{sec:est:quad:forms} and Section \ref{sec:quadproof} we assume that graph $G$ is $d$-regular. 

Recall that  $M$  is the random walk matrix of $G$ and let $M = U \Sigma U^T$ be the eigendecomposition of $M$. Let $U_{[k]}\Sigma_{[k]} U_{[k]}^T $ be a rank-$k$ approximation to $M$ obtained by truncating the terms corresponding to small eigenvalues. Since, $G$ is $(k, \varphi, \e)$-clusterable, the space spanned by the bottom $k$ eigenvectors of the $M$ is uniquely defined (see Remark \ref{rm:dot-products-well-defined}).
Thus, for any $z \in \R^n$,  $z^T (U_{[k]} \Sigma_{[k]} U_{[k]}^T) z$ is well defined. % irrespective of the choice for $U_{[k]}$.
Lemma \ref{lem:lw-Mkmass} reproduced below shows that the quadratic form of $I - U_{[k]}\Sigma_{[k]} U_{[k]}^T$ approximates the quadratic form of $L_G$ multiplicatively.
\quadraticformlemma*

Let $G$ be a $d$-regular graph that admits a $(k,\varphi,\epsilon)$-clustering $\cC=C_1, \ldots , C_k$. Let $H = G/\cC$ be the contraction of $G$ with respect to the partition $\cC $ (Definition \ref{def:contracted}), let $\mathcal{L}_H$ be the Laplacian of graph $H$. We apply Lemma~\ref{lem:lw-Mkmass} to estimate the quadratic form of $\mathcal{L}_H$ on a vector $z\in \R^k$. To that effect, for $z \in \R^k$, we define $\zext \in \R^n$ as the natural extension of $z$ to $\R^n$: we let $\zext\in \R^n$ be the vector such that for every $x\in V$,  $\zext(x) = z_i$, where $C_i$ is the cluster that $x$ belongs to.  

\begin{mydef}[\textbf{Extension $\zext$ of a vector $z\in \R^k$}]\label{def:ext}
For a vector $z\in \R^k$ let $\zext\in \R^n$ be the vector such that for every $x\in V$,  $\zext(x) = z_i$, where $C_i$ is the cluster that $x$ belongs to.
\end{mydef}

Note that $z^T \mathcal{L}_H z = \zext^T \mathcal{L}_G \zext = d\cdot \zext^T L_G \zext.$ 
%(see Observation \ref{lem:contracted-quadratic}). 
Thus, to estimate $z^T \mathcal{L}_H z$ it suffices to design a good estimate for $\zext^T L_G \zext$, for which we use $\zext^T(I - U_{[k]}\Sigma_{[k]} U_{[k]}^T)\zext$, as per Lemma~\ref{lem:lw-Mkmass}:
$$
\zext^T(I - U_{[k]}\Sigma_{[k]} U_{[k]}^T)\zext=||\zext||_2^2- \zext^TU_{[k]}\Sigma_{[k]} U_{[k]}^T\zext= \sum_{i\in [k]} |C_i| z_i^2- \zext^TU_{[k]}\Sigma_{[k]} U_{[k]}^T\zext.
$$
Since the first term on the RHS can be easily approximated in the random sample model (Definition \ref{def:ss:model}), we concentrate on obtaining a good estimate for the second term. We have
\begin{equation}
	\zext^T U_{[k]} \Sigma_{[k]} U_{[k]}^T \zext = \sum_{i,j \in [k]} z_i z_j \sum_{\substack{x \in C_i\\y \in C_j}} \rdp{f_x, \Sigma_{[k]} f_y},
\end{equation}
and therefore in order to estimate $\zext^T  L_G \zext$, it suffices to use few random samples to estimate the sum above, as long as one is able to compute high accuracy estimates for  $\rdp{f_x, \Sigma_{[k]} f_y}, x, y\in V,$ with high probability. We refer to such a primitive as a {\em weighted dot product oracle}, since it computes a weighted dot product between the $k$-dimensional spectral embeddings $f_x$ 
 and $f_y$ for $x,y\in V$ (as per Algorithm~\ref{alg:LearnEmbedding}). Assuming such an estimator, which we denote by $\textsc{WeightedDotProductOracle}$,  our algorithm (Algorithm \textsc{ApproxContractedGraph}) obtains an approximation $\wt{L} = \wt{L}(\wt{H})$ to the Laplacian of the contracted graph and is reproduced below.  
 
\begin{algorithm}[H]
\caption*{\textbf{Algorithm 4} \textsc{ApproxContractedGraph}($G, \xi,\mathcal{D}$){\qquad \qquad \qquad  \qquad  time  $n^{1/2+O(\epsilon)}\cdot \poly(1/\xi)$}}
%\label{alg:quadratic}
\begin{algorithmic}[1]

    %\STATE $s \gets \frac{10^9\cdot \log n\cdot k^6\cdot n^{80\cdot \epsilon/\varphi^2}}{\eta^3\cdot \xi^2}$ \label{ln:s-num}
   \STATE $s \gets O^*\left(n^{O(\e/\varphi^2)} \cdot (1/\xi)^{O(1)}\right)$ \label{ln:s-num:main} \textit{\qquad \qquad \qquad \qquad \qquad \# See Theorem~\ref{thm:contracted-graph} for the exact value}
    
    \STATE $S\gets$ (multi)set of $s$ i.i.d random vertices together with their cluster label \label{ln:randset}
    \STATE $S_i \gets S\cap C_i$, $\widetilde{w}(i) \gets \frac{|S_i|}{s} \cdot n$, for all $i\in [k]$
    %\STATE $\xi' \gets \xi/8$. \label{ln:xivalsdp}
    \FOR{$i,j\in [k]$}
    \STATE Assign $K_{i,j} = \frac{\widetilde w(i)}{|S_i|} \cdot \frac{\widetilde w(j)}{|S_j|} \cdot \sum_{\substack{x \in S_i\\y \in S_j}} \langle f_x, \Sigma_{[k]} f_y\rangle_{\text{apx}}$ %\label{ln:matk}
    \ENDFOR
    \STATE Assign $\cL' = d \cdot (I - K)$.%\label{ln:matl}
    \STATE Use SDP to round $\cL'$ to a Laplacian $\widetilde \cL$ s.t $\frac{\varphi^2}{3} \cL' - \frac{\xi}{2}\cdot d n\cdot I_k \preceq \widetilde{\cL} \preceq 2 \cL' + \frac{\xi}{2} \cdot d n\cdot I_k.$ %\label{ln:defmat}
    \STATE  $\widetilde{H} \gets \left([k], {[k] \choose 2}, \widetilde{W}, \widetilde{w} \right)$ \textit{\# $\widetilde{H}$ is the weighted graph with Laplacian $\widetilde{\cL}$ and vertex weights $\widetilde{w}(i)$} 
\STATE \textit{\qquad\qquad \qquad  \qquad  \qquad\qquad \qquad\qquad \qquad\qquad \quad\#  Note that $\wt{W}_e=-\wt{\cL}_e$ for every $e\in {[k] \choose 2}$}
\STATE \return $\widetilde{H}$
    
\end{algorithmic}
\end{algorithm}

\begin{remark} \label{rm:other-estimator}
	A yet another natural choice for obtaining the estimator $\wt{L}$ uses the rank $k$ SVD of 
	$L_G$ which is $U_{[k]} \Lambda_{[k]} U_{[k]}^T$. However, this does not provide a multiplicative 
	approximation to the quadratic form of the Laplacian. To see this, consider the following instance. 
	Let $G$ consist of $k$ disjoint expanders each with inner conductance $\varphi$. In this case,
	$\lambda_1 = \lambda_2 = \cdots = \lambda_k = 0$, and the rank $k$ approximation above results in the estimator 
	being $0$.
\end{remark}

The following theorem asserts that the procedure $\textsc{ApproxContractedGraph}$ finds a reasonably good approximation to $\cL_H$.

\begin{restatable}{theorem}{thmcontractedgraph}\label{thm:contracted-graph}
Let $\frac{1}{n^3}<\xi<1$.  Let $H=G/\cC$  be the contraction of $G$ with respect to $\cC$ (Definition \ref{def:contracted}), and let $\mathcal{L}_H$ be the Laplacian of $H$. 

With probability at least $1 - n^{-100}$ over the initialization procedure (Algorithm \ref{alg:LearnEmbedding-tt}) the following property is satisfied: 
	$\textsc{ApproxContractedGraph}(G, \xi, \cD)$ (Algorithm \ref{alg:quadratic}), runs in time 
	$\runningtimeiv$, uses $O^*\left(n^{O(\e/\varphi^2)}\cdot \left(\frac{1}{\xi}\right)^{O(1)}\right)$ seed queries, 
	%$n^{1/2+O(\epsilon/\varphi^2)} \cdot \left(\frac{1}{\xi}\cdot k\cdot \log n\right)^{O(1)}\cdot (1/\varphi)^{O(1)}$ 
	and finds a graph $\wt{H}$ with Laplacian $\wt{\mathcal{L}}$ such that with probability
	at least $1 - n^{-100}$, we have
$$ \wt{\cL}\approx_{O(1/\varphi^2), \xi} \cL_H. $$ 
	%$$\frac{\varphi^2}{6} \cdot \cL_H -  \xi d n \cdot  I_k \preceq \wt{\cL} \preceq \frac{6}{\varphi^2}\cdot \cL_H +  \xi d n \cdot I_k.$$
\end{restatable}

Once Theorem~\ref{thm:contracted-graph} has been established, our algorithm for approximating Dasgupta cost of $G$ simply uses the algorithm of~\cite{charikar2017approximate} on the approximation $\wt{H}$ to the contracted graph $H$. Since the cluster sizes in $G$ are different and the Dasgupta cost of a graph is a function of the size of the lowest common ancestor of endpoints of edges, to relate the cost of $G$ and the contracted graph $H$, we recall the definition of weighted Dasgupta cost.

\defwtdcost*

Let $G=(V,E)$ be a $d$-regular graph that admits a $(k,\varphi,\epsilon)$-clustering $\cC = C_1, \ldots , C_k$, and  let $H = G/\cC$ be the contraction of $G$ with respect to the partition $\cC $. We denote
the optimal Dasgupta tree for $H$ as $T^*_H$. With this setup,
we will now show the main result (Theorem \ref{thm:estdcost})
which is finally presented in Section \ref{sec:sketch:tree}. 
We consider the graph $\wt{H} = \left([k], {[k] \choose 2}, \widetilde{W}, \widetilde{w}\right)$ returned by the $\textsc{ApproxContractedGraph}$ procedure. 
Our estimation primitive $\textsc{EstimatedCost}$ (Algorithm \ref{alg:tree}) uses a simple vertex-weighted extension of a result of \cite{charikar2017approximate}
to find a 
tree $\wt{T}$ on $\wt{H}$. Specifically, we need the following definitions.

\wsparsestcut*
\wrsc*

Next, we first state results which help bound the Dasgupta Cost incurred by the tree one gets by using \emph{vanilla} recursive sparsest cut algorithm on any graph. In Corollary~\ref{cor:vertex-extention}, we present algorithms which present bounds on the Dasgupta Cost one obtains for vertex weighted graphs. 

\charikarthm*

The following corollary from \cite{charikar2017approximate},  follows using the $O(\sqrt{\log |V|})$ approximation algorithm for the uniform sparsest cut.
\charikarcor*
\wcharikarcor*

Let $\widetilde{T}=\textsc{WRSC}(\widetilde{H})$. We denote the cost of $\wt{T}$ as $\wCT_{\wt{H}}(\wt{T})$, we present an estimator ($\estg$) 
for the Dasgupta cost of an optimal tree in $G$. In particular, we show $\estg \leq O\left( \frac{\sqrt{\log k}}{\varphi^{O(1)}} \right) \cdot \CT_G(T^*_G)$, where, 
$$\estg \eqdef \est \text{.}$$

The proof proceeds in two steps: in the first step, we prove Lemma \ref{lem:est:tstar}
which upper bounds $\estg$ in terms of $\wCT_H(T^*_H)$, 
where $T^*_{H}$ is the optimum Dasgupta tree for $H$. 
Next, we use Lemma \ref{lem:dcostg:tstar} to relate $\wCT_H(T^*_H)$ with $\CT_G(T^*_G)$. We finally restate the $\textsc{EstimatedCost}$ procedure.
\begin{algorithm}[H]
\caption{$\textsc{EstimatedCost}(G)$ {\qquad  \qquad\qquad\qquad\qquad \qquad \qquad \qquad  time  $\approx n^{1/2+O(\epsilon)}$}}
\label{alg:tree}
\begin{algorithmic}[1]
\STATE $\xi \gets O\left(\frac{\varphi^2}{d_{\max}\cdot k^3\cdot \sqrt{\log k}}\right)$ \label{setxi}
\STATE $\mathcal{D} \gets \textsc{InitializeWeightedDotProductOracle($G,\xi$)}$ \textit{\qquad\qquad\ \qquad \quad  \# See Algorithm \ref{alg:LearnEmbedding-tt}}
\STATE $\widetilde{H} \gets\textsc{ApproxContractedGraph}(G, \xi,\mathcal{D})$ \textit{\quad \ \#The Laplacian $\widetilde{\mathcal{L}}$ of $\widetilde{H}$ satisfies Equation \eqref{eq:l-lh-spec}}
\STATE $\widetilde{T} \gets \textsc{WeightedRecursiveSparsestCut}(\widetilde H)$ \textit{\qquad  \#Weighted version of Algorithm of~\cite{charikar2017approximate}}
\STATE $\estg \gets \est$ \label{ln:est-set}
\STATE \return \estg
\end{algorithmic}
\end{algorithm}

The details are presented in Section~\ref{sec:sketch:tree}. The rest of this section is organized as follows. We first present the analysis of \textsc{ApproxContractedGraph} in Section~\ref{sec:est:quad:forms}. In turn, this analysis requires guarantees on how well the matrix $\cL'$ obtained in Line 7 of Algorithm~\ref{alg:quadratic} approximates $\cL_H$. This guarantee is presented in Section~\ref{sec:quadproof}. 
 Finally, the proof of Theorem~\ref{thm:estdcost} is given in Section~\ref{sec:sketch:tree}. In Section ~\ref{sec:hoeffding:proof} we prove a few intermediate Lemmas used in the proof.

\subsection{Correctness of \textsc{ApproxContractedGraph} (Proof of Theorem \ref{thm:contracted-graph})}  \label{sec:est:quad:forms}
Throughout this section, we will assume that $G=(V,E)$ is a $d$-regular graph. The main result of this section is the correctness of \textsc{ApproxContractedGraph} in the random sample (Theorem \ref{thm:contracted-graph}).

We first collect the ingredients we need before proceeding further with the proof. As a first step, since \textsc{ApproxContractedGraph} relies on our \textsc{WeightedDotProductOracle}, we will need correctness guarantees for the latter in our analysis. These are presented in Theorem~\ref{thm:wdp} below, and proved in Appendix \ref{sec:wdp:oracle}.

\begin{restatable}{theorem}{thmweighteddot}
\label{thm:wdp}
Let $M$ denote the random walk matrix of $G$, and  let $M=U\Sigma U^T $ denote the eigendecomposition of  $M$. 
With probability at least $1 - n^{-100}$ over the initialization procedure (Algorithm \ref{alg:LearnEmbedding}) the following holds:

With probability at least $1-3\cdot n^{-100\cdot k}$ for all $x,y\in V$ we have
\[|\adp{f_{x},\Sigma_{[k]} f_y} -\rdp{f_{x},\Sigma_{[k]} f_y}  | \leq \frac{\xi}{\mycolor{nk^2}
} \text{,}\]
where $\adp{f_{x},\Sigma_{[k]} f_y} = \textsc{WeightedDotProductOracle}(G,x,y,  \xi, \mathcal{D})$. 

Moreover, the running time of the procedures
$\textsc{InitializeWeightedDotProductOracle}$ (Algorithm \ref{alg:LearnEmbedding}) and $\textsc{WeightedDotProductOracle}$ 
(Algorithm \ref{alg:weighted-dot}) is $\runningtimeiv$.
\end{restatable}

The other ingredient in our proof comprises of the following two lemmas we state below. 

\begin{lemmma} \label{lem:kk'-close}
    Let $\frac{1}{n^4}<\xi<1$. Let $H=G/\cC$  be the contraction of $G$ with respect to $\cC$ (Definition \ref{def:contracted}), and let $\mathcal{L}_H$ be the Laplacian of $H$. Let $K' \in \R^{k \times k}$ denote the matrix where $K'_{i,j} = \mathds{1}_{C_i} U_{[k]}^T \Sigma_{[k]} U_{[k]} \mathds{1}_{C_j}$. Then with 
	probability at least $1 - n^{-50\cdot k}$ we have
 $$K' - \xi n/k \cdot I_k \preceq K \preceq K' + \xi n/k \cdot I_k.$$
\end{lemmma}
The proof of Lemma~\ref{lem:kk'-close} is deferred to Section~\ref{sec:quadproof}. 

\begin{lemmma} \label{lem:k'-to-lh}
    Let $\frac{1}{n^4}<\xi<1$. Let $H=G/\cC$  be the contraction of $G$ with respect to $\cC$ (Definition \ref{def:contracted}), and let $\mathcal{L}_H$ be the Laplacian of $H$. Let $K' \in \R^{k \times k}$ denote the matrix where $K'_{i,j} = \mathds{1}_{C_i} U_{[k]}^T \Sigma_{[k]} U_{[k]} \mathds{1}_{C_j}$. Then %the following holds with probability at least $1 - n^{-50 k}$: 
    $$\cL_H/2 \preceq d(I - K') \preceq 3/\varphi^2 \cdot \cL_H.$$
\end{lemmma}
\begin{proof} Follows immediately from Lemma~\ref{lem:lw-Mkmass}.
\end{proof}

We now prove Theorem \ref{thm:contracted-graph}.

\begin{proof} (Of Theorem \ref{thm:contracted-graph}) Let $\mathcal{L}'$ denote the matrix obtained in Line 7 of Algorithm~\ref{alg:quadratic}. 
We use the following SDP to round $\mathcal{L}'$ to a Laplacian $\widetilde{\mathcal{L}}$ in Line 8 of Algorithm~\ref{alg:quadratic}:
\begin{condenum}[label=(C\arabic*)., ref=(C\arabic*)]
	\item Minimize $0$ (Feasibility program, no objective)
	\item $\wt{\cL}$ symmetric and $\wt{\cL} \succeq 0$
    \item For every $i \in [k]$, $\sum_j \widetilde{\cL}_{i,j} = 0$, and $\widetilde{\cL}_{i,i} \geq 0$.
	\item For every $i \neq j \in [k]$, $\widetilde{\cL}_{i,j} \leq 0$
	\item $	\frac{\varphi^2}{3} \cL' - \frac{\varphi^2}{3} \cdot \frac{\xi dn}{k} \cdot I_k \preceq \wt{\cL} \preceq 2 \cL' + \frac{2 \xi dn}{k} \cdot I_k $. 
\end{condenum}	   

First, we will show that this program is feasible by showing that $\mathcal{L}_H$ is a feasible solution. Since $\mathcal{L}_H$ is a graph Laplacian, it satisfies constraints (C1), (C2), (C3) and (C4). Now, consider constraint (C5). 

Using Lemma~\ref{lem:kk'-close} and Lemma~\ref{lem:k'-to-lh}, we see that, except with probability at most $n^{-100}$, the following both hold for the matrix $K$ defined in Line 5 of Algorithm~\ref{alg:quadratic}.

\begin{itemize}
    \item $K' - \xi n/k \cdot I_k \preceq K \preceq K' + \xi n/k \cdot I_k $, and
    \item $\cL_H/2 \preceq d(I - K') \preceq 3/\varphi^2 \cdot \cL_H$.
\end{itemize}

where $K'$ is a $k$-by-$k$ matrix defined as $K'_{i,j} = \mathds{1}_{C_i} U_{[k]}^T \Sigma_{[k]} U_{[k]} \mathds{1}_{C_j}$.

Let us assume that both of these conditions hold. Conditioned on this, the following is seen to hold

\begin{equation} \label{eq:sandwich-lh}
\frac{\varphi^2}{3} \cdot d(I - K) - \frac{\varphi^2}{3} \frac{\xi dn}{k} \cdot I_k \preceq \cL_H \preceq 2 \cdot d(I-K) + 2\frac{\xi dn}{k} \cdot I_k. 
\end{equation}

This is precisely constraint (C5), which means that $\cL_H$ indeed satisfies the constraints of the SDP and thus $\cL_H$ is a feasible solution. Further, the set of feasible solutions contains an open ball: in particular all Laplacian matrices $\cL'$ with $\|\cL' - \cL\|_F^2 \leq n^{-10}$ lie inside this ball. Therefore, in additional time $k^{O(1)} \cdot \log n$, the 
	Ellipsoid Algorithm returns some feasible solution $\wt{\cL}$. $\wt{\cL}$ satisfies all these 
	constraints as well. Writing $\cL' = d(I-K)$, this means

	\begin{equation} 
	\label{eq:k-l-tilde}
		\frac{\varphi^2}{3} \cL' - \frac{\varphi^2}{3} \cdot \frac{\xi dn}{k} \cdot I_k \preceq \wt{\cL} \preceq 2 \cL' + \frac{2 \xi dn}{k} \cdot I_k  \text{.} 
	\end{equation}
Therfore, By \eqref{eq:sandwich-lh}, we have
\[
		\frac{\varphi^2}{6}\cdot \cL_H -\varphi^2\cdot  \frac{\xi d n}{k} \cdot I_k  \preceq \wt{\cL} \preceq \frac{6}{\varphi^2}\cdot \cL_H + \frac{6}{\varphi^2} \cdot \frac{\xi  d n}{k} \cdot I_k. 
\]

\paragraph{Running time:} The overall running time is dominated by the call to $\textsc{QuadraticOracle}$ which finds the matrix
	$K$ and thus it is at most 
	$\runningtimeiv$. 

    Finally, we bound the number of seed queries issued. The algorithm estimates the cluster sizes $\widetilde{w}(i)$ within a multiplicative $(1 \pm \delta)$ factor
	with $\delta = \frac{\xi}{512\cdot k^2\cdot n^{40\cdot \epsilon/\varphi^2}}$. By simple Chernoff bounds, this can be done using  

	$$s = \frac{400\log n \cdot k^2}{\delta^2}  \leq \frac{10^9 \cdot n^{80 \e/\varphi^2} \cdot k^{6}  \cdot \log n}{\xi^2} $$ seeds (see Lemma \ref{lem:semi-supervise-model}). 
    
\end{proof}

\subsection{Proof of an intermediate Lemma used in Theorem~\ref{thm:contracted-graph} (Lemma~\ref{lem:kk'-close})} \label{sec:quadproof}

To finish the proof of Theorem~\ref{thm:contracted-graph}, we now prove Lemma~\ref{lem:kk'-close}.
Throughout this section, we assume that $G=(V,E)$ is a $d$-regular graph.

First, we will need Lemma \ref{lem:var}, which proves that for any pair of large enough subsets $A, B \subseteq V$
we can estimate the means of the weighted inner product between  the spectral embedding of vertices in $A$ and $B$
 i.e., $\frac{1}{|A|\cdot |B|} \sum_{\substack{x \in A\\y \in B}} \inner{f_x}{\Sigma_{[k]} f_y}$, by taking enough random samples from $S_A \subseteq A$ and $S_B \subseteq B$ and estimating the 
(weighted) inner product empirically. We prove Lemma \ref{lem:var} in Appendix \ref{sec:hoeffding:proof}).

\begin{restatable}{lemmma}{lemhoeffding} \label{lem:var}
	 Let $A,B\subseteq V$. Let $S_A \subseteq A$ and $S_B \subseteq B$ denote (multi)sets of vertices sampled independently and uniformly at random from $A$ and $B$ respectively, where $|S_A|, |S_B| \geq  \frac{1600\cdot k^3\cdot n^{40 \epsilon/\varphi^2}\cdot \log n}{\xi^2}$. Let $M$ denote the lazy random walk matrix of $G$, and $M=U\Sigma U^T$ be the eigendecomposition of $M$.  Then, with probability at least $1 - n^{-100\cdot k}$ we have 

	\begin{equation} \label{eq:apx-dp}
		\left| \mathds{1}_A^T \cdot (U_{[k]}\Sigma_{[k]}U_{[k]}^T) \ 
		\mathds{1}_B \ 
		- \frac{|A|\cdot |B|}{|S_A|\cdot |S_B|}\cdot \mathds{1}_{S_A}^T (U_{[k]}\Sigma_{[k]}U_{[k]}^T) \mathds{1}_{S_B} \right| \leq \xi\cdot n
	\end{equation}
\end{restatable}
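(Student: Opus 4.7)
The plan is to reduce the statement to a vector-valued concentration inequality on sampled projections into the span of $U_{[k]}$. To set up, define
$$v_A := U_{[k]}^T \mathds{1}_A = \sum_{x\in A} f_x, \qquad \hat v_A := \frac{|A|}{|S_A|} U_{[k]}^T \mathds{1}_{S_A} = \frac{|A|}{|S_A|}\sum_{x\in S_A} f_x,$$
and analogously $v_B$, $\hat v_B$. Then $\hat v_A$ is an unbiased estimator of $v_A$ (likewise for $B$), and the quantity in \eqref{eq:apx-dp} equals $|\langle \hat v_A, \Sigma_{[k]} \hat v_B\rangle - \langle v_A, \Sigma_{[k]} v_B\rangle|$. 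Adding and subtracting $\langle v_A, \Sigma_{[k]} \hat v_B\rangle$ and applying Cauchy--Schwarz decomposes this into three terms bounded by products of $\|v_A\|_2$, $\|v_B\|_2$, $\|\hat v_A - v_A\|_2$, and $\|\hat v_B - v_B\|_2$, using $\|\Sigma_{[k]}\|_{\mathrm{op}} \leq 1$. The baseline norms satisfy $\|v_A\|_2 \leq \sqrt{|A|} \leq \sqrt{n}$ since $U_{[k]}$ has orthonormal columns, and likewise for $v_B$.

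\textbf{Main step.} I aim to show that with probability at least $1 - n^{-200 k}$,
$$\|\hat v_A - v_A\|_2 \leq \frac{\xi \sqrt n}{3},$$
and the symmetric bound for $B$; combined with the decomposition above, this gives the claimed additive $\xi n$ error. Since $\hat v_A - v_A$ is a sum of $s := |S_A|$ i.i.d.\ mean-zero vectors in $\mathbb{R}^k$ of the form $\tfrac{|A|}{s}\bigl(f_{X_i} - v_A/|A|\bigr)$, I apply scalar Hoeffding along each direction of a $\tfrac12$-net of the unit sphere $\mathbb{S}^{k-1}$ (size at most $5^k$) and union-bound. For a fixed unit $u$, the summands $\langle u, \tfrac{|A|}{s}(f_{X_i} - v_A/|A|)\rangle$ are bounded by $O\bigl(\tfrac{|A|}{s}\max_{x\in A}\|f_x\|_2\bigr)$, and Hoeffding yields deviation $\xi\sqrt n /3$ with failure probability $\exp\bigl(-\Omega(\xi^2 n s /(|A|^2 \max_x \|f_x\|_2^2))\bigr)$. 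Inserting the spectral embedding mass bound $\|f_x\|_2^2 \leq k \cdot n^{O(\epsilon/\varphi^2)} / |C(x)|$ established in \cite{GluchKLMS21} (and using $|C(x)| \geq \eta n / k$, hence $\|f_x\|_2^2 \leq k^2 n^{O(\epsilon/\varphi^2)}/(\eta n)$) together with $|A| \leq n$, the required failure probability $n^{-200 k}$ after the $5^k$ net union-bound is achieved precisely for $s \geq \tfrac{1600\, k^3 n^{40\epsilon/\varphi^2} \log n}{\xi^2 \eta}$, matching the hypothesis of the lemma. A final union bound over the two events for $S_A$ and $S_B$ gives success probability $1 - n^{-100 k}$.

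\textbf{Principal obstacle.} The uniform pointwise bound on $\|f_x\|_2$ furnished by \cite{GluchKLMS21} does not hold for every vertex, only for all but a very small fraction (the ``good'' vertices, whose spectral mass is concentrated on their own cluster). I therefore plan to split $S_A$ into contributions from good and bad vertices: on the good set Hoeffding applies cleanly with the mass bound above, while for the bad set one argues that with probability at least $1 - n^{-200 k}$ no sample in $S_A$ or $S_B$ lands on a bad vertex, using that the bad set has size at most $n / \mathrm{poly}(n^{1/\varphi^2})$ and $|S_A| = \mathrm{poly}(k, \log n) \cdot n^{O(\epsilon/\varphi^2)} \ll n$. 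This bad-event control is the technically delicate ingredient but follows directly from the embedding concentration machinery of \cite{GluchKLMS21}.
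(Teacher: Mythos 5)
Your overall strategy is the same as the paper's: the paper writes the two bilinear forms as inner products of $a=\mathds{1}_A^T U_{[k]}\Sigma_{[k]}^{1/2}$, $b=\Sigma_{[k]}^{1/2}U_{[k]}^T\mathds{1}_B$ and their sampled counterparts, splits the error via the triangle inequality and Cauchy--Schwarz into $\|a\|_2\|b-b'\|_2+\|a-a'\|_2\|b'\|_2$, and controls the sampling error by Hoeffding using $\|a\|_2,\|b\|_2\le\sqrt n$ and $\|\Sigma_{[k]}\|_2\le 1$. The only cosmetic difference is that the paper applies scalar Hoeffding to each of the $k$ coordinates of $a-a'$ and union-bounds over $k$ coordinates (then takes the Euclidean norm), whereas you union-bound over a $\tfrac12$-net of $\mathbb{S}^{k-1}$ of size $5^k$; both routes give the stated sample complexity up to constants.

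The genuine gap is in your ``principal obstacle'' paragraph, which is based on a misreading of what is available. The pointwise bound you need holds for \emph{every} vertex: Lemma~\ref{lem:l-inf-bnd} bounds the $\ell_\infty$ norm of any normalized eigenvector of $L_G$ with eigenvalue at most $2\epsilon$, and by Lemma~\ref{lem:bnd-lambda} all of $u_1,\dots,u_k$ qualify, so for all $x\in V$,
\begin{equation*}
\|f_x\|_2^2=\sum_{i=1}^{k}u_i(x)^2\ \le\ \frac{160\,k\,n^{40\epsilon/\varphi^2}}{\min_{j\in[k]}|C_j|}\ \le\ \frac{160\,k^2\,n^{40\epsilon/\varphi^2}}{\eta\, n}.
\end{equation*}
This is exactly the bound the paper plugs into Hoeffding; there is no good/bad dichotomy to handle (the good-vertex statements in \cite{GluchKLMS21} concern closeness of $f_x$ to cluster centers, which is irrelevant here). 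Moreover, the workaround you sketch would not be sound as written: conditioning on the event that $S_A$ and $S_B$ avoid ``bad'' vertices changes the sampling distribution, so $\hat v_A$ is no longer an unbiased estimator of $v_A$, and the bad vertices of $A$ still contribute to $v_A$ itself; you would have to bound this bias separately, which your outline does not do, and the asserted size bound on the bad set is not something you can simply cite. Replacing that paragraph by an appeal to Lemma~\ref{lem:l-inf-bnd} (as the paper does) closes the argument.
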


Next, to prove the correctness of quadratic oracle we need the following lemma that bounds the $\ell_\infty$-norm on any unit vector in the eigenspace spanned by the bottom $k$ eigenvectors of $L_G$, i.e. $U_{[k]}$ proved by \cite{GluchKLMS21}.

\begin{restatable}[\cite{GluchKLMS21}]{lemmma}{lemmalinf}\label{lem:l-inf-bnd}
Let $\varphi\in(0,1)$ and $\epsilon \leq \frac{\varphi^2}{100}$, and let $G=(V,E)$ be a $d$-regular graph that admits $(k,\varphi,\e)$-clustering $C_1, \ldots, C_k$.  Let $u$ be a normalized eigenvector of $L$ with $||u||_2=1$ and with eigenvalue at most $2\epsilon$.     Then we have 
\[||u||_\infty \leq  n^{20\cdot\epsilon /\varphi^2}\cdot \sqrt{\frac{160}{\min_{i\in [k]}|C_i|}} \text{.} \]
\end{restatable}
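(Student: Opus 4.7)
The plan is to reduce the claim to a bound on the $\ell_2$ norm of the spectral embedding $f_x = U_{[k]}^T \mathds{1}_x$, and then to control $\|f_x\|_2$ through a short random walk. First, by Lemma~\ref{lem:bnd-lambda} combined with $\epsilon \leq \varphi^2/100$, one has $\lambda_{k+1} \geq \varphi^2/2 > 2\epsilon$, so any eigenvector $u$ of $L_G$ with eigenvalue at most $2\epsilon$ must lie in $\mathrm{span}(U_{[k]})$. Writing $u = U_{[k]} a$ for a unit vector $a\in\mathbb{R}^k$, the Cauchy--Schwarz inequality gives $|u(x)| = |f_x^T a| \leq \|f_x\|_2$ for every $x \in V$, so it suffices to bound $\max_{x\in V} \|f_x\|_2$.

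Next, I would invert the random walk operator on the bottom-$k$ eigenspace. Using $M = U\Sigma U^T$ one has $U_{[k]}^T M^t \mathds{1}_x = \Sigma_{[k]}^t f_x$, hence $f_x = \Sigma_{[k]}^{-t}\, U_{[k]}^T (M^t\mathds{1}_x)$. Since the eigenvalues of $\Sigma_{[k]}$ satisfy $\sigma_j = 1-\lambda_j/2 \geq 1-\epsilon$ for $j\leq k$, the operator norm of $\Sigma_{[k]}^{-t}$ is at most $(1-\epsilon)^{-t}$. Choosing $t = \lceil 10 \log n / \varphi^2 \rceil$ yields the scaling factor $(1-\epsilon)^{-t} \leq e^{2\epsilon t} \leq n^{20\epsilon/\varphi^2}$, so that
$$\|f_x\|_2 \;\leq\; (1-\epsilon)^{-t}\,\|M^t\mathds{1}_x\|_2 \;\leq\; n^{20\epsilon/\varphi^2}\cdot \|M^t\mathds{1}_x\|_2.$$

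The technically most delicate step is proving $\|M^t\mathds{1}_x\|_2^2 \leq 160/|C_i|$ for every $x\in C_i$. Here I would compare the lazy walk on $G$ to the lazy walk on the induced subgraph $G[C_i]$, augmented with self-loops to make it $d$-regular. The inner-conductance assumption $\phi_{\mathrm{in}}^G(C_i)\geq\varphi$ together with Cheeger's inequality gives the induced walk a spectral gap of $\Omega(\varphi^2)$, so after $t = \Theta(\log n/\varphi^2)$ steps its distribution lies within $o(1/\sqrt{n})$ in $\ell_2$ of the uniform measure $\mu_i$ on $C_i$ (which has $\|\mu_i\|_2^2 = 1/|C_i|$). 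Simultaneously, the small outer conductance $\phi_{\mathrm{out}}^G(C_i) \leq \epsilon$ controls how much mass of the true walk leaks out of $C_i$ during the $t$ steps, via a coupling argument between the two walks. Combining these two error terms through the triangle inequality in $\ell_2$ yields the desired $\|M^t\mathds{1}_x\|_2^2 \leq 160/|C_i|$, and chaining with the two earlier estimates gives $\|u\|_\infty \leq \max_x \|f_x\|_2 \leq n^{20\epsilon/\varphi^2}\sqrt{160/\min_i|C_i|}$.

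The main obstacle is this third step: the walk on $G$ does not agree exactly with the walk on $G[C_i]$, and one must carefully track how the mixing rate inside $C_i$ (governed by $\varphi$) interacts with the probability leakage across its boundary (governed by $\epsilon$). The hypothesis $\epsilon \leq \varphi^2/100$ is what makes the total leakage over the time horizon $t = \Theta(\log n / \varphi^2)$ negligible compared to the distance-to-stationarity on $C_i$, so that the absolute constant $160$ can be attained.
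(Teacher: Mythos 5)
Your steps (1)--(2) are fine and follow the standard route: since $2\epsilon<\varphi^2/2\leq\lambda_{k+1}$ (Lemma~\ref{lem:bnd-lambda}), $u$ lies in the span of $U_{[k]}$, and writing $f_x=\Sigma_{[k]}^{-t}U_{[k]}^TM^t\mathds{1}_x$ with $t=\Theta(\log n/\varphi^2)$ converts the problem into bounding $\|M^t\mathds{1}_x\|_2$ at the price of the factor $(1-\epsilon)^{-t}\leq n^{20\epsilon/\varphi^2}$. (Note the paper does not prove this lemma; it is imported from \cite{GluchKLMS21}, where the argument indeed starts this way.)

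The genuine gap is step (3), the claim $\|M^t\mathds{1}_x\|_2^2\leq 160/\min_i|C_i|$ via a coupling with the lazy walk on $G[C_i]$ and ``negligible leakage.'' First, outer conductance only controls leakage \emph{on average} over $C_i$: a single vertex $x$ may have up to $(1-\varphi)d$ of its $d$ edges leaving $C_i$ (the inner-conductance condition applied to $S=\{x\}$ only guarantees $\varphi d$ internal edges), so the coupled walks can separate already at the first step with probability close to $1/2$; the leaked mass is not $O(t\epsilon)$-small for worst-case $x$. Second, even the averaged leak rate $\epsilon/2$ per step accumulates over the horizon $t=\Theta(\log n/\varphi^2)$ to $\Theta(\epsilon\log n/\varphi^2)$, which is $\omega(1)$, not negligible; the natural way to pay for this is an extra $n^{O(\epsilon/\varphi^2)}$ factor, but you have already spent the entire $n^{20\epsilon/\varphi^2}$ budget on $\Sigma_{[k]}^{-t}$, so your intermediate inequality has no slack left to absorb it. Third, the mass that does leave $C_i$ lands in \emph{other} clusters, and its $\ell_2$ contribution must be controlled by how it spreads there (this is precisely why the bound involves $\min_i|C_i|$ rather than $|C_{i(x)}|$); the triangle inequality against the uniform measure on $C_i$ alone cannot see this, since the ``difference'' term contains all the escaped mass. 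Be aware also that the walk-norm bound actually available in this line of work, Lemma~\ref{lem:Mt-bnd}, is weaker than what you assume --- it carries factors $k/\sqrt{\eta}$ and $n^{20\epsilon/\varphi^2}$ --- and in \cite{GluchKLMS21} it is \emph{derived from} the $\ell_\infty$ bound you are trying to prove, so postulating a stronger, slack-free version of it as an intermediate step is both unjustified and dangerously close to circular. To repair the argument you would need a direct estimate on the return probability $\mathds{1}_x^TM^{2t}\mathds{1}_x$ that tracks the escaped mass through the other expander clusters (e.g.\ by decomposing over the last exit time from each cluster), not the clean coupling you describe.
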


Finally, we need the following lemma which shows that Algorithm~\ref{alg:quadratic} receives enough samples from 
every cluster in the $\rsm$. Lemma \ref{lem:semi-supervise-model} proves this 
fact. The proof relies on a simple Chernoff bound application and we defer it to 
Appendix \ref{sec:hoeffding:proof}.

\begin{restatable}{lemmma}{lemsemisupevise} \label{lem:semi-supervise-model}
Let $S\subseteq V$ denote a set of random vertices returned by the $\rsm$ (Definition \ref{def:ss:model}) in a regular graph. For every $i\in [k]$ let $S_i=S\cap C_i$. If $|S| \geq \frac{400\cdot \log n\cdot k^2}{\delta^2}$, then with probability at least $1-n^{-100\cdot k}$ for every $i\in[k]$ we have $|S_i|\in (1\pm \delta)\cdot |S|\cdot \frac{|C_i|}{n}$.
\end{restatable}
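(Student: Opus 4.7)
\textbf{Proof proposal for Lemma~\ref{lem:semi-supervise-model}.}
The statement is a standard multiplicative Chernoff + union bound exercise, so the plan is to write it out cleanly. Fix $i\in [k]$. In the $\rsm$ the set $S$ is a (multi)set of $|S|$ i.i.d.\ uniform vertices of $V$, so $|S_i|=\sum_{v\in S} \mathbf{1}[v\in C_i]$ is a sum of $|S|$ independent $\mathrm{Bernoulli}(|C_i|/n)$ random variables. Let $\mu_i \eqdef \mathbb{E}[|S_i|] = |S|\cdot |C_i|/n$. By the multiplicative Chernoff bound,
\[
\Pr\!\Big[\,|\,|S_i|-\mu_i\,| > \delta\cdot \mu_i\,\Big]
\;\leq\; 2\exp\!\left(-\frac{\delta^2 \mu_i}{3}\right).
\]

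Next I would lower bound $\mu_i$. Since $\sum_{j\in [k]} |C_j| = n$, there exists some $j^\star$ with $|C_{j^\star}|\geq n/k$, and by definition of $\eta$, every $|C_i|\geq \eta\cdot |C_{j^\star}| \geq \eta\cdot n/k$. Hence
\[
\mu_i \;\geq\; \frac{\eta\cdot |S|}{k}
\;\geq\; \frac{\eta}{k}\cdot \frac{400\cdot \log n\cdot k^2}{\eta\cdot \delta^2}
\;=\; \frac{400\cdot k\cdot \log n}{\delta^2},
\]
using the hypothesis on $|S|$. Substituting into the Chernoff bound yields
\[
\Pr\!\Big[\,|\,|S_i|-\mu_i\,| > \delta\mu_i\,\Big]
\;\leq\; 2\exp\!\left(-\frac{400\cdot k\cdot \log n}{3}\right)
\;\leq\; n^{-101k}
\]
for any reasonable base of $\log$ (one can absorb the constant $1/3$ into the slack between $400/3$ and $101$; if $\log$ means $\log_2$ the same bound goes through comfortably).

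Finally, I would union bound over all $k$ clusters: the probability that there exists some $i\in [k]$ with $|S_i|\notin (1\pm\delta)\cdot |S|\cdot |C_i|/n$ is at most $k\cdot n^{-101k}\leq n^{-100k}$, establishing the lemma. There is no real obstacle here; the only thing to be careful about is the lower bound $|C_i|/n \geq \eta/k$, which requires the (immediate) observation that the largest cluster has size at least $n/k$, so that the minimum cluster has size at least $\eta\cdot n/k$.
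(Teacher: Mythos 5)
Your proposal is correct and follows essentially the same route as the paper: a Chernoff bound on $|S_i|$ using the lower bound $|C_i|\geq \eta\cdot\max_j|C_j|\geq \eta n/k$ on the expectation, followed by a union bound over the $k$ clusters. The only (cosmetic) difference is that you aggregate by sample round, so the indicators $\mathbf{1}[v_r\in C_i]$ are genuinely independent Bernoulli$(|C_i|/n)$ variables, whereas the paper works with per-vertex-per-round indicators $Y_x^r$ and invokes negative association to justify the same Chernoff bound.
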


We are now ready to prove Lemma~\ref{lem:kk'-close}
\begin{proof} (Of Lemma~\ref{lem:kk'-close})
We will first bound the entrywise difference between the matrices $K$ and $K'$. In particular, we first show that for a fixed entry indexed by $i,j \in [k]$, we have $|K_{i,j} - K'_{i,j}| \leq \frac{\xi n}{2k^2}$. We will prove this via a repeated use of triangle inequality. For the reader's simplicity, we recall

\begin{itemize}
    \item $K_{i,j} = \frac{\widetilde w(i)}{|S_i|} \cdot \frac{\widetilde w(j)}{|S_j|} \cdot \sum_{\substack{x \in S_i\\y \in S_j}} \langle f_x, \Sigma_{[k]} f_y\rangle_{\text{apx}}$.
    
    \item $K'_{i,j} = \mathds{1}_{C_i} U_{[k]}^T \Sigma_{[k]} U_{[k]} \mathds{1}_{C_j}$

\end{itemize}

Now, we setup the stage to use triangle inequality as mentioned earlier. To this end, we introduce the following auxiliary quantities

\begin{itemize}
    \item $\alpha_{i,j} = \frac{\widetilde w(i)}{|S_i|} \cdot \frac{|\widetilde w(j)}{|S_j|} \cdot\langle \sum_{x \in S_i} f_x, \Sigma_{[k]} \sum_{y \in S_j} f_y \rangle$.
    \item $\beta_{i,j} = \frac{|C_i|}{|S_i|} \cdot \frac{|C_j|}{|S_j|} \cdot\langle \sum_{x \in S_i} f_x, \Sigma_{[k]} \sum_{y \in S_j} f_y \rangle$
\end{itemize}

We know

\begin{equation} \label{eq:setup-triangle-ineq}
    |K_{i,j} - K'_{i,j}| \leq |K_{i,j} - \alpha_{i,j}| + |\alpha_{i,j} - \beta_{i,j}| + |\beta_{i,j} - K'_{i,j}|
\end{equation}

We bound all the terms above. Towards the first term note that

\begin{align*}
    |K_{i,j} - \alpha_{i,j}| &\leq \frac{\widetilde w(i)}{|S_i|} \cdot \frac{\widetilde w(j)}{|S_j|} \cdot \Biggl|\sum_{\substack{x \in S_i \\ y \in S_j}} \Big(\langle f_x, \Sigma_{[k]} f_y \rangle_{apx} - \langle f_x, \Sigma_{k]} f_y \rangle \Big)\Biggr| \\
    &\leq \frac{\widetilde w(i)}{|S_i|} \cdot \frac{\widetilde w(j)}{|S_j|} \cdot \frac{\xi}{nk^2} &&\text{By Theorem~\ref{thm:wdp}} \\
    &\leq \frac{\xi n}{16 k^4} &&\text{By choice of $s$.}
\end{align*}

The second line above holds with probability at least $1 - 3n^{-100k}$ for any fixed pair of vertices. By a union bound over all pairs of sampled vertices, it holds with probability at least $1 - n^{-50k}$.

Next, we bound the second term in Equation~\eqref{eq:setup-triangle-ineq}. 

\begin{align*}
    |\alpha_{i,j} - \beta_{i,j}| &\leq |\langle \mathds{1}_{S_i}, \mathds{1}_{S_j} \rangle| \Biggl| \Big(\frac{\widetilde w(i) \cdot \widetilde w(j)}{|S_i| |S_j|} - \frac{|C_i| |C_j|}{|S_i| |S_j|} \Big) \Biggr| \\
    &\leq \left||C_i|\cdot |C_j| - (1-\delta)^2\cdot |C_i|\cdot |C_j|\right| \cdot \max_{x\in V}||f_x||_2^2 &&\text{As $||\Sigma_{[k]}||_2=\max_{i\in [k]} \left(1-\frac{\lambda_i}{2}\right)\leq 1$, and Lemma~\ref{lem:semi-supervise-model}} \nonumber \\
&\leq 4\cdot \delta\cdot |C_i|\cdot |C_j|\cdot \sum_{i=1}^k ||u_i||^2_\infty &&\text{As $||f_x||_2^2=\sum_{i=1}^k u_i^2(x)$} \nonumber \\
&\leq 4\cdot \delta\cdot |C_i|\cdot |C_j|\cdot k\cdot \frac{160\cdot n^{40\cdot \epsilon/\varphi^2}}{\min_{t\in [k]} |C_t|} &&\text{By Lemma \ref{lem:l-inf-bnd}} \nonumber \\
&\leq 320\cdot \delta\cdot  n^{40\cdot \epsilon/\varphi^2}\cdot n \nonumber \\
&\leq \xi n/16k^2 &&\text{As $\delta=\frac{\xi}{512\cdot k^2\cdot n^{40\cdot \epsilon/\varphi^2}}$}
\end{align*}

Finally, we consider the last term in Equation~\eqref{eq:setup-triangle-ineq}. Here, we seek to bound $|\beta_{i,j} - K'_{i,j}|$. Since $|S_i| \geq \frac{1600 k^7 n^{40 \e/\varphi^2} \cdot \log n}{\xi^2}$ for every $i$, it follows from Lemma~\ref{lem:var} that $|\beta_{i,j} - K'_{i,j}| \leq \xi n/16k^2$.

Overall, by an application of triangle inequality, this means that in every single entry, we have $|K_{i,j} - K'_{i,j}| = err(i,j) \leq \xi n/5k^2$. Let $E \in \R^{k \times k}$ denote the matrix whose $(i,j)$-th entry is $err(i,j)$. Observe that the matrix $\xi n/k \cdot I_k - E$ is symmetric and diagonally dominant and therefore, in the psd order we have $E \preceq \xi n/k \cdot I_k$ which implies the lemma.

\end{proof}

%\input{contracted-graph.tex}
%!TEX root = ./main.tex

\subsection{Correctness of \textsc{EstimatedCost} (Proof of Theorem \ref{thm:estdcost})} \label{sec:sketch:tree}
Let $G=(V,E)$ be a $d$-regular graph that admits a $(k,\varphi,\epsilon)$-clustering $\cC = C_1, \ldots , C_k$, and  let $H = G/\cC$ be the of contraction of $G$ with respect to the partition $\cC $. We denote the optimal Dasgupta tree for $H$ as $T^*_H$. With this setup,
we will now show the main result (Theorem \ref{thm:estdcost}). Our estimation primitive $\textsc{EstimatedCost}$ (Algorithm \ref{alg:tree}) uses a simple vertex-weighted extension of a result of \cite{charikar2017approximate}
to find a 
tree $\wt{T}$ on $\wt{H}$. 
Let $\widetilde{T}=\textsc{WRSC}(\widetilde{H})$ (See Definition \ref{def:wrc} and Corollary \ref{cor:vertex-extention}). We denote the cost of $\wt{T}$ as $\wCT_{\wt{H}}(\wt{T})$, we present an estimator ($\estg$) 
for the Dasgupta cost of an optimal tree in $G$. In particular, we show $\estg \leq O\left( \frac{\sqrt{\log k}}{\varphi^{O(1)}} \right) \cdot \CT_G(T^*_G)$, where,
$$\estg \eqdef \est $$
The proof proceeds in two steps: in the first step, we prove Lemma \ref{lem:est:tstar}
which upper bounds $\estg$ in terms of $\wCT_H(T^*_H)$, 
where $T^*_{H}$ is the optimum Dasgupta tree for $H$. 
Next, we use Lemma \ref{lem:dcostg:tstar} to relate $\wCT_H(T^*_H)$ with $\CT_G(T^*_G)$.

\subsubsection{Estimating the cost of the contracted graph}
The main result of this section is Lemma \ref{lem:est:tstar}. This lemma bounds $\wCT_{\wt{H}}(\wt{T})$ in terms of $\wCT_H(T^*_H)$ where $T^*_H$ is the 
optimum Dasgupta tree for $H$. This is useful in relating $\estg$ with $\wCT_H(T^*_H)$. 

\begin{restatable}{lemmma}{lemesttilde} \label{lem:est:tstar}
Let $\frac{1}{n^4}<\xi<1$, $\xi'=\xi/16$. Let $H = G/\cC$ be the contraction of $G$ with respect to the partition $\cC $ (Definition \ref{def:weighetdcontracted}) and  let $T^*_H$ denote an optimum weighted Dasgupta tree for $H$.  Let $\widetilde{H}=\left([k], {[k] \choose 2}, \widetilde{W}, \widetilde{w}, \right)$ be the graph obtained by 
	$\textsc{ApproxContractedGraph}(G, \xi', \mathcal{D})$  (Algorithm \ref{alg:quadratic}). Let $\widetilde{T} = \textsc{WRSC}(\widetilde{H})$ denote a hierarchical clustering tree constructed on the graph $\widetilde{H}$ using the recursive sparsest cut algorithm. Then with probability at least $ 1-2\cdot n^{-100}$ we have
\[
\Omega\left(\varphi^2\right)\cdot \wCT_{H}(T^*_{H}) - \xi  m n k^2 \leq \wCT_{\wt{H}}(\widetilde{T}) \leq O\left( \frac{\sqrt{\log k}}{\varphi^2}\cdot \wCT_{H}(T^*_{H}) +  \xi  m n k^2 \sqrt{\log k} \right)
\]
\end{restatable}

To prove Lemma \ref{lem:est:tstar} we first present Definition \ref{def:maximal:clusters} and Lemma \ref{lem:cut-opt} which is a variation of Claim 2.1. from \cite{charikar2017approximate}.
%We first need the following definition.

\begin{mydef}[\textbf{Maximal clusters induced by a tree}] \label{def:maximal:clusters}
	Let $H=(V,E,W_H,w_H)$ be a vertex and edge-weighted graph. Let $T$ be a  tree with $|V|$ leaves (corresponding to the vertices of $H$). For any node $u$ of the tree $T$, we define the  weight of the node $u$ as the sum of the weight of those vertices of $H$ that are leaves of the subtree $T[u]$: \[ w_T(u)=\sum_{x\in \lv(T[u])} w_H(x) \text{.}\] 
Let $T(s)$ be the maximal nodes in $T$ such that their weight is at most $s$:
\[T(s) =\{u\in T: w_T(u)\leq s\} \text{.}\]
We refer to these nodes as maximal clusters of weight at most $s$. For convenience, we also define $T(s) =\lv(T)$ for every $s<\max_{x\in V} w(x)$.  Note that $T(s)$ is a partition of $V$.

We denote by $E_T(s)$ the edges that are cut in $T(s)$, i.e. edges with end points in different clusters in $T(s)$. For convenience, we also define $E_T(s) =E$ for every $s<\max_{x\in V} w(x)$. 
Also, we let $W_{T(s)}=\sum_{(x,y)\in E_T(s)} W(x,y)$ denote the total weight of the cut edges in $T(s)$. 
\end{mydef}

\begin{lemmma} \label{lem:cut-opt}
	Let $H=(V,E,W_H,w_H)$ be a vertex and edge-weighted graph. Let $\ell=\sum_{x\in V} w_H(x)$ be the total vertex weight of graph $H$. Let $T$ be a tree with $|V|$ leaves (corresponding to the vertices of $H$).  Then we have 
\[\wCT_{H}(T) = \sum_{s=0}^{\ell} W_T(s) \text{.}\]
\end{lemmma}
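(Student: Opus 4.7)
The plan is to swap the order of summation in $\sum_{s=0}^{\ell} W_T(s)$ and, for each edge $(x,y) \in E$, count the number of thresholds $s$ for which $(x,y)$ is cut by the partition $T(s)$. Concretely, expanding the definition of $W_T(s)$,
\[
\sum_{s=0}^{\ell} W_T(s) \;=\; \sum_{s=0}^{\ell} \sum_{(x,y)\in E_T(s)} W_Q(x,y) \;=\; \sum_{(x,y)\in E} W_Q(x,y)\cdot \bigl|\{s \in \{0,1,\ldots,\ell\}: (x,y)\in E_T(s)\}\bigr|.
\]
So the whole proof reduces to identifying, for each edge $(x,y)$, the exact set of $s$ for which it is cut.

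The key combinatorial step is the following claim: for an edge $(x,y) \in E$ with $u = \lca(x,y)$, the edge lies in $E_T(s)$ if and only if $s < w_T(u)$. To see this, first suppose $s \geq w_T(u)$; then $u$ is a node in $T$ with $w_T(u)\leq s$, and so either $u$ itself belongs to $T(s)$ or some ancestor of $u$ does (whichever is maximal). In either case $x$ and $y$, being leaves of $T[u]$, end up in the same cluster, so $(x,y)\notin E_T(s)$. Conversely, suppose $s < w_T(u)$. Then $w_T(u) > s$ (and therefore every ancestor of $u$ also has weight exceeding $s$), so neither $u$ nor any ancestor of $u$ lies in $T(s)$. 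The cluster of $T(s)$ containing $x$ must therefore be a strict descendant of $u$ on the path from $u$ to $x$, and similarly for $y$; since $u = \lca(x,y)$, these clusters sit in distinct subtrees of the two children of $u$, hence are different. This argument goes through cleanly for $s \geq \max_z w_Q(z)$; for smaller $s$, the convention $T(s) = \lv(T)$, $E_T(s) = E$ also places $(x,y)$ in $E_T(s)$, and this is consistent because $w_T(u)\geq w_Q(x)+w_Q(y)$ so such small $s$ already satisfies $s < w_T(u)$.

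Combining the claim with the display above, the number of integers $s \in \{0,1,\ldots,\ell\}$ for which $(x,y)$ is cut equals exactly $w_T(u) = w_T(\lca(x,y))$ (note that $w_T(u)\leq w_T(\text{root})=\ell$, so the count is not truncated). Therefore
\[
\sum_{s=0}^{\ell} W_T(s) \;=\; \sum_{(x,y)\in E} W_Q(x,y)\cdot w_T(\lca(x,y)) \;=\; \sum_{(x,y)\in E} W_Q(x,y) \cdot \sum_{z\in \lv(T[\lca(x,y)])} w_Q(z) \;=\; \wCT_Q(T),
\]
where the last equality is the definition of weighted Dasgupta cost (Definition~\ref{def:weightedcost}). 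I do not anticipate a real obstacle here; the only delicate point is reconciling the ``convention'' clause $T(s)=\lv(T)$ for $s<\max_z w_Q(z)$ with the main definition of $T(s)$ as maximal nodes of weight at most $s$, which is handled by the observation that $w_T(\lca(x,y))\geq w_Q(x)+w_Q(y)$ ensures the two conventions agree on which $s$ produce a cut.
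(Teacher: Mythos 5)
Your proposal is correct and follows essentially the same route as the paper: both arguments count, for each edge $(x,y)$, the number of thresholds $s$ for which the edge is cut in $T(s)$, and show this count equals $w_T(\lca(x,y))$, so both sides receive contribution $W_Q(x,y)\cdot w_T(\lca(x,y))$. The only difference is that you prove the ``cut iff $s < w_T(\lca(x,y))$'' claim and the small-$s$ convention in detail, whereas the paper states them without elaboration, deferring to \cite{charikar2017approximate}.
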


\begin{proof}
	The proof is identical to the proof given in \cite{charikar2017approximate}. %As stated, let $T$ denote some Dasgupta Tree for $H$. 
Consider any edge $(x,y) \in E$.	
Let $r = w_T(\lca(x,y))$ be the weight of the lowest common ancestor of $x,y$. Then, by Definition \ref{def:weightedcost} the contribution of the edge $(x,y)$ to the LHS
	is $r \cdot W(x,y)$. Also, note that $(x,y) \in E_T(s)$ for all $0 \leq s \leq r-1$. Hence, the contribution of the edge $(x,y)$ to RHS is also $r \cdot W(x,y)$.
\end{proof}

Next, we present Lemma \ref{lem:cut-cost} which shows that the Dasgupta cost of two graphs is close if the weight of every cut is similar in both graphs.
\begin{lemmma}
	\label{lem:cut-cost}
	Let $H=(V,E, W, w)$ and $H'=(V, E', W', w)$ be vertex and edge-weighted graphs. Let $\alpha, \beta >0$. Suppose that for every set $S\subseteq V$ we have 
	$ W'(S,V\setminus S) \leq \beta \cdot W(S,V\setminus S) + \alpha$. Let $T$ and $T'$ denote the optimum Dasgupta tree of $H$ and 
	$H'$ respectively. Then we have
	\[ \wCT_{H'}(T') \leq \beta\cdot \wCT_{H}(T) + \frac{\alpha}{2} \cdot |V|\cdot ||w||_1 \text{.}\]
\end{lemmma}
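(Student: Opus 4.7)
\textbf{Proof plan for Lemma~\ref{lem:cut-cost}.} The plan is to exploit optimality of $T'$ for $Q'$ together with the cut comparison hypothesis, reducing the lemma to a level-by-level calculation using the ``layer-cake'' formula from Lemma~\ref{lem:cut-opt}. Since $T'$ is an optimum Dasgupta tree for $Q'$, we have $\wCT_{Q'}(T') \leq \wCT_{Q'}(T)$, so it suffices to bound $\wCT_{Q'}(T)$ in terms of $\wCT_{Q}(T)$. By Lemma~\ref{lem:cut-opt} applied to $Q'$ and the tree $T$,
\[
\wCT_{Q'}(T) \;=\; \sum_{s=0}^{\ell} W'_T(s),
\]
where $\ell = \|w\|_1$ and $W'_T(s)$ is the total $W'$-weight of edges crossing the partition $T(s) = \{U_1,\ldots,U_{m_s}\}$ of maximal clusters of weight at most $s$ (as in Definition~\ref{def:maximal:clusters}).

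Next, I would observe that every crossing edge lies in $W'(U_i, V\setminus U_i)$ for exactly two indices $i$, so
\[
W'_T(s) \;=\; \tfrac{1}{2}\sum_{i=1}^{m_s} W'(U_i, V\setminus U_i).
\]
Applying the hypothesis $W'(S,V\setminus S) \leq \beta \cdot W(S,V\setminus S)+\alpha$ with $S=U_i$ for each $i$ and invoking the same decomposition for $W_T(s)$ gives
\[
W'_T(s) \;\leq\; \beta\, W_T(s) \;+\; \tfrac{\alpha}{2}\, |T(s)|.
\]
Summing over $s$ and using Lemma~\ref{lem:cut-opt} once more (this time for $Q$) yields $\wCT_{Q'}(T) \leq \beta\, \wCT_Q(T) + \tfrac{\alpha}{2}\sum_{s=0}^{\ell}|T(s)|$.

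It remains to control the additive error $\sum_{s=0}^{\ell}|T(s)|$. Since $|T(s)|\leq |V|$ for every $s$, and since for $s=\ell$ the partition is $\{V\}$ contributing $W'_T(\ell)=0$ (so that term can be dropped from the sum), the error is at most $|V|\cdot \ell = |V|\cdot \|w\|_1$, giving the claimed bound
\[
\wCT_{Q'}(T') \;\leq\; \wCT_{Q'}(T) \;\leq\; \beta\, \wCT_Q(T) + \tfrac{\alpha}{2}\,|V|\cdot\|w\|_1.
\]
No real technical obstacle is expected; the only subtlety is the factor of $\tfrac{1}{2}$ in the cut-to-crossing-edge conversion and the careful cancellation of the trivial $s=\ell$ term to match the constant in the statement.
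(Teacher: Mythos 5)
Your proposal is correct and follows essentially the same route as the paper: both apply the layer-cake identity of Lemma~\ref{lem:cut-opt} to the same tree $T$ on both $Q$ and $Q'$, compare cuts level by level via the hypothesis (with the factor $\tfrac12$ from double-counting each crossing edge), and finish with optimality of $T'$ on $Q'$; the paper merely writes the chain in the reverse direction, lower-bounding $\wCT_Q(T)$ and rearranging. Your handling of the $s=\ell$ term is, if anything, slightly more careful than the paper's bound $\sum_{s=0}^{\ell} t_s \leq \ell\cdot|V|$.
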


\begin{proof}
	The main idea in the proof is to use Lemma \ref{lem:cut-opt} to relate the cost of the tree to the cost of cuts. 
Let $\ell=||w||_1=\sum_{x\in V} w(x)$. %and $\ell'=||w'||_1=\sum_{x\in V} w'(x)$. Thus, we have $\ell\leq \gamma\cdot \ell'$. 
For every $0\leq s\leq \ell$, let $B_1, B_2, \cdots, B_{t_s}$ denote the partition of maximal clusters of weight at most $s$
	induced by $T$ in graph $H$ (Definition \ref{def:maximal:clusters}). By Lemma \ref{lem:cut-opt} we have

\begin{equation}
\label{eq:wcost-cut}
\wCT_{H}(T) = \sum_{s=0}^{\ell} W_{T(s)} =  
	\frac{1}{2}\cdot \sum_{s=0}^{\ell} \sum_{i \in [t_s]} W(B_i, V\setminus \overline{B_i})\text{.}
\end{equation}
%where, the last step follows as $W'(B_i, B_j)$ is added twice while summing over $W'(B_i, V\setminus \overline{B_i})$.
 Therefore, we have
	\begin{align*}
	&\wCT_{H}(T) \\
	&= 	\frac{1}{2}\cdot \sum_{s=0}^{\ell} \sum_{i \in [t_s]} W(B_i, V\setminus \overline{B_i}) &&\text{By \eqref{eq:wcost-cut}}\\
	&\geq \frac{1}{2} \cdot \sum_{s=0}^{\ell} \sum_{i \in [t_s]} \frac{W'(B_i, V\setminus \overline{B_i})-\alpha}{\beta}   &&\text{As for every $S\subseteq V $, $ W'(S,V\setminus S) \leq \beta \cdot W(S,V\setminus S) + \alpha$} \\
		&= \frac{1}{\beta}\cdot \sum_{s=0}^{\ell}  W'_{T(s)} - \frac{\alpha}{2\cdot \beta} \cdot \sum_{s = 0}^{\ell} t_s \\
		&\geq \frac{1}{\beta}\cdot \sum_{s=0}^{\ell}  W'_{T(s)} - \frac{\alpha}{2\cdot \beta} \cdot \ell\cdot |V| &&\text{As $t_s\leq |\lv(T')|=|V|$}\\
		& = \frac{1}{\beta} \cdot \wCT_{H'}(T) - \frac{\alpha}{2\cdot \beta} \cdot \ell\cdot |V|  &&\text{By Lemma \ref{lem:cut-opt}}\\
		& \geq \frac{1}{\beta} \cdot \wCT_{H'}(T') -  \frac{\alpha}{2\cdot \beta} \cdot \ell\cdot |V| &&\text{By optimality of  $T'$ on $H'$}
	\end{align*}
	Therefore, we get
\[ \wCT_{H'}(T') \leq \beta\cdot \wCT_{H}(T) + \frac{\alpha}{2} \cdot |V|\cdot ||w||_1 \text{.}\]
\end{proof}

Next, we prove the following lemma which is an important intermediate step towards Lemma \ref{lem:est:tstar}.

\begin{lemmma} \label{lem:general:est:tstar}
 Let $H = G/\cC$ be the contraction of $G$ with respect to the partition $\cC $ (Definition \ref{def:weighetdcontracted}) and  let $T^*_H$ denote an optimum weighted Dasgupta tree for $H$. Let  $\widetilde{H}= \left([k], {[k] \choose 2}, \widetilde{W}, \widetilde{w}\right)$ be an approximation of $H$ such that the following hold:
\begin{itemize}
\item for all $i\in[k]$, $ \frac{w(i)}{2}\leq \wt{w}(i) \leq 2 \cdot w(i)$, and 
\item for all $S \subseteq [k]$, $a \cdot W(S, \overline{S}) - b \leq \wt{W}(S, \overline{S}) \leq a' \cdot W(S, \overline{S}) + b'$.
\end{itemize}

Let $\widetilde{T} = \textsc{WRSC}(\widetilde{H})$ denote a hierarchical clustering tree constructed on the graph $\widetilde{H}$ using the recursive sparsest cut algorithm. Then
\[
		\frac{a}{2}\cdot \wCT_{H}(T^*_{H}) - b\cdot n \cdot k   \leq \wCT_{\wt{H}}(\widetilde{T}) \leq O\left(a' \sqrt{\log k} \cdot \wCT_{H}(T^*_{H}) + b' n \cdot k \cdot \sqrt{\log k}\right)
	\]
\end{lemmma}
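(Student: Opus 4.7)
\textbf{Proof plan for Lemma \ref{lem:general:est:tstar}.} The plan is to decouple the edge-weight approximation from the vertex-weight approximation by interposing a third graph $Q^{*} = ([k], \binom{[k]}{2}, \widetilde W, w)$ which has the approximate edge weights of $\widetilde Q$ but the \emph{true} vertex weights of $Q$. Once this is in place, the vertex-weight mismatch and the edge-weight mismatch can each be controlled by a standalone ingredient: Definition \ref{def:weightedcost} handles the vertex scaling, and Lemma \ref{lem:cut-cost} handles the cut-weight distortion.

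\textbf{Step 1 (vertex reweighting).} For \emph{any} tree $T$ on $[k]$, the assumption $w(i)/2 \leq \widetilde w(i) \leq 2 w(i)$ applied pointwise inside the inner sum of Definition \ref{def:weightedcost} yields
\[
\tfrac{1}{2}\,\wCT_{Q^{*}}(T) \;\leq\; \wCT_{\widetilde Q}(T) \;\leq\; 2\,\wCT_{Q^{*}}(T).
\]
So up to a factor of $2$ it suffices to compare costs measured in $Q$ and $Q^{*}$, both of which use the vertex weight $w$.

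\textbf{Step 2 (upper bound).} By Corollary \ref{cor:vertex-extention}, $\wCT_{\widetilde Q}(\widetilde T) = O(\sqrt{\log k})\cdot \wCT_{\widetilde Q}(T^{*}_{\widetilde Q}) \leq O(\sqrt{\log k})\cdot \wCT_{\widetilde Q}(T^{*}_{Q})$, where the last step uses that $T^{*}_{\widetilde Q}$ is optimal on $\widetilde Q$. Applying Step 1 then reduces to bounding $\wCT_{Q^{*}}(T^{*}_{Q})$. For this, I invoke Lemma \ref{lem:cut-opt} to rewrite $\wCT_{Q^{*}}(T^{*}_{Q}) = \sum_{s=0}^{\ell} \widetilde W_{T^{*}_{Q}(s)}$ (with $\ell = \|w\|_1 = n$), and for each $s$ I expand the cut sum over the $t_s \leq k$ maximal clusters. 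The upper bound hypothesis $\widetilde W(S,\overline S) \leq a' W(S,\overline S) + b'$ applied to each $(B_i, \overline{B_i})$ in the partition gives
\[
\widetilde W_{T^{*}_{Q}(s)} \;\leq\; a'\, W_{T^{*}_{Q}(s)} + \tfrac{t_s b'}{2},
\]
and summing over $s=0,\dots,\ell$ produces $\wCT_{Q^{*}}(T^{*}_{Q}) \leq a'\,\wCT_{Q}(T^{*}_{Q}) + \tfrac{k n b'}{2}$. Multiplying by the $O(\sqrt{\log k})$ factor from RSC and the factor $2$ from Step 1 yields the claimed upper bound.

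\textbf{Step 3 (lower bound).} Trivially $\wCT_{\widetilde Q}(\widetilde T) \geq \wCT_{\widetilde Q}(T^{*}_{\widetilde Q})$, which by Step 1 is at least $\tfrac{1}{2}\wCT_{Q^{*}}(T^{*}_{\widetilde Q}) \geq \tfrac{1}{2}\wCT_{Q^{*}}(T^{*}_{Q^{*}})$. To bring $T^{*}_{Q}$ back into the picture, I rewrite the lower-bound hypothesis as $W(S,\overline S) \leq \tfrac{1}{a}\widetilde W(S,\overline S) + \tfrac{b}{a}$ and apply Lemma \ref{lem:cut-cost} with $(Q',Q) \leftarrow (Q, Q^{*})$, $\beta = 1/a$, $\alpha = b/a$. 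This gives $\wCT_{Q}(T^{*}_{Q}) \leq \tfrac{1}{a}\wCT_{Q^{*}}(T^{*}_{Q^{*}}) + \tfrac{b n k}{2 a}$, i.e.\ $\wCT_{Q^{*}}(T^{*}_{Q^{*}}) \geq a\,\wCT_{Q}(T^{*}_{Q}) - \tfrac{b n k}{2}$. Chaining these inequalities yields $\wCT_{\widetilde Q}(\widetilde T) \geq \tfrac{a}{2}\wCT_{Q}(T^{*}_{Q}) - bnk$, as required.

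\textbf{Main obstacle.} The only subtle point is the additive error: the cut-sparsification hypothesis has additive slack $b$ (or $b'$) \emph{per cut}, while the Dasgupta cost of a tree sums cuts at all $\ell$ levels and each level is a $t_s \leq k$-way partition. Converting the per-cut additive slack into the correct $bnk$ additive slack in the Dasgupta bound is exactly what Lemma \ref{lem:cut-cost} is designed to do, so the work is to verify that its hypothesis is in the right form and that no factor gets lost when interposing $Q^{*}$ between $Q$ and $\widetilde Q$. Everything else is bookkeeping.
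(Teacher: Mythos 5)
Your proposal is correct and follows essentially the same strategy as the paper: interpose a hybrid graph to decouple the vertex-weight error from the edge-weight error, then combine the WRSC guarantee (Corollary \ref{cor:vertex-extention}) with the cut-decomposition machinery of Lemmas \ref{lem:cut-opt} and \ref{lem:cut-cost}. The only difference is cosmetic: you use the hybrid $Q^{*}=([k],\binom{[k]}{2},\widetilde{W},w)$ (approximate edge weights, true vertex weights) and peel off the factor-$2$ vertex reweighting first, whereas the paper works with the mirror-image hybrid $\widehat{Q}=([k],\binom{[k]}{2},W,\widetilde{w})$, applies Lemma \ref{lem:cut-cost} between $\widehat{Q}$ and $\widetilde{Q}$, and only then handles the vertex weights, yielding the same bounds up to constants.
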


\begin{proof}
	Let $T^*_{\wt{H}}$ be the optimum Dasgupta tree of $\wt{H}$. By Corollary \ref{cor:vertex-extention} 
	the tree $\wt{T}$ returned by the vertex-weighted recursive sparsest cut procedure satisfies 

	\begin{equation} \label{eq:est1:tstar}
		\wCT_{\wt{H}}(T^*_{\wt{H}})\leq \wCT_{\wt{H}}(\wt{T}) \leq O(\sqrt{\log k})\cdot \wCT_{\wt{H}}(T^*_{\wt{H}})
	\end{equation}
Recall $H=\left([k], {[k] \choose 2}, W, w\right)$ is the contraction of $G$ with respect to $\cC$, 
	i.e., $H=G/\cC=$  (Definition \ref{def:contracted}). 
	Recall that $\widetilde{H} =\left([k], {[k] \choose 2}, \widetilde{W}, \widetilde{w}\right)$ 
	where $\widetilde{H}$ satisfies items 1 and 2 in the premise. Now we define 
	$\widehat{H}=\left([k], {[k] \choose 2}, W, \widetilde{w}\right)$, 
	where $W$ is the matrix of edge weights in $H$, and $\widetilde{w}$ is the vector of vertex weights in $\widetilde{H}$. Let $T^*_{\widehat{H}}$ be the optimum Dasgupta tree of $\widehat{H}$.
	Note that the trees $T^*_{\widehat{H}}$ and $\wt{T}$ satisfy the pre-requisites of
	Lemma \ref{lem:cut-cost}. Therefore, we have 

	\begin{equation}
		\label{eq:qhat-qtilde}
		a \cdot \wCT_{\widehat{H}}(T^*_{\widehat{H}}) - \frac{b}{2} \cdot k \cdot ||\widetilde{w}||_1   \leq \wCT_{\wt{H}}(T^*_{\wt{H}}) \leq a' \cdot \wCT_{\widehat{H}}(T^*_{\widehat{H}}) + \frac{b'}{2} \cdot k \cdot ||\widetilde{w}||_1  
	\end{equation}

	Note that the vertex weight function $w$ of $H$ satisfies $w_i=|C_i|$ 
	for all $i\in [k]$. Also, recall from the premise, we have 
	$\frac{1}{2}\cdot w(i)\leq\widetilde{w}(i)\leq 2\cdot w(i)$. Therefore, by 
	Definition \ref{def:weightedcost} we have

	\begin{align}
		\label{eq:tt'11}
		\wCT_{\widehat{H}}(T^*_{\widehat{H}}) 
		&\leq \wCT_{\widehat{H}}(T^*_{H}) &&\text{By optimality of $T^*_{\widehat{H}}$ on $\widehat{H}$} \nonumber\\
		&\leq 2\cdot \wCT_{H}(T^*_{H}) &&\text{As $\widetilde{w}(i)\leq 2\cdot w(i)$}
	\end{align}

	Similarly, we have

	\begin{align}
		\label{eq:tt'22}
		\wCT_{\widehat{H}}(T^*_{\widehat{H}}) 
		&\geq  \frac{1}{2}\cdot \wCT_{H}(T^*_{\widehat{H}}) &&\text{As $\widetilde{w}(i)\geq \frac{1}{2}\cdot w(i)$ on $\widehat{H}$} \nonumber \\
		&\geq \frac{1}{2}\cdot \wCT_{H}(T^*_{H}) &&\text{By optimality of $T^*_{\widehat{H}}$}
	\end{align}

	Therefore, by \eqref{eq:qhat-qtilde}, \eqref{eq:tt'11}, \eqref{eq:tt'22}, and as 
	$||\widetilde{w}||_2\leq 2||w||_2=2n$, we have

	\begin{equation}
		\label{eq:qtil-q-don?}
		\frac{a}{2} \cdot \wCT_{H}(T^*_{H}) - b\cdot n \cdot k   \leq \wCT_{\wt{H}}(T^*_{\wt{H}}) \leq 2\cdot a' \cdot \wCT_{H}(T^*_{H}) + b'\cdot n \cdot k 
	\end{equation}

	Finally, by \eqref{eq:est1:tstar} and \eqref{eq:qtil-q-don?} we get
	\[
		\frac{a}{2} \cdot \wCT_{H}(T^*_{H}) - b\cdot n \cdot k \leq \wCT_{\wt{H}}(\widetilde{T}) \leq O\left( a' \sqrt{\log k} \cdot \wCT_{H}(T^*_{H}) + b'\cdot n\cdot k\sqrt{\log k} \right).
	\]
\end{proof}

Finally, we prove Lemma \ref{lem:est:tstar}.
\lemesttilde*

\begin{proof} 

	Recall that $\widetilde{H}=\textsc{ApproxContractedGraph}(G, \xi', \cD)$ and let $\wt{\mathcal{L}}$ 
	be  the Laplacian of $\widetilde{H}$. Therefore, by Theorem \ref{thm:contracted-graph} with 
	probability at least $1 - n^{-100}$, we have

	\begin{equation}
		\label{eq:l-lh-z}
		\Omega\left(\varphi^2\right)\cL_H - \xi\cdot m \cdot I_k \preceq \wt{\cL}  \preceq O\left(\frac{1}{\varphi^2}\right)\cdot \cL_H  + \xi\cdot m \cdot I_k
	\end{equation}

	Note that for every $S \subseteq  [k]$, we have $\mathds{1}_{S}^T \cL_H \mathds{1}_{S} = W(S,\overline{S})$, 
	and $\mathds{1}_{S}^T  \wt{\cL} \mathds{1}_{S} =  \wt{W}(S, \overline{S})$. Also, note that 
	$||\mathds{1}_{S}||_2^2\leq k$. Therefore, by \eqref{eq:l-lh-z} with probability at least 
	$1 - n^{-100}$ for every $S\in [k]$ we have

	\begin{equation}
		\label{eq:cut-prop}
		\Omega\left(\varphi^2\right)\cdot W(S, V\setminus S) -\xi \cdot m\cdot k \leq \wt{W}(S, V\setminus S) \leq O\left(\frac{1}{\varphi^2}\right) \cdot W(S, V\setminus S) + \xi \cdot m\cdot k
	\end{equation}
	Moreover, note that by Lemma \ref{lem:semi-supervise-model} with probability at 
	least $1 - n^{-100\cdot k}$ we have $\frac{1}{2}\cdot w(i)\leq\widetilde{w}(i)\leq 2\cdot w(i)$. 
	Now, apply Lemma \ref{lem:general:est:tstar}.  With probability at least 
	$1 - n^{-100\cdot k}-n^{-100}\geq 1-2\cdot n^{-100}$, this gives
	\[
		\Omega\left(\varphi^2\right)\cdot \wCT_{H}(T^*_{H}) - \xi  m n k^2 \leq \wCT_{\wt{H}}(\widetilde{T}) \leq O\left( \frac{\sqrt{\log k}}{\varphi^2}\cdot \wCT_{H}(T^*_{H}) +  \xi  m n k^2 \sqrt{\log k} \right)
	\]
\end{proof}

\subsubsection{Bounding the optimum cost of the graph with the contracted graph} \label{sec:opt:contracted:cost}

The main result of this section is Lemma~\ref{lem:dcostg:tstar}, that relates the cost of the optimum tree of the contracted graph with the cost of the optimum of $G$. This allows us to bound the estimator proposed in Algorithm \ref{alg:tree} with the Dasgupta cost of the optimum tree of $G$.

\begin{restatable}{lemmma}{optcost} \label{lem:dcostg:tstar}
Let $H = G/\cC$ be the contraction of $G$ with respect to the partition $\cC $ (Definition \ref{def:weighetdcontracted}). Let $T^*_H$ and $T^*_G$ be optimum weighted Dasgupta trees for $H$ and $G$ respectively. Then we have
\[ \CT_G(T^*_G) \leq \wCT_H(T^*_H) + \textsc{TotalClustersCost}(G) \leq O\left(\frac{1}{\varphi^7}\right) \cdot \CT_G(T^*_G) \text{,}\]
where $\textsc{TotalClustersCost}(G)$ is an output of Algorithm \ref{alg:clustercost} which satisfies the guarantees of Theorem \ref{thm:Gk}
\end{restatable}

\begin{restatable}{theorem}{totalclustercost}
\label{thm:Gk}
Let $G=(V,E)$ be a $d$-regular $(k,\varphi,\epsilon)$-clusterable graph. For every $i\in[k]$, let $G\{C_i\}$ denote the induced subgraph on $C_i$ with added self loops so that the degrees in $G\{C_i\}$ and $G$ are the same, and let $T^*_i$ denote the tree
	with optimum Dasgupta cost for $G\{C_i\}$. 
Then procedure \textsc{TotalClustersCost} (Algorithm \ref{alg:clustercost}) returns a value such that:
\[\sum_{i \in [k]} \CT_{G\{C_i\}}(T^*_i) \leq \textsc{TotalClustersCost}(G) \leq O\left(\frac{1}{\varphi^{5}}\right)\cdot \sum_{i \in [k]} \CT_{G\{C_i\}}(T^*_i) \text{.} \]
\end{restatable} 
We prove Theorem \ref{thm:Gk} in Section \ref{sec:clustercost}. 
To prove Lemma \ref{lem:dcostg:tstar}, we show that there exists a tree on the contracted graph whose cost is not more than $\frac{1}{\varphi^{O(1)}}$ times the optimum cost of the graph (see Lemma \ref{lem:optimum-respects-clusters}). To show this, we exploit the structure of $(k, \varphi, \e)$-clusterable graphs and prove that in these graphs some {\em cluster respecting cut} has conductance comparable to the conductance of the sparsest cut (see Lemma \ref{lem:cluster-respecting-cuts}).

\begin{mydef}[\textbf{Cluster respecting cuts}] \label{def:cluster:respecting:cut}
	Let $G$ be a graph that admits a $(k, \varphi, \epsilon)$-clustering
	$C_1, \ldots C_k$. We say that the the cut $(B,\overline{B})$ is a cluster respecting cut with 
	respect to the partition $\cC$ if both $B$ and $\overline{B}$
	are disjoint unions of the clusters. That is, there exists a subset $I \subseteq [k]$ such that 
	$B = \cup_{i \in I} C_i$, and $\overline{B} = \cup_{i \not\in I} C_i$.
\end{mydef}

\begin{mydef}[\textbf{Cluster respecting  tree}] \label{def:cluster:resp:tree}
Let $G=(V,E)$ be a graph that admits a $(k, \varphi, \epsilon)$-clustering $\cC=C_1, \ldots C_k$. We say that tree $T$ for $G$ (with $|V|$ leaves) is cluster respecting with respect to the partition $\cC $ if there exists a subtree $T_{[k]}$ of $T$ (rooted at the root of $T$) with $k$ leaves such that for every $i \in [k]$, then there exists a unique leaf $\ell_i$ (of $T_{[k]}$) such that the leaves in $T$ which are descendants of $\ell_i$ are exactly the set $C_i$. We call the tree $T_{[k]}$ the contracted subtree of $T$.
\end{mydef}

\begin{lemmma}[Some cluster respecting cut has conductance comparable to the sparsest cut]  
	\label{lem:cluster-respecting-cuts}
  Let $(S, V \setminus S)$
	denote the sparsest cut of $G$. Then there exists a cluster respecting cut
	(Definition \ref{def:cluster:respecting:cut}) $(B, V \setminus B)$ such that \[\max(\phi_{\text{out}}^G(B), \phi_{\text{out}}^G(\overline{B})) \leq \frac{4\cdot \phi_{\text{out}}^G(S)}{\varphi}  \text{.}\]
\end{lemmma}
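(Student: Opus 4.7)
The plan is to round the sparsest cut $(S,V\setminus S)$ into a cluster-respecting cut $(B,V\setminus B)$ by assigning each cluster $C_i$ entirely to whichever side of $(S,\bar S)$ contains the majority of its vertices. Formally, set $B = \bigcup_{i:\,|C_i\cap S|\ge |C_i|/2} C_i$, which is cluster respecting by construction. Write $\sigma_i \eqdef \min(|C_i\cap S|,|C_i\setminus S|)$; then $|S\triangle B| = \sum_{i\in[k]} \sigma_i$, because within each cluster the rounding step moves exactly the minority of $C_i$ across the cut.

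The first key step will be to use the inner conductance of the clusters to bound $|S\triangle B|$ in terms of $|E(S,\bar S)|$. Since $\phi^G_{\mathrm{in}}(C_i)\ge \varphi$, the cut $(C_i\cap S, C_i\setminus S)$ inside $C_i$ contributes at least $\varphi d\,\sigma_i$ edges. Summing over $i$,
\[
\varphi\, d\, |S\triangle B| \;=\; \varphi d\sum_i \sigma_i \;\le\; \sum_i |E(C_i\cap S,\,C_i\setminus S)| \;\le\; |E(S,\bar S)|.
\]
The second step bounds the cut size of $(B,\bar B)$: every edge lying in the symmetric difference $E(B,\bar B)\triangle E(S,\bar S)$ must have at least one endpoint in $S\triangle B$, so
\[
|E(B,\bar B)| \;\le\; |E(S,\bar S)| + d\,|S\triangle B| \;\le\; \left(1+\tfrac{1}{\varphi}\right)|E(S,\bar S)| \;\le\; \tfrac{2}{\varphi}\,|E(S,\bar S)|.
\]

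The last step controls the denominators $|B|$ and $|\bar B|$. Assume WLOG $|S|\le n/2$. If $\phi^G_{\mathrm{out}}(S) \ge \varphi/2$ then $4\phi^G_{\mathrm{out}}(S)/\varphi \ge 2$ and the conclusion is vacuous since $\phi^G_{\mathrm{out}}(\cdot)\le 1$; so assume $\phi^G_{\mathrm{out}}(S)\le \varphi/2$. Then
\[
|S\triangle B| \;\le\; \frac{|E(S,\bar S)|}{\varphi d} \;=\; \frac{\phi^G_{\mathrm{out}}(S)\,|S|}{\varphi} \;\le\; \frac{|S|}{2},
\]
and using $|\bar S|\ge |S|$ together with $|B|\ge |S|-|S\triangle B|$ and $|\bar B|\ge |\bar S|-|S\triangle B|$, we get $\min(|B|,|\bar B|)\ge |S|/2$. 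Combining these gives
\[
\max\bigl(\phi^G_{\mathrm{out}}(B),\phi^G_{\mathrm{out}}(\bar B)\bigr) \;=\; \frac{|E(B,\bar B)|}{d\,\min(|B|,|\bar B|)} \;\le\; \frac{2|E(S,\bar S)|/\varphi}{d\,|S|/2} \;=\; \frac{4\,\phi^G_{\mathrm{out}}(S)}{\varphi},
\]
which is the claimed bound.

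The main obstacle is the interplay between the rounding error and the denominator: rounding could in principle delete vertices from the smaller side so aggressively that $\min(|B|,|\bar B|)$ collapses and $\phi^G_{\mathrm{out}}(B)$ blows up. The inner-conductance bound on $|S\triangle B|$ is exactly what prevents this, and the case split on whether $\phi^G_{\mathrm{out}}(S)\lesssim \varphi$ handles the remaining regime where the claim is vacuous anyway. No other nontrivial ingredient is needed.
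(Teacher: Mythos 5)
Your proof is correct and follows essentially the same route as the paper's: round the sparsest cut $S$ to the cluster-majority cut $B$, use the inner conductance $\varphi$ of each cluster to bound the moved mass $|S\triangle B|$ by $|E(S,\overline{S})|/(\varphi d)$, and bound the extra cut edges by $d\,|S\triangle B|$. The only divergence is in controlling $\min(|B|,|\overline{B}|)$: the paper splits on whether the minority pieces total $|S|/10$ and rules out the bad case using $\phi_{\text{out}}(S)\le\epsilon\ll\varphi$, whereas you bound $|S\triangle B|\le|S|/2$ directly when $\phi_{\text{out}}(S)\le\varphi/2$ and dismiss the complementary regime as trivial --- where, strictly speaking, you should exhibit some nonempty cluster-respecting cut such as $(C_1,V\setminus C_1)$ (which exists since $k\ge 2$) rather than your rounded $B$, which could degenerate to $\emptyset$ or $V$.
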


\begin{proof}
	Let $(S,\overline{S})$ denote the cut with the smallest conductance in $G$, 
	where  $\vol(S) \leq \vol(V \setminus S)$. If $(S, \overline{S})$ is cluster respecting, 
	then we are already done. So, let us suppose it is not cluster respecting and let 
	$\beta = \phi_{\text{in}}(G) = \phi_{\text{out}}(S)$.
	For each $i \in [k]$, define $X_i = S \cap C_i$ to be the vertices of $S$ that belong to $C_i$. Similarly, for each $i \in [k]$, define $Y_i = \overline{S} \cap C_i$.
	Also, we use $M_i$ to denote for each $i \in [k]$, the set with smaller volume between $X_i$ and $Y_i$.
	That is, $M_i = \argmin(\vol(X_i), \vol(Y_i))$. Define $M = \cup_{i \in [k]} M_i$ and let 
	$$B = (S \setminus M) \cup (\overline{S} \cap M) \ and \ \overline{B} = V \setminus B.$$
	
	This makes $(B, \overline{B})$ a \emph{cluster respecting cut}. This is because
	we move vertices of $M$ from $\overline{S}$ to $S$ to get $B$ and the other way around
	to get $\overline{B}$. We will show that the sets $B$ and $\overline{B}$ are both non-empty 
	and we will upperbound 
	$\max(\phi_{\text{out}}(B), \phi_{\text{out}}(\overline{B})) = \displaystyle\frac{|E(B, \overline{B})|}{\min(\vol(B), \vol(\overline{B}))}$. 
	By definition of $B$, it holds that
	$$|E(B, \overline{B})| \leq |E(S, \overline{S})| + \vol(M) = \beta \vol(S) + \vol(M).$$
	Also $\cup_{i \in [k]} E(X_i, Y_i) \subseteq E(S, \overline{S})$. Moreover, recalling that
	each cluster has $\phi_{\text{in}}(C_i) \geq \varphi$, it follows that 
	$|E(S, \overline{S})| \geq \sum_{i \in [k]} |E(X_i, Y_i)| \geq \varphi \cdot \vol(M).$ 
	Therefore, $\varphi \cdot \vol(M) \leq |E(S, \overline{S})| = \beta \cdot \vol(S)$. This means

	\begin{equation} \label{eq:edge-bound}
		|E(B, \overline{B})| \leq \beta \cdot \vol(S) + \frac{\beta \cdot \vol(S)}{\varphi} \leq \frac{2 \beta \cdot \vol(S)}{\varphi}
	\end{equation}

	We define an index set
	$I_{\text{small}} = \{i \in [k]: \vol(X_i) \leq \vol(Y_i) \}$ which indexes clusters where $S$ contains smaller volume
	of $C_i$ than $\overline{S}$. We show that if $\sum_{i \in I_{\text{small}}}\vol(X_i) < \vol(S)/10$, 
	then $\min(\vol(B), \vol(\overline{B})) \geq 0.5 \vol(S)$. Moreover, we also show that the
	other case, with $\sum_{i \in I_{\text{small}}}\vol(X_i) < \vol(S)/10$ cannot occur. In all, this
	means $\max(\phi_{\text{out}}(B), \phi_{\text{out}}(\overline{B})) \leq \frac{4 \beta}{\varphi} \cdot \phi_{\text{out}}(S)$. 
		Let us consider the first situation above.

	\textbf{Case 1:}  Suppose $\sum_{i \in I_{\text{small}}} \vol(X_i) < \vol(S)/10$.
	Note that for any $i \not\in I_{\text{small}}$, $B \supseteq X_i$. Thus, we have
	$\vol(B) \geq 0.9\vol(S) \geq 0.5\vol(S)$ (and thus $B$ is non empty). Also, note that in this case, 
	$$\vol(\overline{B}) \geq \vol(\overline{S}) - \vol(M) \geq \vol(S) - \vol(M) \geq \vol(S) - \frac{\beta}{\varphi} \vol(S)$$
	which is at least $0.5 \vol(S)$ as well (and in particular, $\overline{B}$ is also non empty). 
	Thus, $\min(\vol(B), \vol(\overline{B})) \geq 0.5 \vol(S)$.
	This gives $\max(\phi_{\text{out}}(B), \phi_{\text{out}}(\overline{B}) \leq \frac{4 \beta}{\varphi}$.
	In either case, we note that the cut $(B, \overline{B})$ is cluster respecting and satisfies that
	$\max(\phi_{\text{out}}(B), \phi_{\text{out}}(\overline{B})) \leq \frac{4 \beta}{\varphi}$.

	%%%%%%%%%%%%%%%%%%%%%%%%%%%%%%%%%%%%%%%%%%%%%%%%%%
	%%%%%%%%%%%%%%%%%%%%%%%%%%%%%%%%%%%%%%%%%%%%%%%%%%
	%%%%%%%%%%%%%%%%%%%%%%%%%%%%%%%%%%%%%%%%%%%%%%%%%%

	\textbf{Case 2:} Now, we rule out the case $\sum_{i \in I_{\text{small}}} \vol(X_i) \geq \vol(S)/10$. 
	Note that all $X_i$'s are disjoint (as they are contained in different
	clusters). %So we have $|S| \leq \sum_{i \in [k]} |X_i|$. 
	Observe

	$$|E(S, \overline{S})| \geq \sum_{i \in I_{\text{small}}} |E(X_i, Y_i)| \geq \varphi \sum_{i \in I_{\text{small}}} \vol(X_i) \geq \varphi \cdot \frac{\vol(S)}{10}.$$

	The first step follows because for all $i \in I_{\text{small}}$,  $E(X_i, Y_i) \subseteq E(S, \overline{S})$. 
	Moreover, $(X_i, Y_i)$ is a cut of the cluster $C_i$ with $\phi_{\text{in}}(C_i) \geq \varphi$.
	However, this means that $\phi_{\text{out}}(S) \geq \varphi/10$. But $G$ is $(k, \varphi, \e)$-clusterable
	and $S$ is the sparset cut in $G$. Thus, it better hold that $\phi_{\text{out}}(S) \leq \e < \varphi/10$ 
	which leads to a contradiction.
\end{proof}

%%%%%%%%%%%%%%%%%%%%%%%%%%%%%%%%%%%%%%%%%%%%%%%%%%

The following claim is an aside which proves the tightness of Lemma \ref{lem:cluster-respecting-cuts}.
This claim shows that the $O(1/\varphi)$ loss in approximation to conductance is inherent when we 
take a cluster respecting cut as opposed to the sparsest cut. The proof can be found in 
Appendix \ref{sec:hoeffding:proof}.

\begin{restatable}[Tightness of Lemma~\ref{lem:cluster-respecting-cuts}]{claim}{clmclusterresp} \label{clm:cluster:resp:lb}
	Let $d > 3$ be a constant. 
	Then, there exist a $(2, \varphi, \e)$ clusterable, $d$-regular graph $G$ such that
\[\min \left(\phi_{\text{out}}(B), \phi_{\text{out}}(V\setminus B) \right) \geq \phi_{\text{in}}(G) \text{,}\]
	where $(B, \overline{B})$ is the unique cluster respecting cut of $G$.
\end{restatable}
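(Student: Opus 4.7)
The plan is to exhibit a $d$-regular $(2,\varphi,\epsilon)$-clusterable graph $G$ with balanced cluster sizes, for which the claim becomes essentially a definitional check once the cluster respecting cut is identified. The construction is standard; the non-trivial work is in verifying uniqueness of the $(2,\varphi,\epsilon)$-clustering and hence of the cluster respecting cut.

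For the construction I would take two vertex-disjoint $d$-regular expanders $H_1, H_2$ on $n/2$ vertices each, each with inner conductance at least $2\varphi$ (which exist for every constant $d > 3$, e.g., via Ramanujan graphs or random $d$-regular graphs, provided $\varphi$ is below the relevant edge-expansion constant of the model). Then I would replace $\epsilon d n / 2$ edges in $H_1 \sqcup H_2$ with $\epsilon d n / 2$ cross edges across the bipartition $(V(H_1), V(H_2))$, picking a matching-like structure so that each vertex loses as many internal edges as it gains cross edges and the final graph $G$ remains $d$-regular. I then set $C_1 = V(H_1)$ and $C_2 = V(H_2)$.

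The first step is to verify that $G$ is $(2,\varphi,\epsilon)$-clusterable with clusters $(C_1, C_2)$: the outer conductance of each cluster equals $\epsilon$ by construction, and a routine edge-deletion estimate shows that removing $O(\epsilon d n)$ edges from an expander of inner conductance $\geq 2\varphi$ decreases that inner conductance by at most $O(\epsilon/\varphi)$, which is absorbed by the slack since $\epsilon/\varphi^2$ is small. The second, more delicate step, which I expect to be the main obstacle, is to argue that $(C_1, C_2)$ is the \emph{unique} $(2,\varphi,\epsilon)$-clustering of $G$: any alternative clustering $(C_1', C_2')$ would induce a cut of outer conductance at most $\epsilon$ different from $(C_1, C_2)$, but such a cut must separate a non-trivial fraction of $C_1$ or $C_2$, and the inner expansion of each original cluster then forces that cut to have conductance $\Omega(\varphi) \gg \epsilon$, contradicting the assumed outer conductance bound. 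This uniqueness argument is a standard cluster-stability calculation.

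The final inequality is then immediate. Since $|C_1| = |C_2| = n/2$, the quantities $\phi_{\text{out}}^G(C_1)$ and $\phi_{\text{out}}^G(C_2)$ both equal $|E(C_1, C_2)|/(dn/2)$, so $\min(\phi_{\text{out}}(B), \phi_{\text{out}}(V\setminus B)) = \phi_V^G(C_1)$. By the definition $\phi_{\text{in}}(G) = \phi_{\text{in}}^G(V) = \min_{S \subseteq V,\, 0 < |S| \leq n/2} \phi_V^G(S)$, and since $C_1$ is an eligible $S$ in that minimization, we get $\phi_{\text{in}}(G) \leq \phi_V^G(C_1) = \min(\phi_{\text{out}}(B), \phi_{\text{out}}(V\setminus B))$, which is the claimed inequality. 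The take-away is that the substance of the claim lies in the existence of a graph satisfying all the hypotheses (balanced sizes, inner/outer conductance bounds, uniqueness of the clustering), while the inequality itself records the trivial fact that the sparsest cut cannot be beaten by any specific cut appearing in its defining minimization.
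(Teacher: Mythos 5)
There is a genuine gap: your construction proves the displayed inequality only in the vacuous sense that you yourself point out in your last paragraph, and this misses the actual content of the claim, which is to certify that the $O(1/\varphi)$ loss in Lemma~\ref{lem:cluster-respecting-cuts} is inherent. In your graph (two good expanders joined by $\e dn/2$ cross edges) the sparsest cut of $G$ essentially \emph{is} the cluster respecting cut, so $\min\left(\phi_{\text{out}}(B), \phi_{\text{out}}(\overline{B})\right)$ and $\phi_{\text{in}}(G)$ agree up to constants and no separation is exhibited. The paper's instance is engineered precisely to create a gap: each cluster is itself a ``dumbbell'' --- e.g.\ $C_1 = X_1 \cup Y_1$ with $|Y_1| \approx 2\e m$ and only about $\e\varphi d m$ edges between $X_1$ and $Y_1$, so that the inner conductance of $C_1$ is barely $\varphi$ --- and the inter-cluster edges are attached only to the small pieces ($Y_1$ matched to $Y_2$). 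Then the non-cluster-respecting cut $S = X_1 \cup Z_2$ cuts only the two sparse internal interfaces, giving $\phi_{\text{in}}(G) \leq \phi_{\text{out}}(S) \approx 2\e\varphi$, while the unique cluster respecting cut has conductance $\approx \e$; this is the $\Omega(1/\varphi)$ separation the claim is meant to record. Any proof attempt in which the cluster respecting cut is (near-)optimal among all cuts cannot serve as a tightness example, so the choice of instance, not the final inequality, is where the work lies.

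A secondary point: the step you single out as the main obstacle --- uniqueness of the $(2,\varphi,\e)$-clustering --- is neither needed nor correctly handled. Cluster respecting cuts are defined relative to the given partition $\cC$ (Definition~\ref{def:cluster:respecting:cut}), and for $k=2$ the only nontrivial such cut is $(C_1, C_2)$ itself, so ``unique'' is automatic once the clustering is fixed; the paper's proof accordingly spends no effort on it. Moreover, the stability argument you sketch (any alternative clustering induces a cut of conductance $\Omega(\varphi) \gg \e$) only rules out clusterings that differ from $(C_1,C_2)$ on a large fraction of vertices; it does not exclude small perturbations, so it would not yield exact uniqueness of the clustering even if that were required.
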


The following observation shows that in a cluster respecting cut $(B, V\setminus B)$, the sets $B, \overline{B}$ induce $(k', \varphi, \e)$-clusterable graphs themselves. This is later used to prove Lemma 
\ref{lem:optimum-respects-clusters}.

\begin{observation} \label{obs:inherit}
	 Let $(B, V\setminus B)$ be a cluster 
	respecting cut in $G$  with respect to the partition $\cC$ (Definition \ref{def:cluster:respecting:cut}). Suppose that $B$ contains 
	$k' < k$ clusters in $G$.  For every $S\subseteq V$, let $G\{S\}$ be a graph obtained by adding  $d_x - d^S_x$ self-loops to every 
	vertex $x\in S$, where $d^S_x$ is the degree of vertex $x$ in $S$ and $d_x$ denotes the original degree of $x$ in $G$. Then, 
	we have $G\{B\}$ is $(k', \varphi, \e)$-clusterable and $G\{V\setminus B\}$ is $(k-k', \varphi, \e)$-clusterable.
\end{observation}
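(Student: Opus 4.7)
The plan is to exhibit the natural candidate clusterings of $G[B]$ and $G[V\setminus B]$ and verify the three defining conditions (regularity, inner and outer conductance) from Definition~\ref{def:clusterable}. Since $(B, V\setminus B)$ is cluster respecting, there is an index set $I \subseteq [k]$ with $|I| = k'$ such that $B = \bigcup_{i \in I} C_i$; take the candidate clustering of $G[B]$ to be $\{C_i : i \in I\}$, and symmetrically for $V\setminus B$.

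First I would check regularity: by construction, every $x \in B$ has exactly $d^B_x$ non-loop incident edges together with $d - d^B_x$ self-loops in $G[B]$, so its degree is exactly $d$, and the same holds for $G[V\setminus B]$. Next, for the inner conductance, fix $i \in I$ and $S \subseteq C_i$ with $0 < |S| \leq |C_i|/2$. Since $C_i \subseteq B$, every $G$-edge between $S$ and $C_i \setminus S$ survives in $G[B]$, and the added self-loops contribute nothing to the cut $E_{G[B]}(S, C_i \setminus S)$; therefore $|E_{G[B]}(S, C_i \setminus S)| = |E_G(S, C_i \setminus S)|$, and since the common degree remains $d$, we get $\phi^{G[B]}_{C_i}(S) = \phi^G_{C_i}(S) \geq \varphi$.

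For the outer conductance, observe that any edge of $G[B]$ leaving $C_i$ goes to $B \setminus C_i \subseteq V \setminus C_i$, and again no self-loop crosses this cut, hence
\[
|E_{G[B]}(C_i, B \setminus C_i)| \;\leq\; |E_G(C_i, V \setminus C_i)|,
\]
which (after dividing by $d|C_i|$) yields $\phi^{G[B]}_{\text{out}}(C_i) \leq \phi^G_{\text{out}}(C_i) \leq \e$. Applying exactly the same two arguments to each $C_i \subseteq V \setminus B$ in $G[V\setminus B]$ shows that $\{C_i : i \notin I\}$ is a $(k-k', \varphi, \e)$-clustering of $G[V\setminus B]$.

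There is no real obstacle here; the only point to highlight is that the self-loop padding is designed precisely so that (i) $d$-regularity is preserved, and (ii) self-loops never appear in cut edge sets, so conductance quantities can only improve (or stay the same) when restricting from $G$ to $G[B]$.
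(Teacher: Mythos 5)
Your proposal is correct and follows essentially the same route as the paper's proof: restrict the given clustering to $B$ (resp.\ $V\setminus B$), note that intra-cluster cuts are unchanged so inner conductance stays at least $\varphi$, and that outgoing edges of each $C_i$ in $G[B]$ form a subset of its outgoing edges in $G$, so outer conductance stays at most $\e$. The paper is terser (it does not spell out the self-loop/regularity bookkeeping), but the argument is the same.
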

\begin{proof}
	We will show this observation for $B$. A similar argument holds for $V \setminus B$.
	For every cluster $C_i \subseteq B$, we have
	\[\phi^{G\{B\}}_{\text{out}}(C_i) = \frac{|E(C_i, B \setminus C_i)|}{\vol(C_i)} \leq \frac{\e\cdot \vol(C_i)}{\vol(C_i)} = \e\text{.}\]
	Also, note that for every cluster  $C_i \subseteq B$ and every $S \subseteq C_i$ with $\vol(S) \leq \vol(C_i)/2$ we have
	\[\phi^{G\{B\}}_{\text{in}}(S) = \frac{|E(S, C_i \setminus S)|}{\vol(S)} \geq \varphi \text{.}\]
	Therefore, for every cluster $C_i \subseteq B$ we have $\phi^{G\{B\}}_{\text{out}}(C_i)\leq \e$ and 
	$\phi^{G\{B\}}_{\text{in}}(C_i)\geq \varphi$. Thus, $B$ is $(k', \varphi, \e)$-clusterable.
\end{proof}

As a corollary of Lemma \ref{lem:cluster-respecting-cuts}, we get that there exists a cluster respecting cut which is approximately sparsest, as per Definition \ref{def:wrc}. 
\begin{mycor}\label{cor:wrc}[Some cluster respecting cut is an approximate vertex weighted sparest cut] 
  Let $(S, V \setminus S)$ denote the cut minimizing the sparsity $\spp_G(S)$ (Definition \ref{def:wrc}).
     Then there exists a cluster respecting cut
	(Definition \ref{def:cluster:respecting:cut}) $(B, V \setminus B)$ such that
    $$\spp_G(B) \leq  \frac{8}{\varphi} \cdot \spp_G(S)\text{.}$$
\end{mycor}
\begin{proof}
Let $(S, \bar S)$ denote the cut with the smallest conductance in $G$, with $|S| \leq |V \setminus S|$. Since the graph $G$ is unweighted, from Definition \ref{def:wrc}, we obtain 
    \begin{equation}\label{eq:cut_S}
        \spp_G(S) = \frac{|E(S,V \setminus S)|}{|S| \cdot |V \setminus S|} \geq \frac{|E(S,V \setminus S)|}{|S| \cdot n} = \frac{d}{n} \cdot \frac{|E(S,V \setminus S)|}{d|S| } =   \frac{d}{n} \phi_{\text{out}}^G(S). 
    \end{equation}
By Lemma \ref{lem:cluster-respecting-cuts}, there exists a cluster respecting cut $(B, \bar B)$ with $|B| \leq |V \setminus B|$ 
such that
\begin{equation}\label{eq:S_and_B}
    \phi_{\text{out}}^G(B) \leq \frac{4 \phi_{\text{out}}^G(S)}{\varphi}. 
\end{equation}

By Definition \ref{def:wrc} again, we obtain
\begin{equation}\label{eq:cut_B}
    \spp_G(B) = \frac{|E(B, V \setminus B)}{|B|\cdot  |V \setminus B|} \leq \frac{|E(B, V \setminus B)}{|B|\cdot n/2} = \frac{2d}{n}\cdot \frac{|E(B, V \setminus B)}{d|B|} = \frac{2d}{n}\phi_{\text{out}}^G(B),
\end{equation}
where the inequality follows since $ |V \setminus B | \geq n/2$ by the assumption that $|B| \leq |V \setminus B |.$

Combining Equations \eqref{eq:cut_S}, \eqref{eq:S_and_B} and \eqref{eq:cut_B}, gives 
$$ \spp(B) \leq \frac{2d}{n} \cdot \phi_{\text{out}}^G(B) \leq \frac{2d}{n} \cdot \frac{4 \phi_{\text{out}}^G(S)}{\varphi} \leq \frac{8}{\varphi}\cdot \spp(S).$$
\end{proof}

Next, we present Lemma \ref{lem:optimum-respects-clusters}  which uses Corollary \ref{cor:wrc} to show that there exists a cluster respecting tree whose cost is not more than $O\left(\frac{1}{\varphi} \right)$ times the optimum cost of the graph.

\begin{lemmma} \label{lem:optimum-respects-clusters}
	Let $T^*_G$ be a tree with the optimum Dasgupta cost on $G$. Then, there exists a cluster respecting tree 
	$T$ with respect to $\cC$ on $G$ (Definition \ref{def:cluster:resp:tree}) such that 
	\[\CT_G(T) \leq O\left(\frac{1}{\varphi}\right)\cdot \CT_G(T^*_G) \text{.}\]
	Moreover the contracted tree of $G$ with respect to $\cC = (C_1, C_2, \ldots, C_k)$ 
	(Definition \ref{def:cluster:resp:tree}) is a binary tree.
\end{lemmma}
\begin{proof}
	Let $H = G/ \cC$ denote the contraction of $G$ with respect to $\cC$. Let $\cT_H$ 
	denote the set of trees supported on $k$ leaves (which correspond to the $k$ clusters).
	We will construct a cluster respecting tree $T$ for $G$ with the claimed properties.
	This is done in two phases. In the first phase, we show it is possible to
	obtain a tree $T' \in \cT_H$ by repeatedly applying Corollary \ref{cor:wrc}.
	In phase two, we extend this tree further by refining each of the $k$ leaves in $T'$ further 
	to obtain a tree $T$ for $G$. Assuming the $T'$ after phase one indeed belongs to $\cT_H$, we obtain $T$
	in the following way: For $i \in [k]$, take the leaf $C_i$ (in $T'$) and extend it $T^*_i$, where $T^*_i$ is the tree with optimum Dasgupta cost on induced graph $G\{C_i\}$.
	
	{\noindent \bf Phase 1: Producing $T'$.} We show how to generate $T'$ level by level.
	It will be convenient to attach to each node in $T'$ a set $I \subseteq [k]$ of indices.
	We begin by attaching $[k]$ to the root. An application of Corollary \ref{cor:wrc} produces 
	a cut $(B, \overline{B})$ with sparsity (as per Definition \ref{def:wrc}) within a factor $O(1/\varphi)$ of the 
	sparsest cut in $G$ which additionally is cluster respecting (see Definition 
	\ref{def:cluster:respecting:cut}). Make $B$ and $\overline{B}$ the two children of the root
	and attach to $B$ the index set $I_B \subseteq [k]$ which denotes the clusters contained in 
	$B$ (similarly define $I_{\overline{B}} \subseteq [k]$). Recurse and apply Corollary \ref{cor:wrc} to both $G\{B\}$ and $G\{\overline{B}\}$.  
	Note that we can do this because both $G\{B\}$ (resp. $G\{\overline{B}\}$) are 
	$(k', \varphi, \e)$-clusterable (resp $(k-k', \varphi, \e)$-clusterable) for some 
	$1 \leq k' < k$ (Observation \ref{obs:inherit}).
	Thus, in all, phase 1 returns a tree $T' \in \cT_H$.

	{\noindent \bf Phase 2: Extending $T'$} On termination, as seen earlier, phase 1 produces
	a tree $T'$ whose leaves correspond to the clusters $C_1, C_2, \ldots C_k$ in some order.
	Note that 

	\begin{equation} \label{eq:offset:t}
		\CT_G(T) = \wCT_H(T') + \sum_{i=1}^k \CT_{G\{C_i\}}(T^*_i). 
	\end{equation}

	We will show this cost is at most $O(1/\varphi) \cdot  \CT_G(T^*_G)$. To this end, let us consider another tree $T^{\text{RSC}}$ that is obtained by taking the tree $T'$ and 
	extending it by repeated applications of \emph{exact} recursive sparses cut procedure (with 
	approximation factor being $1$). Thus, the tree $T^{\text{RSC}}$ obtained this way
	is within a factor $O(1/\varphi)$ for the sparsest cut at each step. 
	By Theorem \ref{thm:charikar}, we get 
\begin{equation}
\label{eq:main-1-phi}
\CT_G(T^{\text{RSC}}) \leq O\left(\frac{1}{\varphi}\right)\cdot \CT_G(T^*_G)
\end{equation}	
	Let us rewrite this approximation in more detail. For $i \in [k]$, letting $T^{\text{RSC}}(i)$ 
	denote the tree obtained by applying recursive sparsest cut on each $C_i$. Thus, by \eqref{eq:main-1-phi} we get

	\begin{align}
		O\left(\frac{1}{\varphi}\right)\cdot \CT_G(T^*_G) \geq \CT_G(T^{\text{RSC}})  = \wCT_H(T') + \sum_{i \in [k]} \CT_{G\{C_i\}}(T^{\text{RSC}}(i))  
\label{eq:offset:rsc} 
	\end{align}
	Now two cases arise. First, consider the case where $\wCT_H(T') \leq \sum_{i\in [k]} \CT_{G\{C_i\}}(T^*_i)$.
	In this case, by \eqref{eq:offset:t} we have 
	$\CT_G(T^*_G) \leq \CT_G(T) \leq 2 \sum_{i\in [k]} \CT_{G\{C_i\}}(T^*_i)$. Also, %by Theorem \ref{thm:cutting:cable:graphs} 
	we have 
	$\CT_G(T^*_G) \geq  \sum_{i\in [k]} \CT_{G\{C_i\}}(T^*_i)$. Thus, we have 
	\[\CT_G(T) \leq 2\cdot \CT_G(T^*_G) \text{.}\]	
	Now consider the other case where $\wCT_H(T') >  \sum_{i\in [k]} \CT_{G\{C_i\}}(T^*_i)$.
	Recall from \eqref{eq:offset:t} that 
	$$\CT_G(T) = \wCT_H(T') + \sum_{i \in [k]} \CT_{G\{C_i\}}(T^*_i) \leq 2 \cdot \wCT_H(T') \leq O\left(\frac{1}{\varphi}\right) \cdot \CT_G(T^*_G)\text{,} $$
	where, the last inequality holds by \eqref{eq:offset:rsc}. Moreover, by construction, the tree $T'$ 
	is the a binary tree (and is the contracted tree of $T$) 
	as desired.

\end{proof}

Finally, we prove Lemma \ref{lem:dcostg:tstar}. 
\optcost*

\begin{proof}[Proof of Lemma \ref{lem:dcostg:tstar}] By Lemma \ref{lem:optimum-respects-clusters}, we know there is a cluster respecting tree $T$ (Definition \ref{def:cluster:resp:tree}) on $n$ leaves such that
	\begin{equation} \label{eq:tstar:to:tmod}
		\CT_G(T^*_G) \geq \Omega(\varphi) \cdot \CT_G(T)
	\end{equation}
Let  $T'$  be the corresponding contracted tree obtained from $T$ (Definition \ref{def:cluster:resp:tree}). Note that by construction  $T$ in Lemma \ref{lem:optimum-respects-clusters} we have $T'$ has $k$ leaves such that for every $i\in [k]$, we extend the leaf corresponding to $C_i$, by tree $T^*_i$, where, $T^*_i$ is the tree with the optimum Dasgupta cost on the induced subgraph $G\{C_i\}$.

Thus, we have
\begin{equation} \label{eq:2offset:t}
	\CT_G(T) = \wCT_H(T') + \sum_{i\in[k]} \CT_{G\{C_i\}}(T^*_i)
	\end{equation}
Therefore, we have

\begin{align}
\label{eq:lw-jam}
&\CT_G(T^*_G) \nonumber \\
&\geq \Omega(\varphi) \cdot \CT_G(T) &&\text{By \eqref{eq:tstar:to:tmod}}  \nonumber \\
&= \Omega(\varphi) \cdot \left( \wCT_H(T') + \sum_{i\in[k]} \CT_{G\{C_i\}}(T^*_i) \right) &&\text{By \eqref{eq:2offset:t}} \nonumber \\
&\geq \Omega(\varphi) \cdot \left( \wCT_H(T^*_H) + \sum_{i\in[k]} \CT_{G\{C_i\}}(T^*_i) \right) &&\text{By optimality of  $T^*_H$ on $H$}  \nonumber \\
&\geq \Omega(\varphi) \cdot \wCT_H(T^*_H) + \Omega(\varphi^7)\cdot \textsc{TotalClustersCost}(G) &&\text{By Theorem \ref{thm:Gk}} \nonumber \\
&\geq \Omega(\varphi^7) \cdot \left( \wCT_H(T^*_H) +  \textsc{TotalClustersCost}(G) \right). 
\end{align}

Also by Theorem \ref{thm:Gk} we have
\begin{align}
\CT_G(T^*_G) 
&\leq \wCT_H(T^*_H) + \sum_{i=1}^k \CT_{G\{C_i\}}(T^*_i)  \nonumber \\
&\leq  \wCT_H(T^*_H) + \textsc{TotalClustersCost}(G) \text{.} \nonumber \\
\end{align}

\end{proof}

\subsubsection{Proof of Theorem \ref{thm:estdcost}} \label{sec:proof:prepros}

%%%%%%%%%%%%%%%%%%%%%%%%%%%%%%%%%%%%%%%%%%
Finally, to prove Theorem \ref{thm:estdcost}  we prove an intermediate step, which is Lemma \ref{lem:general:apx:cost}.

\begin{lemmma} \label{lem:general:apx:cost}
	Let $H = G/\cC$ be the contraction of $G$ with respect to the partition $\cC $ (Definition \ref{def:weighetdcontracted}) 
	and let $T^*_H$ denote an optimum weighted Dasgupta tree for $H$. Let $0 < a < 1 < a'$ and 
	$b \leq  \frac{a}{k d \sqrt{\log k}}$. Let  
	$\widetilde{H}= \left([k], {[k] \choose 2}, \widetilde{W}, \widetilde{w}\right)$ 
	be an approximation of $H$ such that the following hold: 

	\begin{itemize}[leftmargin=*]
		\item For all $i\in[k]$, $ \frac{w(i)}{2}\leq \wt{w}(i) \leq 2 \cdot w(i)$, and 
		\item $a \cdot \wCT_{H}(T^*_{H}) - b \cdot mn \leq \wCT_{\wt{H}}(\widetilde{T}) \leq O\left(a' \sqrt{\log k} \cdot \wCT_{H}(T^*_{H}) + b \cdot mn \cdot \sqrt{\log k}\right)$
	\end{itemize}
	where, $\widetilde{T} = \textsc{WRSC}(\widetilde{H})$ denote a hierarchical clustering tree constructed on the 
	graph $\widetilde{H}$ using the recursive sparsest cut algorithm.
	We set $\estg = \frac{1}{a} \cdot \wCT_{\wt{H}}(\wt{T}) + \frac{b}{a} mn + \textsc{TotalClustersCost}(G)$.
	Then we have
	\[\CT_G(T^*_G) \leq \estg \leq O\left(\frac{a'\sqrt{\log k}}{a\cdot \varphi^7}\right)\cdot \CT_G(T^*_G)\text{.}\]
\end{lemmma}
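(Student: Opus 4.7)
\textbf{Proof plan for Lemma \ref{lem:general:apx:cost}.}

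The strategy is to sandwich $\estg$ between multiples of $\wCT_Q(T^*_Q) + d \sum_i |C_i|^2$ and then invoke Lemma \ref{lem:dcostg:tstar}, which asserts that this latter quantity is itself an $O(1/\varphi)$-factor two-sided approximation to $\CT_G(T^*_G)$. Two auxiliary ingredients will be needed. The first is the elementary lower bound $\CT_G(T^*_G) \geq \Omega(\varphi \cdot dn^2 / k)$: this follows from Corollary~\ref{cor:cutting:cable:graphs}, which gives $\CT_G(T^*_G) \geq \frac{\varphi}{20} d \sum_i |C_i|^2$, combined with $\sum_i |C_i|^2 \geq n^2/k$ (Cauchy--Schwarz). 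The second is the premise on $\widehat{c_i}$, namely $|C_i|/2 \leq \widehat{c_i} \leq 2|C_i|$, which implies $\tfrac{1}{4} d \sum_i |C_i|^2 \leq d \sum_i \widehat{c_i}^2 \leq 4 d \sum_i |C_i|^2$.

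\emph{Lower bound on $\estg$.} Apply the left inequality of bullet~2 to obtain $\wCT_{\wt Q}(\wt T) \geq a \cdot \wCT_Q(T^*_Q) - b \cdot dn^2$. Dividing by $a$ and adding $\frac{b}{a} dn^2 + d\sum_i \widehat{c_i}^2$ yields
\[
\estg \;\geq\; \wCT_Q(T^*_Q) + d\sum_i \widehat{c_i}^2 \;\geq\; \wCT_Q(T^*_Q) + \tfrac{1}{4} d\sum_i |C_i|^2 \;\geq\; \tfrac{1}{4}\bigl(\wCT_Q(T^*_Q) + d\sum_i |C_i|^2\bigr),
\]
and by Lemma \ref{lem:dcostg:tstar} the right-hand side is at least $\tfrac{1}{4} \CT_G(T^*_G)$, giving $\estg \geq \Omega(\CT_G(T^*_G))$ as required (the constant being absorbed into the approximation factor).

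\emph{Upper bound on $\estg$.} Apply the right inequality of bullet~2 to get $\frac{1}{a}\wCT_{\wt Q}(\wt T) \leq O\!\left(\tfrac{a'\sqrt{\log k}}{a}\right)\wCT_Q(T^*_Q) + O\!\left(\tfrac{b\sqrt{\log k}}{a}\right) dn^2$. The hypothesis $b \leq a/(k\sqrt{\log k})$ forces the additive error term to satisfy $\frac{b}{a}\sqrt{\log k}\cdot dn^2 \leq dn^2/k$, and the same bound controls $\frac{b}{a} dn^2$ in the definition of $\estg$. Combining with $d\sum_i \widehat{c_i}^2 \leq 4 d \sum_i |C_i|^2$ we obtain
\[
\estg \;\leq\; O\!\left(\tfrac{a'\sqrt{\log k}}{a}\right)\wCT_Q(T^*_Q) + O(1)\cdot d\sum_i |C_i|^2 + O(dn^2/k).
\]
Since $a' \geq 1$ and $\sqrt{\log k}/a \geq 1$ (as $a \leq 1$ and $k \geq 2$), the first two terms are dominated by $O(a'\sqrt{\log k}/a) \cdot (\wCT_Q(T^*_Q) + d\sum_i |C_i|^2)$, which by Lemma~\ref{lem:dcostg:tstar} is at most $O(a'\sqrt{\log k}/(a\varphi)) \cdot \CT_G(T^*_G)$. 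For the last term, the auxiliary lower bound $\CT_G(T^*_G) \geq \Omega(\varphi \cdot dn^2/k)$ gives $dn^2/k \leq O(1/\varphi)\cdot \CT_G(T^*_G)$, which is likewise absorbed. The desired upper bound follows.

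\emph{Main obstacle.} The principal subtlety is bookkeeping the additive error term $b\cdot dn^2$: it appears both inside the bullet~2 guarantee on $\wCT_{\wt Q}(\wt T)$ (blown up by $1/a$ and $\sqrt{\log k}$) and as an explicit correction in the definition of $\estg$. Showing that both copies are absorbed into $O(a'\sqrt{\log k}/(a\varphi))\cdot \CT_G(T^*_G)$ is exactly what forces the hypothesis $b \leq a/(k\sqrt{\log k})$, together with the expander-based lower bound $\CT_G(T^*_G) \geq \Omega(\varphi dn^2/k)$ that converts the residual $dn^2/k$ into a fraction of the optimum.
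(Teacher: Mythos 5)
Your proof is correct and follows essentially the same route as the paper: substitute the premise bounds on $\wCT_{\wt{Q}}(\wt{T})$ into the definition of $\estg$, sandwich $\wCT_Q(T^*_Q) + d\sum_i |C_i|^2$ against $\CT_G(T^*_G)$ via Lemma~\ref{lem:dcostg:tstar}, and absorb the additive $dn^2/k$ term using $b \leq a/(k\sqrt{\log k})$ together with Corollary~\ref{cor:cutting:cable:graphs} and $\sum_i |C_i|^2 \geq n^2/k$. The only divergence is in the lower bound, where you honestly track $\widehat{c_i} \geq |C_i|/2$ and therefore obtain only $\estg \geq \tfrac{1}{4}\CT_G(T^*_G)$ rather than the stated $\CT_G(T^*_G) \leq \estg$; the paper avoids this factor only by silently invoking Lemma~\ref{lem:dcostg:tstar} with $\widehat{c_i}$ in place of $|C_i|$ (which strictly requires $\widehat{c_i} \geq |C_i|$), so this constant-factor discrepancy reflects a sloppiness in the lemma statement and the paper's own proof rather than a flaw in your argument.
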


\begin{proof}
	Let $\widetilde{H}= \left([k], {[k] \choose 2}, \widetilde{W}, \widetilde{w}\right)$
	denote the graph defined in the premise. Let $\widetilde{T} = \textsc{RSC}(\widetilde{H})$ 
	denote a hierarchical clustering tree constructed on the graph $\widetilde{H}$ using the recursive 
	sparsest cut algorithm. We have that

	\begin{equation}
		\label{eq:est:tstar:gen}
		a \cdot \wCT_{H}(T^*_{H}) - b\cdot mn   \leq \wCT_{\wt{H}}(\widetilde{T}) \leq O\left(a' \cdot \sqrt{\log k} \cdot \wCT_{H}(T^*_{H}) + b\cdot mn \sqrt{\log k} \right)
	\end{equation}
	Note that
	\begin{equation}
		\label{eq:est-val:gen}
		\estg = \estgen \text{.}
	\end{equation}
Therefore, by \eqref{eq:est:tstar:gen} and \eqref{eq:est-val:gen} we have
	\begin{align}
	\label{eq:simple-next}
		&\wCT_{H}(T^*_{H})+ \textsc{TotalClustersCost}(G)  \leq \estg  \nonumber\\
		&\leq O\left(\frac{a' \cdot \sqrt{\log k}}{a} \cdot \wCT_{H}(T^*_{H}) + \frac{b\cdot mn}{a} \cdot \sqrt{\log k}  \right) + \textsc{TotalClustersCost}(G). 
	\end{align}
Let $T^*_G$ denote a Dasgupta tree with optimum cost for $G$. Then, by Lemma~\ref{lem:dcostg:tstar} we have

	\begin{equation} \label{eq:dcostg:tstar}
		\CT_G(T^*_G) \leq \wCT_{H}(T^*_{H})+ \textsc{TotalClustersCost}(G) \leq \left(\frac{1}{\varphi^{7}}\right)\cdot \CT_G(T^*_G).
	\end{equation}
	By the first part of both \eqref{eq:simple-next} and \eqref{eq:dcostg:tstar} we have
	
	\begin{equation}
		\label{eq:t*g-min}
		\CT_G(T^*_G) \leq  \estg. 
	\end{equation}

We also have
	\begin{align}
	\label{eq:cost-done:)}
	&\estg  \nonumber\\
	&\leq O\left(\frac{a' \cdot \sqrt{\log k}}{a}\cdot \wCT_{H}(T^*_{H}) + \frac{b\cdot mn \cdot \sqrt{\log k}}{a}  \right) + \textsc{TotalClustersCost}(G) &&\text{By  \eqref{eq:simple-next}} \nonumber\\
	&\leq O\left( \frac{a' \sqrt{\log k}}{a\cdot \varphi^{7}} \right) \CT_G(T^*_G)  + O\left(\frac{b\cdot d \cdot n^2 \cdot \sqrt{\log k}}{a}  \right) + \textsc{TotalClustersCost}(G) &&\text{By \eqref{eq:dcostg:tstar}} \nonumber\\
	&\leq O\left( \frac{a' \sqrt{\log k} }{a \cdot \varphi^{7}} \right) \CT_G(T^*_G) + O\left(\frac{n^2}{k}\right)+  \textsc{TotalClustersCost}(G) &&\text{As $b \leq \frac{a}{d \cdot k \sqrt{\log k}}$} \nonumber\\
	&\leq O\left( \frac{a' \sqrt{\log k}}{a \varphi^{7}} \right) \CT_G(T^*_G) + O(1)\cdot \textsc{TotalClustersCost}(G)  &&\text{By Observation \ref{lem:lw-cost-exp}}\nonumber\\
	&\leq O\left( \frac{a' \sqrt{\log k}}{a \varphi^{7}} \right) \CT_G(T^*_G)+ O\left(\frac{1}{\varphi^{6}}\right)\cdot \CT_G(T^*_G) &&\text{By Theorem \ref{thm:Gk}} \nonumber\\
	&\leq O\left( \frac{a' \sqrt{\log k}}{a \cdot \varphi^{7}} \right) \CT_G(T^*_G). 
	\end{align}

	Therefore, by \eqref{eq:t*g-min} and \eqref{eq:cost-done:)} we have
	\[\CT_G(T^*_G) \leq \estg \leq O\left( \frac{a' \sqrt{ \log k}}{a\cdot \varphi^7} \right)\cdot \CT_G(T^*_G) \text{.}\]

\end{proof}

\thmestdcost*

\begin{proof}
Let $H = G/\cC$ be the contraction of $G$ with respect to the partition $\cC $ (Definition \ref{def:weighetdcontracted}) and  let $T^*_H$ denote an optimum weighted Dasgupta tree for $H$. Let $\widetilde{H}=\left([k], {[k] \choose 2}, \widetilde{W}, \widetilde{w} \right)$ be the graph obtained by 
	$\textsc{ApproxContractedGraph}(G, \xi, \mathcal{D})$ (Algorithm \ref{alg:quadratic}). Let $\widetilde{T} = \textsc{RSC}(\widetilde{H})$ denote a hierarchical clustering tree constructed on the graph $\widetilde{H}$ using the recursive sparsest cut algorithm. Therefore, by Lemma \ref{lem:est:tstar} with probability at least $ 1-2\cdot n^{-100}$ we have
	\begin{equation}
\label{eq:est:tstar}
	\Omega(\varphi^2)\cdot \wCT_{H}(T^*_{H}) - \xi m n k^2   \leq \wCT_{\wt{H}}(\widetilde{T}) \leq O\left( \frac{\sqrt{\log k}}{\varphi^2}\cdot \wCT_{H}(T^*_{H}) + \xi  m n k^2 \sqrt{\log k} \right).
	\end{equation}
	Note that as per line \eqref{ln:est-set} of Algorithm \ref{alg:tree} we estimate the Dasgupta cost of $G$ by 
	\begin{equation}
	\label{eq:est-val}
	\estg = \est \text{.}
	\end{equation}

	Set $a = c\cdot\varphi^2$, where $c$ is the hidden constant in $\Omega(\varphi^2)$. Set $a' = 1/\varphi^2, b=\xi\cdot k^2$,  where  $\xi = \frac{c\cdot \varphi^2}{d \cdot k^3 \sqrt{\log k}}$ as per line 1 of Algorithm \ref{alg:tree}.
	Thus, $b =  \frac{c\cdot \varphi^2}{d \cdot k \sqrt{\log k}} \leq \frac{a}{d \cdot k \sqrt{\log k}}$. So, we can apply Lemma \ref{lem:general:apx:cost}, which gives
	\[\CT_G(T^*_G) \leq \estg \leq O\left( \frac{a' \sqrt{ \log k}}{a\cdot \varphi^7} \right)\cdot  \CT_G(T^*_G)  = \frac{\sqrt{\log k}}{\varphi^{11}} \cdot \CT_G(T^*_G) \text{.}\]
\textbf{Running Time:} Now, we prove the running time bound. First, the $\textsc{EstimatedCost}$ procedure calls $\textsc{WeightedDotProductOracle}$, which by Theorem \ref{thm:wdp} has running time $O^*\left(n^{1/2+O(\e/\varphi^2)}\cdot \left(\frac{1}{\xi}\right)^{O(1)}\right)$. 

Then, the $\textsc{EstimatedCost}$ procedure calls the $\textsc{ApproxContractedGraph}$ 
	procedure. 
    
By Theorem \ref{thm:contracted-graph}, this has running time $O^*\left(n^{1/2+O(\e/\varphi^2)}\cdot \left( \frac{1}{\xi}\right)^{O(1)}\right)$. Finally, recall the procedure $\text{WRSC}$ runs in time $\text{poly}(k)$. 
Therefore, the overall running time of $\textsc{EstimatedCost}$ procedure is seen to be 
	${n^{1/2+O(\e/\varphi^2)} \cdot \left(\frac{d \cdot k \cdot \log n}{\varphi \cdot \xi}\right)^{O(1)}}$. 
    Substituting in 
    \[\xi = \frac{\varphi^2}{d\cdot  k^3 \cdot \sqrt{\log k}},\] we get the required running time. 

	%$n^{1/2+O(\e/\varphi^2)} \cdot k^{O(1)} \cdot (\log n)^{O(1)} \cdot (1/\varphi)^{O(1)}$.
	Finally, we bound the number of seed queries issued. First, let us
	consider the number of seed queries required by the $\textsc{ApproxContractedGraph}$ procedure to estimate the cluster sizes $\widetilde{w}(i)$.  These quantities are estimated to within a multiplicative $(1 \pm \delta)$ factor
	with $\delta = \frac{\xi}{512\cdot k^2\cdot n^{40\cdot \epsilon/\varphi^2}}$
	By simple Chernoff bounds, this can be done using  

	$$O\left( \frac{\log n \cdot k^2}{\delta^2} \right) = O\left( \frac{n^{80 \e/\varphi^2} \cdot k^{6}  \cdot \log n}{\xi^2} \right)$$ seeds. 
    Let us now bound the number of samples taken by this procedure by plugging in the value of $\xi = \frac{\varphi^2}{d \cdot  k^3 \cdot \sqrt{\log k}}$. This gives 
 $$s  = \frac{10^9\cdot \log n\cdot k^{12} \cdot n^{80\cdot \epsilon/\varphi^2} \cdot \log k \cdot d_{\text{avg}}^2}{\varphi^4}.$$

    Then, by Theorem \ref{thm:Gk} the number of seeds taken to compute $\textsc{TotalClustersCost}$ is $n^{1/3}\cdot \left(\frac{k\cdot \log n}{\varphi}\right)^{O(1)}$.
    
    Combining, we obtain that the total number of seed queries issued is $O^*\left(n^{1/3}+ n^{O(\e/\varphi^2)}\cdot (d \cdot )^{O(1)}\right)$ as claimed. 
\end{proof}

\subsection{Proof of Lemma \ref{lem:var}, Lemma \ref{lem:semi-supervise-model} and Claim \ref{clm:cluster:resp:lb}} \label{sec:hoeffding:proof}
\lemhoeffding*

\begin{proof}
	Let $a = \mathds{1}_A^T \cdot (U_{[k]}\Sigma_{[k]}^{1/2})$ and 
	$b = (\Sigma_{[k]}^{1/2}U_{[k]}^T) \mathds{1}_B$. This allows us
	to write the first term on LHS of Equation \eqref{eq:apx-dp} as $\inner{a}{b}$. 
	Analogously, define $a' = \frac{|A|}{|S_A|} \cdot \mathds{1}_{S_A}^T \cdot (U_{[k]} \Sigma_{[k]}^{1/2})$
	and $b' = \frac{|B|}{|S_B|} \cdot (\Sigma_{[k]}^{1/2}U_{[k]}^T) \mathds{1}_{S_B}$ so that
	the second term on LHS of Equation \eqref{eq:apx-dp} can be written as $\inner{a'}{b'}$. With this setup, we have the following:

	\begin{align}
		\left|\inner{a}{b} - \inner{a'}{b'} \right| = \left|\inner{a}{b-b'} + \inner{a-a'}{b'} \right| \leq \|a\|_2\|b-b'\|_2 + \|a-a'\|_2\|b'\|_2 \label{eq:upperbd-apx-dp}
	\end{align}
	where the last inequality follows by triangle inequality and Cauchy-Schwarz. Now by expanding out $a$ we get 
	\[a = \sum_{x \in A} \mathds{1}_x^T \left(U_{[k]}\Sigma_{[k]}^{1/2}\right) \text{.}\]
We show this quantity is estimated coordinate wise very well by the vector $a'\in \R^k$ defined as follows: 
\[a'= \frac{|A|}{|S_A|} \cdot \sum_{x \in S_A} \mathds{1}_x^T \left(U_{[k]}\Sigma_{[k]}^{1/2}\right) \text{.}\] 
Note that for all $i \in [k]$ we have $\E[a'(i)] = a(i)$. For any $i \in [k]$ we first show that with high probability $|a(i)-a'(i)|$ is small,  then by union bound we prove that $||a'-a||$ is small. Note that for every $i\in [k]$ we have $a(i)=\sum_{x \in A} \sqrt{\sigma_i}\cdot u_i(x)$, where $u_i$ is the $i$-th eigenvector of $M$ and $\sigma_i$ is the $i$-th eigenvalue of $M$. %Recall that we have $\sigma_i=1-\frac{\lambda_i}{2}\leq 1$. 
For every $x\in S_A$, let $Z_x$ be a random variable defined as $Z_x= |A|\cdot  \sqrt{\sigma_i}\cdot u_i(x)$. Thus, we have $a'(i)=\frac{1}{S_A}\cdot \sum_{x\in S_A} Z_x$, and $a(i)=\frac{1}{S_A}\cdot \sum_{x\in S_A} \E[Z_x]$. Therefore, by Hoeffding Bound 
	(Fact \ref{fact:hoeffding}), we have
	\begin{equation}
	\label{eq:hoef}
	\text{Pr} \left[|a'(i) - a(i)| \geq t \right] \leq \exp\left(\frac{-2|S_A| \cdot t^2}{\left(2 \cdot \max_{x\in S_A} |Z_x| \right)^2}\right). 
	\end{equation}
Next we need to bound $\max_{x\in S_A} |Z_x|$. Note that for every $i\in[k]$ and every $x\in V$ we have
	\begin{align}
	\label{eq:max-zx}
		|Z_x| &= |A|\cdot \sqrt {\sigma_i}\cdot |u_i(x)| \nonumber\\
		&\leq |A|\cdot ||u_i||_{\infty} &&\text{As $\sigma_i =1-\frac{\lambda_i}{2}\leq 1$} \nonumber\\
		&= |A| \cdot\frac{n^{20 \e/\varphi^2}}{\sqrt{\min_{i\in k} |C_i|}} &&\text{By Lemma \ref{lem:l-inf-bnd}} \nonumber\\
		&\leq \frac{|A|\cdot n^{20 \e/\varphi^2}}{\sqrt{\frac{n}{k}}} &&\text{As $\forall i\in k, |C_i| \approx \frac{n}{k}$}. 
	\end{align}
	Let $w_A=   \frac{|A|\cdot n^{20 \e/\varphi^2}}{\sqrt{\frac{n}{k}}}$ and $\beta = \frac{\xi}{2\cdot k \cdot n^{20 \e/\varphi^2}}$. By \eqref{eq:hoef} and \eqref{eq:max-zx} we have
	\begin{align*}
	\text{Pr} \left[|a'(i) - a(i)| \geq \beta\cdot w_A \right] \leq \exp\left(\frac{-2|S_A| \cdot (\beta\cdot w_A)^2}{4\cdot w_A^2 }\right) = \exp\left(-|S_A| \beta^2/2\right) \leq n^{-200k}\text{,}
	\end{align*}
	where the last inequality holds by choice of $|S_A| \geq  \frac{1600\cdot k^3\cdot n^{40 \epsilon/\varphi^2}\cdot \log n}{\xi^2}\geq 400\cdot k\cdot \log n\cdot \frac{1}{\beta^2}$. Thus, by a union bound over all $i \in [k]$, with  probability at least $1-k\cdot n^{-200k}$ we have 
\begin{equation}
\label{eq:aa;}
||a-a'||_2=\sqrt{\sum_{i=1}^k (a'(i) - a(i))^2 }\leq \sqrt{k}\cdot \beta\cdot w_A.
\end{equation}	
A similar analysis shows that with  probability at least $1-k\cdot n^{-200k}$ we have 
\begin{equation}
\label{eq:bb'}
||b-b'||_2\leq \sqrt{k}\cdot \beta\cdot w_B\text{,}
\end{equation}	
where, $w_B=   \frac{|B|\cdot n^{20 \e/\varphi^2}}{\sqrt{\frac{n}{k}}}$. Also note that 
\begin{align}
\label{eq:anorm}
||a||_2 &= \left|\left|\mathds{1}_A^T \left(U_{[k]}\Sigma_{[k]}^{1/2}\right) \right|\right|_2 \nonumber\\
&\leq \left|\left| \mathds{1}_A^T   \right|\right|_2 \cdot \left|\left| U_{[k]} \right|\right|_2 \cdot \left|\left| \Sigma_{[k]}^{1/2}\right|\right|_2 \nonumber\\
&\leq \sqrt{|A|} \cdot \max_{i\in [k]}\sqrt{\sigma_i} &&\text{As $||U_{[k]}||_2=1$, and $\Sigma$ is diagonal} \nonumber\\
&\leq \sqrt{n}&&\text{As $\sigma_i=1-\frac{\lambda_i}{2}\leq 1$}
\end{align}
Similarly we have
\begin{equation}
\label{eq:bnorm}
||b||_2\leq   \sqrt{n}. 
\end{equation}
Thus by \eqref{eq:upperbd-apx-dp} we get
	\begin{align*}
		\|a\|_2\|b-b'\|_2 + \|a-a'\|_2\|b'\|_2 &\leq  \sqrt{k\cdot n}\cdot \beta\cdot (w_A+w_B) &&\text{By \eqref{eq:aa;}, \eqref{eq:bb'}, \eqref{eq:anorm}, \eqref{eq:bnorm}}  \\
		&\leq \xi\cdot n &&\text{As $w_A+w_B\leq   \frac{2\cdot n\cdot n^{20 \e/\varphi^2}}{\sqrt{\frac{n}{k}}}$ and $\beta = \frac{\xi}{2\cdot k \cdot n^{20 \e/\varphi^2}}$}
	\end{align*}
Therefore, with probability at least $1 - 2\cdot k \cdot n^{-200k} \geq 1 - n^{-100k}$ we have
\[
\left| \mathds{1}_A^T \cdot (U_{[k]}\Sigma_{[k]}U_{[k]}^T) \ 
		\mathds{1}_B \ 
		- \frac{|A|\cdot |B|}{|S_A|\cdot |S_B|} \cdot \mathds{1}_{S_A}^T (U_{[k]}\Sigma_{[k]}U_{[k]}^T) \mathds{1}_{S_B} \right| \leq \xi\cdot n. 
\]
\end{proof}

\begin{fact} \label{fact:hoeffding} (\textbf{Hoeffding Bounds})
	Let $Z_1, Z_2, \cdots Z_n$ be iid random variables with $Z_i \in [a,b]$ for all $i \in [n]$
	where $-\infty \leq a \leq b \leq \infty$. Then
	\begin{align*}
		&\text{Pr} \left[\frac{1}{n} \cdot \sum |(Z_i - \E[Z_i])| \geq t \right] \leq \exp(-2nt^2/(b-a)^2), \text{ and} \\
		&\text{Pr} \left[\frac{1}{n} \cdot \sum |(Z_i - \E[Z_i])| \leq t \right] \leq \exp(-2nt^2/(b-a)^2). 
	\end{align*}
\end{fact}
 
Next, we prove Lemma \ref{lem:semi-supervise-model}.
\lemsemisupevise*
\begin{proof}
Let $s=|S|$. For $x\in V$, and $r \in [s]$, let $Y_x^r$ be a random variable which is $1$ if the $r$-th sampled vertex is $v$, and $0$ otherwise. Thus $\E[Y_x^r]=\frac{1}{n}$. Observe that $|S_i|=|S\cap C_i|$ is a random variable defined as $\sum_{r=1}^s \sum_{x\in C_i} Y_x^r$ where its expectation is given by
\[\E[|S\cap C_i|]=\sum_{r=1}^s \sum_{x\in C_i} Y_x^r \geq s\cdot \frac{|C_i|}{n}\geq s\cdot \frac{\Omega(1)}{k} \text{,}\]
where, the last inequality holds since all clusters have size $\Omega(n/k)$, since we assume $\vol(C_i)/\vol(C_j) = O(1)$ for all $i,j$ and the graph is $d$-regular.

Notice that the random variables $ Y_x^r$ are negatively associated, since for each $r$, $\sum_{x\in V} Y_x^r=1$. Therefore, by Chernoff bound,
\[
\text{Pr} \left[\left| |S\cap C_i|- \frac{|C_i|}{n}\right| > \delta \cdot s \cdot \frac{|C_i|}{n}\right] \leq 2\cdot \text{exp} \left(- \frac{\delta^2}{3}\cdot \frac{s}{k} \right) \leq n^{-120\cdot k} \text{,}
\]
where, the last inequality holds by choice of $s\geq \frac{400\cdot \log n\cdot k^2}{\delta^2}$. Therefore, by union bound,
\[
\text{Pr} \left[\exists i: \left| |S\cap C_i|- \frac{|C_i|}{n}\right| > \delta \cdot s\cdot \frac{|C_i|}{n}\right] \leq 2\cdot  k\cdot n^{-120\cdot k}\leq n^{-100\cdot k}
\text{.}\]
\end{proof}

Finally, we prove Claim \ref{clm:cluster:resp:lb}. 
\clmclusterresp*
\begin{proof}
	We present such an instance $G$ explicitly. We pick a large enough integer $m$.
	Let $X_1 = \{1, 2, \cdots, m - 2 \e m\}$ denote the set of first $m - 2 \e m$ integers
	and use standard constructions to obtain an $d-1$ regular $\varphi \cdot d$-expander on vertices
	in $X_1$. Also, let $Y_1 = \{m - 2 \e m + 1, m -2\e m + 2, \cdots m\}$ denote
another set of $2 \e m$ integers and obtain another $d-1$ regular $\varphi \cdot d$ expander on 
	$Y_1$. Put a matching of size $\e \varphi dm$ between the sets $X_1$ and $Y_1$. Also, put 
	a matching on remaining degree $d-1$ vertices in $X_1$. Notice that $|E(X_1,Y_1)| = \e \varphi dm$.
	Let $C_1 = X_1 \cup Y_1$. Now we describe another set of vertices. 
	This time we consider three sets: $X_2 = \{1, 2, \cdots, m - 4 \e m\}$, 
	$Y_2 = \{m - 4 \e m + 1, \cdots m -2 \e m\}$ and $Z_2 = \{m - 2 \e m  + 1, \cdots, m\}$.
	We again obtain a $d-1$ regular $\varphi \cdot d$ expander on all of these sets. Next, add a matching
	of size $\e \varphi dm$ between $X_2$ and $Y_2$ and between $X_2$ and $Z_2$.
	We add a matching between remaining degree $d-1$ vertices in $X_2$ and
	another matching between remanining degree $d-1$ vertices in $Z_2$. 
	Notice that $|E(X_2, Y_2)| = \e \varphi dm = |E(X_2, Z_2)|$. Next, let
	$C_2 = X_2 \cup Y_2 \cup Z_2$.

	Finally, we add a matching between the remaining degree $d-1$ vertices in $Y_1$ and $Y_2$.
	Overall this gives a $d$-regular graph on $2m$ vertices. We let $B = C_1$ and 
	thus $\overline{B} = C_2$. Notice that 
	$\varphi_{\text{out}}(C_1) = \varphi_{\text{out}}(C_2) = \e$. Also, by construction,
	note that $\varphi_{\text{in}}(C_1) = \varphi_{\text{in}}(C_2) \geq \varphi$.
	Now consider the following set $S = X_1 \cup Z_2$. We see that 
	$$|E(S, \overline{S})| = |E(X_1, Y_1)| + |E(X_2, Z_2)| = 2 \e \varphi dm.$$
	Also $|S| = |X_1| + |Z_2| = m$.  And therefore, it holds
	that $\phi(G) \leq \phi(S) = 2 \e \varphi$.
\end{proof}

\section{Sublinear estimator for cost of expanders}
\label{sec:exp-cost}
In this section, we formally prove Theorem \ref{thm:singleclustercost} from Section \ref{sec:expandercost-overview}, which demonstrates an algorithm for estimating the Dasgupta cost of a $\varphi$-expander up to a $\poly(1/\varphi)$ factor using $\approx n^{1/3}$ seed queries. 

Then, we prove Theorem \ref{thm:Gk} which demonstrates an algorithm for estimating the total contribution of the clusters to the Dasgupta cost of a $d$-regular graph that admits $(k,\varphi, \epsilon)$-clustering. 
\singleclustercost*

\totalclustercost*

For completeness, we restate Theorem \ref{thm:MSun21:thm3} from \cite{MSun21} and the algorithm for computing $\mathcal{T}_{\deg}$ from \cite{MSun21}. However, we don't explicitly construct $\mathcal{T}_{\deg}$.
\mstedg*
\begin{algorithm}[H]
	\caption*{\textbf{Algorithm 2} \textsc{HCwithDegrees($G\{V\}$)} \cite{MSun21}}
	\begin{algorithmic}[1]
		\STATE \textbf{Input}: $G=(V, E, w)$ with the ordered vertices such that $d_{v_1}\geq \ldots \geq d_{v_{|V|}}$  
		\STATE \textbf{Output}: An HC tree $\mathcal{T}_{\deg}(G)$
		\IF{$|V|=1$}
		\STATE \textbf{return} the single vertex $V$ as the tree
		\ELSE
		\STATE $i_{\max} \coloneqq \lfloor \log_2{|V|-1} \rfloor$; $r \coloneqq 2^{i_{\max}}$; $A \coloneqq \{v_1,\ldots,v_r\}$; $B \coloneqq V\textbackslash A$
		\STATE Let $\mathcal{T}_1 \coloneqq $ \textsc{HCwithDegrees($G\{A\}$)}; $\mathcal{T}_2 \coloneqq $ \textsc{HCwithDegrees($G\{B\}$)}
		\STATE \textbf{return} $\mathcal{T}_{\deg}$ with $\mathcal{T}_1$ and $\mathcal{T}_2$ as the two children
		\ENDIF
	\end{algorithmic}
\end{algorithm}

The rest of the section is structured as follows. In Section \ref{sec:exp-irr-opt} we first prove that for every $\varphi$-expander $G$, the quantity $\sum_{x \in V} \rank(x) \deg(x) $ approximates the cost of $\mathcal{T}_{\deg}$ up to $O(1/\varphi)$ factor. In Section \ref{sec:rankdeg}, we show how to estimate the quantity $\sum_{x \in V} \rank(x) \deg(x) $. Then, in Section \ref{sec:clustercost}, we put everything together and complete the proofs of Theorem \ref{thm:singleclustercost} and Theorem \ref{thm:Gk}. Finally, in Section \ref{sec:tightrunningtime} we prove the optimality of our sampling complexity for a single expander.

\subsection{Bound cost of an expander by $\sum \rank(x)\cdot \deg(x)$}\label{sec:exp-irr-opt} 
Let $G = (V,E)$ be an arbitrary expander with vertices $x_1, x_2, \ldots x_n$ ordered such that $d_1 \geq d_2 \geq \ldots \geq d_n$, where $d_i = \text{deg}(x_i)$. We denote
by $\tdeg$ the Dasgupta Tree returned by Algorithm 1 of \cite{MSun21}. Recall that
this is a binary tree which is obtained by recursive applications of a merge procedure.
The call at the root level to merge aggregates a left subtree with leaves 
$v_1, v_2, \cdots v_{n/2}$ and a right subtree which has remaining vertices as leaves.
We would like to show the following two lemmas.

\begin{lemma} \label{lem:dcost:lb}
	Let $G = (V,E)$ be a graph with degree sequence $d_1 \geq d_2 \geq \ldots \geq d_n$
	and expansion $\varphi$. We have
	$$\CT_G(\tdeg) \geq \Omega(\varphi) \sum_{i=1}^{n} i \cdot d_i \text{.}$$
\end{lemma}

\begin{lemma} \label{lem:dcost:ub}
	Let $G = (V,E)$ be a graph with degree sequence $d_1 \geq d_2 \geq \ldots \geq d_n$. We have
	$$\CT_G(\tdeg) \leq 2 \sum_{i=1}^n i \cdot d_i.$$
\end{lemma}

Note that Lemma~\ref{lem:dcost:ub} does not require the graph to be an expander. Both Lemma \ref{lem:dcost:lb} and Lemma \ref{lem:dcost:ub} hold even if the graph has self-loops.  

\subsubsection{Lower bound on Dasgupta cost of an expander (Proof of Lemma~\ref{lem:dcost:lb})}

In this section, we prove Lemma~\ref{lem:dcost:lb}. We will need some notation. Order the vertices
in decreasing order of degrees and let $H = \lfloor \log_2 n \rfloor$. 
For each $i \in \{0,1, \ldots H\}$, define the $i$-th
bucket as $$L_i = \{j \in V : 2^i \leq j \leq 2^{i+1} - 1 \}.$$ We also need another notation.
Define $L_{\leq i} = \bigcup_{j \leq i} L_j$. We will prove the following two claims.

\begin{claim} \label{clm:lb:aux}
	$\CT_G(\tdeg) \geq \frac{1}{2} \cdot \sum_{i = 0}^{H} |E(L_i, \overline{L_i})| \cdot |L_{\leq i}|$.
\end{claim}

\begin{claim} \label{clm:lb:final}
	$\sum_{i = 0}^{H} |E(L_i, \overline{L_i})| \cdot |L_{\leq i}| \geq \varphi \cdot \sum i \cdot  d_i$.
\end{claim}

Note that once these two claims are shown, Lemma~\ref{lem:dcost:lb} follows as a corollary.

\begin{proof} (Of Claim~\ref{clm:lb:aux})
	Recall that the vertices of $\tdeg$ are arranged in decreasing order. 
	Also, recall $\CT_G(T) = \sum_{\{x,y\}\in E} |\lv(\tdeg[\lca(x, y)])|\text{.}$
	
	We will lower bound $\CT_G(\tdeg)$ by considering contributions to
	Dasupta Objective from a subset of the edges. In particular, we sum 
	only over edges between ``prefix sets'' in $\tdeg$ to get
	
	\begin{equation} \label{eq:ct:lb}
		\CT_G(\tdeg) \geq \sum_{i=1}^{H} |E(L_i, L_{\leq i-1})| \cdot |L_{\leq i}|. 
	\end{equation}
	
	The above expression peels off sets $L_i$ one at a time and considers
	the contribution of edges in the set $E(L_i, L_{\leq i-1})$ which is at least 
	$|L_{\leq i-1}| + |L_i| = |L_{\leq i}|$. We will show that this is at least 
	half the target expression (i.e., half the right hand side in the claim above) 
	$ \sum_{i=0}^{H} |E(L_i, \overline{L_i})| \cdot |L_{\leq i}|$ which will finish the proof. 
	
	The two expressions differ in the contribution they charge to an edge.
	Fix some $i$ and take an edge $e \in E(L_i, \overline{L_i})$.
	Denote the contribution of edge $e$ in Equation~\eqref{eq:ct:lb}
	as $\contrib(e)$ and denote the contribution of $e$ to the target expression 
	as $\target(e)$. It suffices to show that for every 
	$e \in \bigcup_{i=0}^{H} E(L_i, \overline{L_i})$, $\target(e) \leq 2 \contrib(e)$.

	Fix $0 \leq i \leq H$ and take an edge $e \in E(L_i, \overline{L_i})$.  Let $j \neq i $ be the index such that $e \in E(L_i, L_j)$ We have $\contrib(e) \geq \max\{ |L_{\leq i}|, |L_{\leq j}|\}$ and 
	$\target(e) =  |L_{\leq j}| + |L_{\leq i}| \leq 2  \max\{ |L_{\leq i}|, |L_{\leq j}|\}.$ This holds for every edge
	$e \in \bigcup_{i=0}^{H} E(L_i, \overline{L_i})$ and this finishes the proof.
\end{proof}

Next, we prove Claim~\ref{clm:lb:final}.

\begin{proof} (Of Claim~\ref{clm:lb:final})
Since the expansion of $G$ is at least $\varphi$, we obtain
	\begin{align*}
		\sum_{i=0}^{H} |E(L_i, \overline{L_i})| \cdot |L_{\leq i}| &\geq \sum_{i=0}^{H} \varphi \cdot \vol(L_i) \cdot |L_{\leq i}| \\
		&=  \varphi \sum_{i=0}^{H} \left( \sum_{j \in L_i} d_j \right) |L_{\leq i}|.
	\end{align*}
Furthermore, note that for all $j \in L_i$, it holds that $j \leq |L_{\leq i}|$. Therefore, from the above, we get that
\begin{align*}
    \sum_{i=0}^{H} |E(L_i, \overline{L_i})| \cdot |L_{\leq i}| &\geq \varphi \sum_{i=0}^{H} \left( \sum_{j \in L_i} d_j \right) |L_{\leq i}|  \\
    & \geq \varphi \sum_{i=0}^H \sum_{j \in L_i}d_j \cdot j \\
    & = \varphi \sum_{j \in [n]} d_j \cdot j,
\end{align*}
	which finishes the proof.
\end{proof}

\subsubsection{Upper bound on Dasgupta cost of an expander (Proof of Lemma~\ref{lem:dcost:ub})}

In this section, we prove Lemma~\ref{lem:dcost:ub}. Like the previous section,
we do this by proving the following two claims.

\begin{claim} \label{clm:ub:aux}
	$\CT_G(\tdeg) \leq \sum_{i = 0}^{H} \vol(L_i) \cdot |L_{\leq i}|$.
\end{claim}

\begin{claim} \label{clm:ub:final}
	$\sum_{i = 0}^{H} \vol(L_i) \cdot |L_{\leq i}| \leq 2 \sum i \cdot d_i$.
\end{claim}

Lemma~\ref{lem:dcost:ub} follows as a corollary.

\begin{proof} (Of Lemma~\ref{lem:dcost:ub})
	Immediate from Claim~\ref{clm:ub:aux} and Claim~\ref{clm:ub:final}. 
\end{proof}

Now, we will prove Claim~\ref{clm:ub:aux} and Claim~\ref{clm:ub:final} in the rest of this
section. We begin with the first claim.

\begin{proof} (Of Claim~\ref{clm:ub:aux})
	We want to show 
	$\CT_G(\tdeg) \leq \cdot \sum_{i = 0}^{H} \vol(L_i) \cdot |L_{\leq i}|$.
	Write $Obj(e)$ to denote the contribution to Dasgupta Cost of the 
	edge $e$ in tree $\tdeg$. Recall, for an edge $e$ which is not a self-loop,  $Obj(e)$ equals the number of leaves in the
	subtree rooted at the LCA of the endpoints of the edge. 
    Denote by $\target(e)$
	the contribution of edge $e$ to the objective in the right hand side of 
	this expression. If $e = (u,u)$ is a self-loop, then $e$ does not contribute to the Dasgupta cost, and therefore $Obj(e) = 0 \leq \target(e)$.
    So consider an edge $e = (u,v)$ such that $u \neq v$ with $u \in L_i$, $v \in L_j$
	where $i \leq j$. The edge $e$ is considered in the above sum at indices $i$ and
	$j$. The contribution of $e$ to the target objective is given
	as $$\target(e) = |L_{\leq i}| + |L_{\leq j}|.$$ On the other hand, from Algrorithm \ref{alg:HCdeg} and by 
	definition of $L_i$'s, we have $Obj(e) \leq |L_{\leq i}|+ |L_{\leq j}| = \target(e)$. This
	holds for every edge and therefore 
	$$\CT_G(\tdeg) = \sum_{e \in E(G)} Obj(e) \leq \sum_{e \in E(G)} \target(e) = \sum_{i=0}^{H} \vol(L_i) \cdot |L_{\leq i}|.$$
\end{proof}

Finally, we prove Claim~\ref{clm:ub:final} to wrap up. 

\begin{proof} (Of Claim~\ref{clm:ub:final})
	Note that $|L_{\leq i}| \leq 2^{i+1}$. We have,
	\begin{align*}
		\sum_{i=0}^{H} \vol(L_i) \cdot |L_{\leq i}| &\leq \sum_{i=0}^{H} \left( \sum_{j \in L_i} d_j \right) \cdot |L_{\leq i}| \\
		&\leq \sum_{i=0}^{H} \left( \sum_{j \in L_i} d_j \right) \cdot 2^{i+1}. 
	\end{align*}
	
	We want to upper-bound the last expression above. Note that this expression is
	of the form $\sum_{j \in [n]} \alpha_j \cdot d_j$ where $\alpha_j = 2^{i+1}$ if $j \in L_i$.
	However, for any $j \in L_i$, note that $j \geq  2^{i} = \frac{1}{2} \alpha_j$. This means that
	$$\sum_{j \in [n]} \alpha_j\cdot  d_j \leq 2 \sum_{j \in [n]} j\cdot  d_j$$ as desired.
\end{proof}

\subsection{Estimating $\sum \rank(x)\cdot \deg(x)$}\label{sec:rankdeg}
In this section, we prove Lemma ~\ref{lemma:est_i_di}, which asserts that we can estimate $\sum \rank(x)\cdot \deg(x)$ using $O^*\left(n^{1/3}\right)$ samples. 
\begin{restatable}{lemma}{estidi}
	\label{lemma:est_i_di}
Let $G = (V,E)$ be a $\varphi$-expander (possibly with self-loops).
 There exists an estimator $v$ using $O^*\left(n^{1/3}\right)$  samples, such that with probability at least $1-n^{-100}$, 
 \[ \sum_{x \in V} \rank(x) \deg(x) \leq v \leq O(1)\cdot \sum_{x \in V} \rank(x) \deg(x). \]
\end{restatable}
Partition the vertices into buckets as follows: For $d = 2^0, 2^1, \cdots, 2^{\log(n /\varphi)}$, let $B_d := \{x \in V : d \leq \deg(x) <2d\}.$  We will refer to $B_d$ as the \emph{degree class} of $d$. Let $n_d:=|B_d|$ denote the size of the degree class, and let $r_d$ denote the highest rank in $B_d$. Note that $r_d$ is the number of vertices in $G$ that have degree at least $d$, so we have $r_d= \sum_{t \geq d} n_t$. 

Sometimes we will write $B_{\geq d}$ and $B_{<d}$ to denote $\cup_{t \geq d} B_d$ and $\cup_{t < d} B_d$, respectively.

Note that there are at most $\log(n /\varphi)$ different degree classes, since each vertex can have at most $n(1/\varphi -1)$ self-loops. 

The vertices in $B_d$ have ranks $r_d, r_{d}-1,\dots, r_d -n_d+1$ and degrees in $[d,2d]$, which gives the bounds

\begin{equation}\label{eqn:rdnd}
    \frac{d}{2} \cdot n_d \cdot r_d \leq \sum_{i = r_d -n_d+1}^{r_d} i \cdot d \leq \sum_{x \in B_d} \rank(x) \cdot \deg(x) \leq \sum_{i = r_d -n_d+1}^{r_d} i \cdot 2d \leq 2 d \cdot n_d \cdot r_d.
\end{equation}
Thus, our goal will be to efficiently approximate the quantities $r_d \cdot n_d$.  

We start by proving the following technical lemma, which shows that there exists a degree class $B_d$ that contains a large fraction of the degree mass, 
and that satisfies $d \leq  O\left(\frac{\log(n/\varphi)}{\varphi^2}\right) \cdot n_d$. 
\begin{lemma}\label{lemma:heavy_and_good}
    There exists a degree class $d$ such that $n_d \cdot d \geq \frac{m \cdot \varphi}{4\log(n/\varphi)}$ and  $d \leq \frac{16}{\varphi^2} \log (n/\varphi) \cdot n_d.$
\end{lemma} 
\begin{proof}
    Let $m' \geq \varphi\cdot  m$ denote the number of non-self-loop edges, and $\deg'(\cdot)$ denote the degrees discounting self-loops.
    Orient the edges from high degree to low degree (break ties arbitrarily within any degree class). 
     That way, we have $m' = \sum_{x \in V} \deg'_{in}(x)$. Say that a degree class $B_d$ is \emph{heavy} if $ n_d \cdot d \geq \frac{m'}{4\log(n /\varphi)}$, and say that it is \emph{light} otherwise. Moreover, call a degree class $B_d$ \emph{good} if $\sum_{x \in B_d} \deg'_{in}(x) \geq \frac{n_d \cdot d \cdot \varphi }{2}$, and {\em bad} otherwise. There must exist a good heavy class, since otherwise
    \begin{align*}
        m' &= \sum_{B_d: B_d\text{ is light}} \sum_{x \in B_d} \deg'_{in}(x) + \sum_{B_d: B_d\text{ is heavy}} \sum_{x \in B_d} \deg'_{in}(x) &  \\
        & \leq \sum_{B_d: B_d\text{ is light}} 2d \cdot n_d +  \sum_{B_d: B_d\text{ is heavy}} \sum_{x \in B_d} \deg'_{in}(x) \\
      & < \log(n /\varphi) \frac{m'}{2\log(n /\varphi)} +  \sum_{B_d: B_d\text{ is heavy}} \sum_{x \in B_d} \deg'_{in}(x), &\text{by definition of the light classes}\\
        & < \frac{m'}{2} + \sum_{B_d: B_d\text{ is heavy}}\frac{n_d \cdot d  \cdot \varphi }{2}, & \text{assuming that all heavy classes are bad} \\
         & < \frac{m'}{2} + \frac{m \cdot \varphi}{2}, & \\
        & \leq  m', & \text{since $m \cdot \varphi \leq m' $ in a $\varphi$-expander}
    \end{align*}
    which is a contradiction. Thus, there exists a degree class $d$ that is both heavy and good, i.e.
\begin{equation*}
\begin{split}
n_d \cdot d \geq \frac{m'}{4\log(n /\varphi)} &\geq \frac{m\cdot \varphi }{4\log(n /\varphi) } \qquad \qquad \text{(heavy)}\\
&\text{and}\\
\sum_{x \in B_d} \deg'_{in}(x) &\geq \frac{n_d \cdot d \cdot \varphi}{2 } \qquad \qquad \text{(good)}.
\end{split}
\end{equation*}
Let $d$ be such a degree class. To establish the lemma, it remains to show that $d \leq \frac{16}{\varphi^2} \log (n/\varphi) \cdot n_d$.

  First, observe that $r_d \cdot d \leq \sum_{x : \deg(x) \geq d} \deg(x) \leq 2m \leq \frac{8 \log(n /\varphi)  \cdot n_d \cdot d}{\varphi}$, where the last inequality follows from the assumption that $d$ is heavy. Therefore, we have \[r_d \leq \frac{8 \log(n /\varphi) \cdot  n_d}{\varphi}.\]

 Now, consider the number of non-self-loop edges between $B_d$ and $B_{\geq d}$. Recall that we orient the edges from high degree to low degree, so that the number of non-self-loop edges between $B_d$ and $B_{\geq d}$ is equal to $\sum_{x \in B_d} \deg'_{in}(x) \geq \frac{n_d \cdot d \cdot \varphi}{2 }.$ On the other hand, the number of non-self-loop edges between $B_d$ and $B_{\geq d}$ can be at most $|B_{\geq d}| \cdot |B_d| = \left(\sum_{t \geq d}n_t \right)\cdot n_d = r_d \cdot n_d \leq  \frac{8\log(n /\varphi)\cdot n_d^2}{\varphi}  $. 
    Combining, we obtain  
\[  \frac{n_d \cdot d \cdot \varphi }{2 } \leq \frac{8\log(n /\varphi)\cdot n_d^2}{\varphi} , \]
which gives $d \leq \frac{16}{\varphi^2} \log(n/\varphi) \cdot n_d$, as required. 
\end{proof}
We now introduce the definition of a Dasgupta cost heavy degree class, i.e. a class that contributes a significant fraction of the Dasgupta cost. 
\begin{definition}
    Say that a degree class $d$ is \emph{$\alpha$-Dasgupta Cost Heavy}, or just \emph{$\alpha$-DC-heavy}, if 
    \begin{equation*} \sum_{x\in B_d} \rank(x) \deg(x) \geq 2 \alpha \sum_{x\in V} \rank(x) \deg(x).\end{equation*} 
\end{definition}
The following claim shows that for $\alpha$-DC-heavy classes, we can use $n_d$ as a proxy for $r_d$. 
\begin{claim}\label{claim:rankbound}
    If $d$ is $\alpha$-DC-heavy, then $n_d \geq\frac{1}{2} \alpha \cdot r_d.$ 
\end{claim}
\begin{proof}
By Equation \eqref{eqn:rdnd}, we have
\[2d \cdot  n_d \cdot r_d  \geq \sum_{x \in B_d} \rank(x) \deg(x) \geq 2\alpha \sum_{x \in V} \rank(x) \deg(x) \geq 2\alpha \sum_{i = 1}^{r_d} i \cdot d \geq 2\alpha\cdot \frac{r_d^2}{2}\cdot d,\]
which rearranges to 
\[n_d \geq \frac{1}{2}\alpha \cdot r_d.\]
\end{proof}

The next lemma is the key result underlying our bound on the number of samples required. It shows that a degree class that contributes a nontrivial amount to Dasgupta cost of the graph must contain at least a $\approx n^{-1/3}$ fraction of edges of the graph:
\begin{lemma}\label{lemma:heav_deg_class}
    If a degree class $t$ is $\alpha$-DC-heavy, then  $n_t \cdot t  \cdot n^{1/3}\cdot \frac{\log^2 n}{\alpha^2 \varphi^2}  \geq \Omega(m). $
\end{lemma}

\begin{proof}
We will apply Lemma \ref{lemma:heavy_and_good}, which asserts that there exists a degree class $B_d$ that contains a large fraction of the degree mass, 
and that satisfies $d \leq  O\left(\frac{\log^2(n/\varphi)}{\varphi}\right) \cdot n_d$. We will then use the degree class $B_d$ as a reference, and show that the degree mass of $B_t$ cannot be much smaller. 

More formally,we have the following optimization problem over the variables $t,d,n_t, n_d$: 
\begin{align*}
      \max_{t,d, n_t, n_d} \frac{n_d \cdot d}{n_t \cdot t}  & \quad  \\
      \text{ such that}&\\
       n_t^2 \cdot t & \geq \alpha^2
       \cdot n_d^2\cdot  d \\
      d & \leq  \frac{16 \log (n/\varphi)}{\varphi^2} \cdot n_d \\
     &   n_t, n_d \leq n  \\
    &   n_t,d, t \geq 1  \\
      &   n_d\geq 0 . 
\end{align*}
First, we will show that the optimal value is $\approx n^{1/3}$. 
\begin{claim}\label{claim:optimization}
    The above optimization problem has a finite optimal value. Furthermore, if $t,d,n_d, n_t$ is an optimal solution, then the first two constraints are tight and $t=1$. 
\end{claim}
\begin{proof}
First, we show that the optimization problem has a finite optimal value. Observe that adding the constraint $t \leq n^4$ does not change the optimal objective value, since increasing the value of $t$ can only harm the objective, and any feasible choice of $n_d, n_t, d$ remains feasible after adding the constraint. Now, with the additional constraint, we have that the feasible region is bounded (since $1 \leq n_t \leq n, 1 \leq t \leq n^4 $, $0 \leq n_d \leq n$, and $1 \leq d \leq  \frac{16 \log(n/\varphi)}{\varphi^2} \cdot n_d$), closed, and non-empty (since taking for instance $d =1, n_d = 1$, $t = n_t = n$ is a feasible solution). So the optimization problem has a finite value and attains its maximum value. 

Next, we show that if $t,d,n_d, n_t$ is an optimal solution, then the first two constraints are tight and $t=1$. Suppose that the first constraint is loose, i.e. that $n_t^2 \cdot t > \alpha^2 \cdot n_d^2 \cdot  d$. Clearly, we can't have $n_t = t=1$ and $n_d = n$ (otherwise the first constraint would not be satisfied), so it is possible to either decrease $t$, decrease $n_t$ or increase $n_d$. Either of these options gives a higher objective value, which contradicts the optimality of the given solution.

Suppose instead that the second constraint is loose. Let $d' = \gamma^2 \cdot d, n_d' = \frac{n_d}{\gamma}$ for some sufficiently small $\gamma >1$. Then $t,n_t, d', n_d'$ is a feasible solution, but $n_d' \cdot d' = \gamma \cdot n_d \cdot d > n_d \cdot d$, so this gives a higher objective value, which is a contradiction. 

Finally, suppose that $t >1$, and that the first two constraints are tight. If $n_t < n$, then let $t' =1$, $n_t' = n_t \sqrt{t}$. Then $t', n_t', d, n_d$ is a feasible solution, but $n_t' \cdot t' =  n_t \sqrt{t} < n_t \cdot t$, so this gives a higher objective value, contradiction. If instead $n_t = n$, then we have 
$n_t^2 \cdot t > n^2$ (by the assumption that $t >1)$. On the other hand, we have $n_t^2 \cdot t = \alpha^2 \cdot n_d^2 \cdot d = \frac{16 \log(n/\varphi^2)}{\varphi^2}n_d^3$ (by the assumption that the first two constraints are tight), from which we deduce $n_d, d >1$. In particular, there exists $\gamma >1$ such that $t' = t \cdot \gamma^{-1}$, $n_d' =  n_d \cdot \gamma^{-1/3} , $ $d'  =  d\cdot \gamma^{-1/3}$ is a feasible solution. But this solution has objective value $\frac{n_d \cdot d}{n_t\cdot  t}\cdot \gamma^{1/3}>\frac{n_d \cdot d}{n_t\cdot  t},$ contradiction.  
\end{proof}
Now let $t,d,n_d, n_t$ be an optimal solution. It follows by Claim \ref{claim:optimization} that
\begin{align*}
     n^2  &\geq n_t^2\cdot t \qquad \qquad \text{since $n_t \leq n$ and $t = 1$} \\
     & = \alpha^2\cdot  n_d^2 \cdot d \qquad \text{since the first constraint is tight} \\
     &  = \frac{\alpha^2 \cdot \varphi^4}{16^2\log^2(n/\varphi)}\cdot d^3 \qquad \text{since the second constraint is tight,}
\end{align*}
which rearranges to 
\[d \leq \left( \frac{O(n \log(n/\varphi))}{\varphi^2 \cdot  \alpha} \right)^{2/3}\]
Furthermore, since $t=1$, we have
\[n_t \cdot t = n_t = \sqrt{n_t^2 \cdot t} = \alpha\cdot \sqrt{ n_d^2 \cdot d}.\]
Thus, 
\[ \frac{n_d \cdot d}{n_t \cdot t}  = \frac{n_d \cdot d}{ \alpha \cdot n_d \cdot d^{1/2}} = \frac{d^{1/2}}{\alpha} \leq \frac{1}{\alpha} \left( \frac{O(n \log(n/\varphi))}{ \varphi^2\cdot  \alpha}\right)^{1/3}  \]
This shows that the optimal solution to the optimization problem has value $O(n^{1/3} \varphi^{-2/3}\alpha^{-4/3} \log^{1/3}(n/\varphi)).$

Now let $B_d$ be a degree class such that $n_d \cdot d \geq \frac{m\cdot \varphi}{4\log n}$ and  $d \leq \frac{16}{\varphi^2} \log(n/\varphi) \cdot  n_d$ (exists by Lemma \ref{lemma:heavy_and_good}), and let $t$ be any $\alpha$-DC-heavy degree class. 
By Claim \ref{claim:rankbound}, we have 
\begin{align*}
    n_t^2 \cdot t &\geq \frac{1}{2} \cdot \alpha \cdot r_t \cdot n_t \cdot t \qquad \qquad \qquad \qquad \text{by Claim \ref{claim:rankbound},} \\
    & \geq \alpha^2 \sum_{x \in V} \deg(x) \rank(x) \qquad \text{since $t$ is $\alpha$-DC heavy} \\
    &  \geq  \alpha^2 \sum_{x \in B_d} \deg(x) \rank(x) \\
    & \geq \alpha^2 \cdot n_d \cdot r_d \cdot d \\
    & \geq \alpha^2 \cdot n_d^2 \cdot d
\end{align*}

Thus, $t, n_t, d, n_d $ is a feasible solution to the optimization problem, and in particular 
 \[n_t \cdot t \cdot O(n^{1/3} \varphi^{-2/3}\alpha^{-4/3} \log^{1/3}(n/\varphi))\geq n_d \cdot d \geq \frac{m \cdot \varphi}{4 \log (n/\varphi)}, \] which gives the result. 
\end{proof}
We can now obtain a good estimator for the size of each bucket. 
\begin{lemma}\label{lemma:nhat}
 Given $\alpha$, there exists an estimator $\hat{n}_t$ using $O\left(n^{1/3} \cdot \frac{\log^3 (n/\varphi)}{\alpha^2 \varphi^2}\right)$ samples, such that with probability at least $1-\frac{1}{2}n^{-101}$, the following holds: 
\begin{enumerate}
    \item For every degree class $t$, it holds that $\hat{n}_t \leq 6n_t$
    \item If $t$ is an $\alpha$-DC-heavy class, then $\hat{n}_t \geq  \frac{1}{2}n_t$. 
 \end{enumerate}
\end{lemma}
\begin{proof}
Let $c$ be the constant in front of $m$ in Lemma \ref{lemma:heav_deg_class}, and let $s = 16c \cdot n^{1/3} \cdot \frac{\log^3 (n/\varphi)}{\alpha^2 \varphi^2}$. Let $S$ be a set of $s$ vertices sampled independently at random with probability proportional to their degree. For each degree class $t$, let $X_t = \frac{2m}{s \cdot t} |\{x \in S: x \in B_t\}|$. Then $\mathbb{E}[X_t] = \frac{1}{t} \sum_{x \in B_t} \deg(x) \leq \frac{2t\cdot n_t}{t} = 2n_t.$ By Markov's inequality, 
$\Pr[X_t > 6n_t] \leq \frac{1}{3}$. Repeat $O(\log n)$ times and let $\hat{n}_t$ be the median, so that $\Pr[\hat{n}_t \geq 6n_t] \leq \frac{1}{4}n^{-102}.$ 

Now, suppose that $t$ is $\alpha$-DC heavy. We have that $\frac{t \cdot s}{2m} \cdot X_t$ is a sum of independent $\{0,1\}$ random variables, with $\mathbb{E}[\frac{t \cdot s}{2m}X_t] = \frac{s}{2m}\sum_{x \in B_t} \deg(x) \geq \frac{s}{2m}  t\cdot n_t \geq 16$. Here the last inequality holds by Lemma \ref{lemma:heav_deg_class} and the choice of $s$. By Chernoff bounds, we obtain that $\Pr[\frac{t \cdot s}{2m}  X_t \leq \frac{1}{2} \cdot \frac{t \cdot s}{2m}\cdot n_t] \leq \exp{(-2)} \leq \frac{1}{3}$. Since $\hat{n}_t$ is obtained from $X_t$ by repeating $O(\log n)$ times and taking the median, we get that $\Pr[\hat{n}_t < \frac{1}{2}n_t] \leq \frac{1}{4}n^{-102}.$ 

Taking the union bound over all $t$ gives the result. 
\end{proof}
Similarly, we can obtain an estimator for the highest rank in each bucket. 
\begin{lemma}\label{lemma:rhat}
 Given $\alpha$, there exists an estimator $\hat{r}_t$ using  $O\left(n^{1/3} \cdot \frac{\log^5 (n/\varphi)}{\alpha^2 \varphi^2}\right)$  samples, such that with probability at least $1-\frac{1}{2}n^{-101}$, the following holds: 
\begin{enumerate}
    \item For every degree class $t$, it holds that $\hat{r}_t \leq 6r_t$
    \item If $t$ is an $\alpha$-DC-heavy class, then $\hat{r}_t \geq  \frac{1}{4}r_t$. 
 \end{enumerate}
\end{lemma}
\begin{proof}
Let $\alpha' = \frac{\alpha}{8\log^2 (n/\varphi)}$, and let $\hat{n}$ be the estimator from Lemma \ref{lemma:nhat} with parameter $\alpha'$. For each degree class $d$, let $\hat{r}_d = \sum_{t \geq d} \hat{n}_t$. Condition on the success of $\hat{n}$ (which happens with probability at least $1-\frac{1}{2}n^{-101}$). Then Property 1 follow immediately from Lemma \ref{lemma:nhat}. It remains to prove that Property 2 holds. Fix an $\alpha$-DC-heavy class $t$. Say that a degree class $t' > t$  is \emph{heavy} if 
\[n_{t'} \geq \frac{r_t}{2 \log (n/\varphi)},\] and otherwise say that it is \emph{light}.  First, we show that if $t'$ is heavy, then $t'$ is $\frac{\alpha}{8\log^2 (n/\varphi)}$-DC-heavy. 
Indeed, if $t'$ is heavy, then
\begin{align*}
 2 \sum_{x \in B_{t'}} \rank(x) \deg(x) & \geq t'\cdot n_{t'}\cdot r_{t'}  & \text{by Equation \eqref{eqn:rdnd}} \\
& \geq \frac{2t}{4 \log^2 (n/\varphi)}  r_t ^2 & \text{because $r_t' \geq n_t'  \geq \frac{r_t}{2 \log (n/\varphi)}$ and $t'\geq 2t$}\\
&  \geq  \frac{1}{4 \log^2 (n/\varphi)}\sum_{x \in B_t} \rank(x) \deg(x) & \text{by Equation \eqref{eqn:rdnd}, since $r_t \geq n_t$} \\
& \geq \frac{2\alpha}{4 \log^2 (n/\varphi)}\sum_{x \in V} \rank(x) \deg(x)& \text{by the assumption that $t$ is $\alpha$-DC heavy}.
\end{align*}
So if $t'$ is heavy, then it is $\frac{\alpha}{8 \log^2 (n/\varphi)}$-DC heavy, and in particular, by Lemma \ref{lemma:nhat}, $\hat{n}_{t'} \geq \frac{1}{2}n_t$. 
We now have 
\begin{align*}
\hat{r}_t & =  \sum_{t' \geq t: t' \text{ is light}}\hat{n}_{t'} +  \sum_{t' \geq t: t' \text{ is heavy}}\hat{n}_{t'}  & \\
& \geq \frac{1}{2} \sum_{t' \geq t: t' \text{ is heavy}}n_{t'} +  \sum_{t' \geq t: t' \text{ is light}}\hat{n}_{t'}&  \text{by Lemma \ref{lemma:nhat}}\\
& =  \frac{1}{2}\sum_{t' \geq t}n_t + \sum_{t' \geq t: t' \text{ is light}}\left(\hat{n}_{t'} -\frac{1}{2}n_{t'}\right)&  \\
& \geq \frac{1}{2}\sum_{t' \geq t}n_t - \frac{1}{2}\sum_{t' \geq t: t' \text{ is light}}n_{t'}& \\
& \geq \frac{1}{2}\sum_{t' \geq t}n_t - \frac{\log (n/\varphi)}{4 \log (n/\varphi)}r_{t} & \text{by definition of light classes} \\
& = \frac{1}{4}r_t. & 
\end{align*}
\end{proof}
We are now ready to prove Lemma~\ref{lemma:est_i_di}.
\begin{proof}[Proof of Lemma~\ref{lemma:est_i_di}]
Let $\alpha = \frac{1}{4\log (n/\varphi)}$. Let $\hat{n}$ be the estimator from Lemma \ref{lemma:nhat}  with parameter $\alpha$ and let $\hat{r}$ be the rank estimator from Lemma \ref{lemma:rhat} with parameter $\alpha$. Let 
\[v = 32 \sum_{d} \hat{r}_d \cdot \hat{n}_d \cdot d.\]
Condition on the success of the estimators $\hat{n}$ and $\hat{r}$ (which happens with probability at least $1-n^{-101}$). Then, for each degree class $d$, we have
\begin{align*}
  d \cdot \hat{r}_d \cdot \hat{n}_d & \leq 36 d\cdot r_d \cdot n_d  & \text{by Lemma \ref{lemma:nhat} and Lemma \ref{lemma:rhat}}\\
  & \leq 72 \sum_{x \in B_d} \deg(x) \rank(x), & \text{by Equation \eqref{eqn:rdnd}.}
\end{align*}
Summing over all degree classes $d$, we have 
\begin{align*}
    v  = 32 \sum_{d} \hat{r}_d \cdot \hat{n}_d \cdot d  \leq 32 \cdot 72 \sum_{x \in V} \rank(x) \deg(x), 
\end{align*} which gives the upper bound. 
It remains to prove the lower-bound. Say that a degree class is \emph{heavy} if it is $\frac{1}{4\log (n/\varphi)}$-DC heavy, and say that it is \emph{light} otherwise. We have
\begin{align*}
    \sum_{x \in V} \rank(x) \deg(x) &= \sum_d \sum_{x \in B_d} \rank(x) \deg(x) &  \\
    & = \sum_{d: B_d \text{ is light }}\sum_{x \in B_d} \rank(x) \deg(x) +  \sum_{d: B_d \text{ is heavy } } \sum_{x \in B_d} \rank(x) \deg(x) & \\
    & \leq \frac{\log(n/\varphi)}{2 \log(n/\varphi)} \sum_{x \in V}\rank(x) \deg(x) +   \sum_{d: B_d \text{ is heavy } } \sum_{x \in B_d} \rank(x) \deg(x)& \\
    & \leq \frac{1}{2}\sum_{x \in V}\rank(x) \deg(x) + \sum_{d:B_d \text{ is heavy}} 2d \cdot n_d \cdot r_d  & \text{by Equation \eqref{eqn:rdnd}}\\
    & \leq  \frac{1 }{2} \sum_{x \in V} \rank(x) \deg(x) +  16 \sum_{d: B_d \text{ is heavy } }d \cdot \hat{n}_d \cdot \hat{r}_d &\text{by Lemmas \ref{lemma:nhat} and \ref{lemma:rhat}}\\
   &  \leq \frac{1}{2}\sum_{x \in V} \rank(x) \deg(x) +  \frac{1}{2}v. & 
\end{align*}
Rearranging, we obtain the lower-bound. 
\end{proof}

\subsection{Correctness of \textsc{ClusterCost} and \textsc{TotalClustersCost} (Proof of Theorem~\ref{thm:singleclustercost} and Theorem ~\ref{thm:Gk})} \label{sec:clustercost}
In this section we put everything together to prove Theorem \ref{thm:singleclustercost}. Furthermore, we also present the 
procedure \textsc{TotalClustersCost} for approximating the contribution of the clusters to the Dasgupta cost to a $d$-regular $(k, \varphi, \epsilon)$-clusterable graph, and we prove its guarantee (Theorem~\ref{thm:Gk}).  
\singleclustercost*

\begin{proof}[Proof of Theorem \ref{thm:singleclustercost}]
	Follows immediately from Lemma~\ref{lemma:est_i_di}, Lemma ~\ref{lem:dcost:lb}, Lemma~\ref{lem:dcost:ub} and Theorem~\ref{thm:MSun21:thm3}.  
\end{proof}

Next, we present the procedure \textsc{TotalClustersCost} for approximating the contribution of the clusters to the Dasgupta cost to a $d$-regular $(k, \varphi, \epsilon)$-clusterable graph. A natural approach is to run the \textsc{ClusterCost} procedure on each cluster and sum the output. This would use $\approx n^{1/3}$ seeds. However, since we assume that the graph is $d$-regular, we can do something much simpler. 

By virtue of Lemma \ref{lem:dcost:lb} and Lemma \ref{lem:dcost:ub}, we want to approximate the quantity $\sum \rank(v) \cdot \deg(v)$ for each cluster $C_i$. However, since the graph is $d$-regular, we get 
$$\sum_{v \in C_i} \rank_{G\{C_i\}}(v) \deg_{G\{C_i\}}(v) = \sum_{v \in C_i}\rank_{G\{C_i\}}(v) \cdot d  = \sum_{j =1}^{|C_i|} j \cdot d \approx |C_i|^2 \cdot d.$$
Furthermore, by the assumption that $\max_{i,j}|C_i|/|C_j| =  O(1)$, we have $|C_i| = O(n/k)$ for all $i$, so the total contribution from the cluster simplifies further as
$$\sum_{i \in [k]}\sum_{v \in C_i} \rank_{G\{C_i\}}(v) \deg_{G\{C_i\}}(v) \approx \sum_{i \in [k]} |C_i|^2 \cdot d   \approx k \cdot \frac{n^2}{k^2} \cdot d = \frac{d \cdot n^2}{k}.$$
Motivated by this, our procedure $\textsc{TotalClustersCost}$ below simply outputs the number $\frac{d \cdot n^2}{k}$. 

\begin{algorithm}[H]
	\caption{\textsc{TotalClustersCost}($G$)}
	\label{alg:clustercost}
	\begin{algorithmic}[1]
		\STATE\return $O \left( \frac{d \cdot n^2}{k}\right)$
	\end{algorithmic}
\end{algorithm}
\begin{remark} Algorithm \ref{alg:clustercost} uses the approximation $|C_i| \approx n/k$ for all $i$, which introduces a dependence on the parameter $\eta \coloneqq \max_{i,j} \frac{|C_i|}{|C_j|}$ in the approximation guarantee. Since we assume $\eta = O(1)$, this is enough for our purposes. However, if desired, one can easily remove this dependence by approximating each $|C_i|$ more accurately via sampling vertices with their cluster labels.
\end{remark}

\totalclustercost*
\begin{proof}
Let $v = 3 \cdot \eta^2 \cdot \frac{d \cdot n^2}{k}$ be the value output by Algorithm \ref{alg:clustercost}, where $\eta = \max_{i,j}|C_i|/|C_j| =  O(1)$. 

For every $i \in [k]$, let $\mathcal T_i$ denote the output of Algorithm \ref{alg:HCdeg} on input $G\{C_i\}$ . 
We have
\begin{align*}
\text{COST}_{G\{C_i\}}(T^*_i) & \leq \text{COST}_{G\{C_i\}}(\mathcal T_i)  && \text{by optimality of $T^*_i$} \\
& \leq 2 \sum_{v \in C_i} \rank_{G\{C_i\}}(v) \cdot \deg_{G\{C_i\}}(v) && \text{by Lemma  \ref{lem:dcost:ub}} \\
& =2 \sum_{j =1}^{|C_i|}j \cdot d && \text{since $\deg_{G\{C_i\}}(v) = \deg_G(v) = d$ for all $v \in C_i$} \\
& = 2 \cdot \frac{|C_i|(|C_i|+1)}{2}\cdot d \\
& \leq 3 \eta^2 \frac{n^2}{k^2}\cdot d && \text{since $|C_i| \leq \eta \cdot \frac{n}{k}$}. 
\end{align*}
Summing over $i \in [k]$ and recalling that $\textsc{TotalClustersCost}(G) = 3 \eta^2 \frac{n^2}{k}\cdot d$, yields the lower bound. 
Next, we prove the upper bound. We have 
\begin{align*}
\text{COST}_{G\{C_i\}}(T^*_i)&  \geq \Omega(\varphi^4)\ \text{COST}_{G\{C_i\}}(\mathcal T_i) && \text{by Theorem \ref{thm:MSun21:thm3}} \\
& \geq  \Omega(\varphi^5)\ \sum_{v \in C_i} \rank_{G\{C_i\}}(v) \cdot \deg_{G\{C_i\}}(v)  && \text{by Lemma \ref{lem:dcost:ub}} \\
& = \Omega(\varphi^5) \sum_{j = 1}^{|C_i|}j \cdot d && \text{since $\deg_{G\{C_i\}}(v) = \deg_G(v) = d$ for all $v \in C_i$}\\
& =  \Omega(\varphi^5)  \cdot \frac{|C_i|(|C_i|+1)}{2}\cdot d &&  \\
& = \Omega(\varphi^5)\cdot \frac{n^2}{k^2}\cdot d && \text{since $|C_i| \geq \frac{n}{k \cdot \eta } = \Omega\left(\frac{n}{k} \right)$}
\end{align*}
Summing over $i \in [k]$ and recalling that $\textsc{TotalClustersCost}(G) = 3 \eta^2 \frac{n^2}{k}\cdot d = O\left( \frac{n^2}{k} \cdot d \right)$, yields the upper bound. 
\end{proof}
We need the following Observation in the proof of Theorem \ref{thm:estdcost}.
\begin{observation}
    \label{lem:lw-cost-exp}
	It holds that $\textsc{TotalClustersCost}(G) \geq \Omega\left(\frac{n^2}{k}\right)$. This is because the value output by Algorithm \ref{alg:clustercost} is given by $v = 3 \cdot \eta^2 \cdot \frac{d \cdot n^2}{k} =  \Omega\left(\frac{n^2}{k}\right)$, since $d, \eta  \geq 1$ . 
\end{observation}

\subsection{Lower bound on the necessary number of seeds (Proof of Theorem \ref{thm:lw-Gk})}\label{sec:tightrunningtime}
In this subsection, we prove Theorem \ref{thm:lw-Gk} from Section \ref{sec:expandercost-overview}, which shows that the query complexity of \textsc{ClusterCost} is tight. 

\unionclusterableruntimetight*

\begin{proof}
	We will construct the two graphs on the same vertex set $V$. Pick a set $C \subseteq V$ of size $\frac{n^{2/3}}{2}$. Let $b = \frac{|V \setminus C|} {|C|} = 2n^{1/3}-1$. 
 Construct the graph $G$ as follows: 
 \begin{itemize}
 \item $C$ forms a clique
 \item Add a perfect $b$-matching between $C$ and $V \setminus C$. 
 \end{itemize}
Then every vertex in $C$ has degree $\frac{n^{2/3}}{2}-1+b$, and every vertex in $V \setminus C$ has degree $1$. Therefore, 
 \[ \sum_{i=1}^n i \cdot d_i  \leq \sum_{i=1}^{\frac{n^{2/3}}{2}} i \cdot (n^{2/3}+b) + \sum_{i = \frac{n^{2/3}}{2}+1}^n i \leq \frac{1}{2}n^{2/3} \cdot (n^{2/3})^2+n^2/2 \leq n^2. \]
Construct $G'$ as follows: 
 \begin{itemize}
 \item $C$ forms a clique
 \item Add a $(2 \alpha \cdot b, 2\alpha)$-regular bipartite graph between $C$ and $V \setminus C$. 
 \item For each vertex in $C$, delete $b(2 \alpha -1)$ of its edges internal edges in $C$. 
 \end{itemize}
Now every vertex in $C$ has degree $\frac{n^{2/3}}{2}-1+b$, but every vertex in $V \setminus C$ has degree $2 \alpha$.  Therefore, 
\[ \sum_{i=1}^n i \cdot d'_i \geq \sum_{i=1}^{\frac{n^{2/3}}{2}} i \cdot \frac{n^{2/3}}{2} + \sum_{i = \frac{n^{2/3}}{2}+1}^n i \cdot 2 \alpha   \geq \frac{n^{2/3}}{2} \cdot \left(\frac{n^{2/3}}{2} \right)^2  + 2 \alpha \cdot \frac{n^2}{2} \geq \alpha n^2. \]

Every vertex in $C$ has the same degree in $G$ and $G'$, so to distinguish between the two graphs, we need to query a vertex in $V \setminus C$. 
 
  We have $|E_G|, |E_{G'}| \geq \frac{|C|^2}{2}   = \frac{n^{4/3}}{8}.$
  In $G$, the probability that a given query returns a vertex in $V\setminus C$, is $\frac{|V \setminus C|}{2|E_G|} \leq \frac{4n}{n^{4/3}} = 4n^{-1/3}$.  Similarly, in $G'$, the probability that a given query returns a vertex in $V\setminus C$, is $\frac{|V \setminus C|\cdot 2\alpha}{2|E_G'|} \leq  \frac{8 \alpha n}{n^{4/3}}= 8 \alpha n^{-1/3}$. 
  
  Now suppose that the number of queries is at most $\frac{n^{1/3}}{8 \alpha}$. If the true input graph is $G$, then with probability at least $(1-n^{-1/3})^{\frac{n^{1/3}}{8 \alpha}} \geq \frac{1}{3}$, we fail to query any vertices in $V \setminus C$. Similarly, if the true input graph graph is $G'$, then with probability at least $(1-8\alpha n^{-1/3})^{\frac{n^{1/3}}{8 \alpha}} \geq \frac{1}{3}$, we fail to query any vertices in $V \setminus C$. So with probability at least $\frac{1}{3}$, we fail to distinguish the graphs. 
\end{proof}

%!TEX root = ./main.tex
\section{Correctness of \textsc{WeightedDotProductOracle} (Proof of Theorem \ref{thm:wdp})}
\label{sec:wdp:oracle}

{\bf Obtaining the weighted dot product oracle:} 
Recall that we aim %Theorem \ref{thm:contracted-graph} 
to spectrally approximate
$\cL_H$ by $\wt{\cL}$ with probability at least $1-n^{-100}$ (as in Equation \eqref{eq:l-lh-spec}) and this amounts to approximating all quadratic
forms on $\cL_H$ with quadratic forms on $\wt{\cL}$. We achieve this by getting  estimates for $\rdp{f_x, \Sigma_{[k]} f_y}$ to be accurate with probability at
least $1 - n^{-100k}$. The details are presented below.

\thmweighteddot*

\begin{remark} \label{rem:eta:in:runningtime}
	This result is similar to Theorem 2 in \cite{GluchKLMS21}. The difference being Theorem 2 in 
	\cite{GluchKLMS21} approximates dot product between the embedding vectors $\rdp{f_x, f_y}$. 
	Here, we instead want to approximate the weighted dot product $\rdp{f_x, \Sigma_{[k]} f_y}$.
\end{remark}

We also need to set up some notation which is used in this section. Let $m\leq n$ be integers.  
For any matrix $A\in \R^{n\times m}$ with singular value decomposition (SVD) 
$A=Y\Gamma Z^T$ we assume $Y\in \R^{n\times n}$ and $Z\in \R^{m\times n}$ are orthogonal 
matrices and  $\Gamma \in \R^{n\times n}$ is a diagonal matrix of singular values. Since 
$Y$ and $Z$ are orthogonal matrices, their columns form an orthonormal basis. For any 
integer $q\in[m]$ we denote $Y_{[q]}\in \R^{n \times q}$ as the first $q$ columns of $Y$ 
and $Y_{-[q]}$ to denote the matrix of the remaining columns of $Y$.  We also denote 
$Z^T_{[q]}\in \R^{q \times n}$ as the first $q$ rows of $Z^T$ and $Z^T_{-[q]}$ to denote 
the matrix of the remaining rows of $Z$. Finally we denote $\Gamma^T_{[q]}\in \R^{q \times q}$ 
as the first $q$ rows and columns of $\Gamma$ and we use $\Gamma_{-[q]}$ as the last $n-q$ 
rows and columns of $\Gamma$. So for any $q\in[m]$ the span of $Y_{-[q]}$ is the orthogonal 
complement of the span of $Y_{[q]}$, also the span of $Z_{-[q]}$ is the orthogonal 
complement of the span of $Z_{[q]}$. Thus we can write 
$A=Y_{[q]}\Gamma_{[q]}Z^T_{[q]} + Y_{-[q]}\Gamma_{-[q]}Z^T_{-[q]}$.

\begin{algorithm}[H]
	\caption{\textsc{InitializeWeightedDotProductOracle($G,\xi$)}  
 } \label{alg:LearnEmbedding-tt}
\label{alg:LearnEmbedding}
\begin{algorithmic}[1]
	\STATE $t\gets \frac{20\cdot \log n}{\varphi^2}$	
	%\STATE $R:= \varTheta{(n^{1/2+2\cdot 10^3 \cdot\epsilon / \varphi^2}\cdot \log^4 n \cdot k^{30}/{\xi}^{10})}$ 	
	\STATE $R_{\text{init}}\gets n^{1/2 + O(\epsilon / \varphi^2)} \cdot  \left(\frac{k\cdot \log n}{\xi}\right)^{O(1)}$	
	
	\STATE $s\gets n^{O(\epsilon / \varphi^2)} \cdot \left(\frac{k\cdot \log n}{\xi}\right)^{O(1)}$ \label{ln:sets} 
	
	\STATE  $I_S\gets $ $s$ indices chosen independently and uniformly at random with replacement from $\{1,\ldots,n\}$ \label{ln:sample}
	
	%\STATE $\Q_1,\ldots, \Q_{O(\log n)}:=$\textsc{LearnRandomWalkDistributions($G,S,R_1,t$)}
	\STATE ${\Q} \gets \textsc{EstimateTransitionMatrix}(G,I_S,R_{\text{init}},t)$  \textit{\qquad   \# $\Q$ has at most $R_{\text{init}}\cdot s$ non-zeros \label{ln:setQi}}
	\STATE $\G\gets$\textsc{EstimateCollisionProbabilities}$(G,I_S,R_{\text{init}},t)$ \label{ln:setG}

	\STATE Compute  eigendecomposition $\frac{n}{s}\cdot\G:=\widehat{W}\widehat{\Sigma} \widehat{W}^T$  of $\frac{n}{s}\cdot\G$ \textit{\qquad\qquad \qquad \qquad\qquad \# $\G\in \R^{s\times s}$ \label{ln:svdG}}
	\STATE $\Psi\gets \frac{n}{s}\cdot\widehat{W}_{[k]}\widehat{\Sigma}_{[k]}^{-2} \widehat{W}_{[k]}^T$ \textit{\qquad \qquad\qquad \qquad\qquad \qquad\qquad \qquad\qquad \qquad \qquad \qquad \# $\Psi\in\R^{s\times s}$} 
	\STATE \return   $\cD_w:=\{\Psi,\Q\}$
\end{algorithmic}
\end{algorithm}

\begin{algorithm}[H]
	\caption{$\textsc{WeightedDotProductOracle}(G,x,y,  \xi, \mathcal{D})$  \textit{\qquad \qquad  \# $\mathcal{D}_w:=\{\Psi,\Q\}$}
}
\label{alg:weighted-dot}
\begin{algorithmic}[1]
	\STATE $R_{\text{query}}\gets n^{1/2 + O(\epsilon / \varphi^2)} \cdot  \left(\frac{k\cdot \log n}{\xi}\right)^{O(1)}$
				
		\STATE ${\m_x\gets \textsc{RunRandomWalks}(G,R_{\text{query}},t+1,x)}$
		 \textit{\quad \# unlike in~\cite{GluchKLMS21},  walk length  $=t+1$}
		\STATE ${\m_y\gets \textsc{RunRandomWalks}(G,R_{\text{query}},t,y)}$ 
	\STATE  \return $\adp{f_{x},\Sigma_{[k]} f_y}:=(\m_x^T \Q) \Psi (\Q^T\m_y)$  
\end{algorithmic}
\end{algorithm}

We build up on a collection of tools from \cite{GluchKLMS21}. First, we use Lemma \ref{lem:v-close} 
which shows that $(k,\varphi,\epsilon)$-clusterable graphs, the outer products of the columns of 
the $t$-step random walk transition matrix has small spectral norm. This holds because 
the matrix power dominates by the first $k$ eigenvectors and each of them has bounded infinity norm.

\begin{restatable}[A higher success probability version of Lemma 23 from \cite{GluchKLMS21} with improved estimation error]{lemmma}{lemvclose}\label{lem:v-close}
Let $k \geq 2$ be an integer, $\varphi \in (0,1)$ and $\epsilon\in (0,1)$. Let $G=(V,E)$ be a $d$-regular 
graph that admits a $(k,\varphi,\epsilon)$-clustering $C_1, \ldots C_k$. 
Let $M$ be the random walk transition matrix of $G$. Let $1> \xi>1/n^5$,
$t\geq \frac{20 \log n}{\varphi^2}$. Let $c>1$ be a large enough constant and let $s\ge  c\cdot \mycolor{k^8} \cdot n^{(400 \epsilon / \varphi^2)}\cdot \log n /\xi^2$.  Let $I_S=\{i_1,\ldots, i_s\}$ be a multiset of $s$ indices chosen independently and uniformly at random from
$\{1,\dots,n\}$. Let $S$ be the $n\times s$ matrix whose $j$-th column equals $\mathds{1}_{i_j}$. Suppose that $M^t=U\Sigma^tU^T$ is the eigendecomposition of $M^t$ and $\sqrt{\frac{n}{s}} \cdot M^tS=\widetilde{U}\widetilde{\Sigma}\widetilde{W}^T$ is  the SVD of $\sqrt{\frac{n}{s}} \cdot M^tS$. 
If $\frac{\epsilon}{\varphi^2}\leq \frac{1}{10^5}$ then with probability at least $1-n^{-100\cdot k}$ we have
$$
\left|\left|U_{[k]} {\Sigma}_{[k]}^{-2t} U_{[k]}^T - \widetilde{U}_{[k]} \widetilde{\Sigma}_{[k]}^{-2} \widetilde{U}_{[k]}^T \right|\right|_2 < \xi
$$
\end{restatable}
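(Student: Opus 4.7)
\begin{proofof}{Lemma~\ref{lem:v-close} (plan)}
The plan is to view the matrix $A := \sqrt{n/s}\cdot M^tS$ as an empirical proxy for $M^t$ in the sense that $AA^T = \frac{n}{s}\cdot M^t SS^T M^t$ is an unbiased estimator of $M^{2t}$, then combine matrix concentration with an eigenvalue-gap / Davis--Kahan argument and a stable-inverse step.

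\emph{Step 1: Matrix concentration for $AA^T$.}
Write $AA^T = \sum_{j=1}^{s} X_j$ where $X_j = \frac{n}{s}\cdot M^t \mathds{1}_{i_j}\mathds{1}_{i_j}^T M^t$, so $\E[X_j]=\frac{1}{s}M^{2t}$ and $\E[AA^T] = M^{2t}$.  To control the individual operator norms I will decompose $M^t = U_{[k]}\Sigma_{[k]}^t U_{[k]}^T + U_{-[k]}\Sigma_{-[k]}^t U_{-[k]}^T$. For the bottom part, $\|\Sigma_{-[k]}^t\|_2 \le (1-\varphi^2/4)^t \le n^{-5}$ by the choice $t\ge 20\log n/\varphi^2$, so it contributes negligibly. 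For the top part, the key tool is Lemma~\ref{lem:l-inf-bnd}, giving $\|u_i\|_\infty \le n^{20\epsilon/\varphi^2}\sqrt{160/\min_j|C_j|}$, hence $\|M^t \mathds{1}_{i_j}\|_2 \le \sqrt{k}\cdot n^{20\epsilon/\varphi^2}\cdot \sqrt{160/(\eta\cdot n/k)}$. This yields $\|X_j\|_2 \le \frac{n}{s}\cdot \frac{160\, k^2 \, n^{40\epsilon/\varphi^2}}{\eta}$. Applying a Kim--Vu style bound (as Theorem~\ref{thm:kim-vu} is invoked elsewhere in the paper) to the symmetric sum $\sum_j X_j$ gives, for $s \ge c\,k^4 n^{400\epsilon/\varphi^2}\log n /(\eta^2\xi_1^2)$ with a suitably small $\xi_1$ (to be chosen $\ll \xi$ times small constants), the bound
\[
\|AA^T - M^{2t}\|_2 \le \xi_1 \qquad \text{w.p. } 1-n^{-100k}.
\]

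\emph{Step 2: Eigenvalue and eigenspace closeness.}
Let $AA^T = \widetilde{U}\widetilde{\Sigma}^2\widetilde{U}^T$ and $M^{2t}=U\Sigma^{2t}U^T$.
Weyl's inequality gives $|\widetilde{\Sigma}^2_{ii} - \Sigma^{2t}_{ii}|\le \xi_1$ for every $i$. Now I exploit the spectral gap at index $k$: for $i\le k$ one has $\Sigma^{2t}_{ii}\ge (1-\epsilon)^{2t} \ge 1/2$ (since $\lambda_k\le 2\epsilon$ and $t\epsilon \le O(\log n)\cdot \epsilon/\varphi^2$ is a small constant), while for $i>k$ one has $\Sigma^{2t}_{ii}\le n^{-5}$. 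Taking $\xi_1 \ll 1/4$ ensures a genuine gap of size $\ge 1/4$ between $\widetilde{\Sigma}^2_{[k]}$ and $\widetilde{\Sigma}^2_{-[k]}$, and Davis--Kahan then yields $\|\widetilde{U}_{[k]}\widetilde{U}_{[k]}^T - U_{[k]}U_{[k]}^T\|_2 \le 4\xi_1$.

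\emph{Step 3: Inverting and assembling.}
On the top-$k$ block the eigenvalues are at least $1/2$, so inversion is stable:
$\|\widetilde{\Sigma}_{[k]}^{-2} - \Sigma_{[k]}^{-2t}\|_2 \le 4\xi_1/(1/2)^2 = 16\xi_1$. Writing
\begin{align*}
\widetilde{U}_{[k]}\widetilde{\Sigma}_{[k]}^{-2}\widetilde{U}_{[k]}^T - U_{[k]}\Sigma_{[k]}^{-2t}U_{[k]}^T
&= \widetilde{U}_{[k]}(\widetilde{\Sigma}_{[k]}^{-2}-\Sigma_{[k]}^{-2t})\widetilde{U}_{[k]}^T \\
&\quad + (\widetilde{U}_{[k]}\Sigma_{[k]}^{-2t}\widetilde{U}_{[k]}^T - U_{[k]}\Sigma_{[k]}^{-2t}U_{[k]}^T),
\end{align*}
the first term has norm at most $16\xi_1$ and the second is bounded by $\|\Sigma_{[k]}^{-2t}\|_2 \cdot O(\xi_1) \le O(\xi_1/(1/2)^{2})$ using Step~2 and the standard identity $\|AXA^T - BXB^T\|_2\le 2\|X\|_2\|A-B\|_2\cdot\|A\|_2$ applied with $A=\widetilde{U}_{[k]}\widetilde{U}_{[k]}^T$, $B=U_{[k]}U_{[k]}^T$. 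Finally choose $\xi_1 = c'\xi$ for a sufficiently small constant $c'>0$; the required $s$ then matches the hypothesis.

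\emph{Main obstacle.}
The delicate point is Step~1: obtaining failure probability $n^{-100k}$ (required so that later union bounds over a size-$\exp(k)$ net go through) while keeping the sample size polynomial in $k/\eta$ and only $n^{O(\epsilon/\varphi^2)}$. A standard matrix Bernstein bound gives only $\exp(-\Omega(s\xi_1^2/R^2))$ failure probability with $R=\|X_j\|_2$, which must be amplified using the Kim--Vu-type concentration (Theorem~\ref{thm:kim-vu}) that the paper's earlier preparation supplies. Verifying that the entries of $M^t$ satisfy the moment hypotheses of that bound (again via Lemma~\ref{lem:l-inf-bnd} to control higher mixed moments coming from the random-walk endpoints) is the one place where care is needed; everything downstream is a Weyl/Davis--Kahan bookkeeping exercise.
\end{proofof}
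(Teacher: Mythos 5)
First, note that this paper does not prove Lemma~\ref{lem:v-close} at all: it is imported verbatim from \cite{GluchKLMS21} and used as a black box, so there is no in-paper proof to match your argument against. Your high-level architecture (concentration of $AA^T=\frac{n}{s}M^tSS^TM^t$ around $M^{2t}$, then spectral perturbation, then stable inversion of the top block) is the natural one and is in the spirit of the cited proof, but as written it has two genuine gaps.

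The first gap is quantitative and breaks Steps 2--3 as stated: your claim that $\Sigma^{2t}_{ii}\ge(1-\epsilon)^{2t}\ge 1/2$ for $i\le k$ is false in this parameter regime. Here $t\epsilon=\Theta\bigl((\epsilon/\varphi^2)\log n\bigr)$ is \emph{not} a constant (the premise only gives $\epsilon/\varphi^2\le 10^{-5}$, independent of $n$), so by Lemma~\ref{lem:bnd-lambda} the top-$k$ eigenvalues of $M^{2t}$ are only bounded below by roughly $n^{-40\epsilon/\varphi^2}$, while $\lambda_{k+1}\ge\varphi^2/2$ makes the bottom ones at most $n^{-10}$. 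Hence the Davis--Kahan gap is $\approx n^{-40\epsilon/\varphi^2}$, $\|\Sigma_{[k]}^{-2t}\|_2$ can be as large as $n^{40\epsilon/\varphi^2}$, and the inversion error is $\approx\xi_1 n^{80\epsilon/\varphi^2}$, not $16\xi_1$. This is precisely why the hypothesis carries the $n^{400\epsilon/\varphi^2}$ factor in $s$ (and why Lemmas~\ref{lem:l-inf-bnd} and \ref{lem:Mt-bnd} carry $n^{O(\epsilon/\varphi^2)}$ factors); choosing $\xi_1=c'\xi$ with a constant $c'$ cannot work, and you must instead take $\xi_1\approx\xi\cdot n^{-\Theta(\epsilon/\varphi^2)}$ and re-verify that the stated $s$ still suffices. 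The second gap is structural, in Step~3: the middle matrix $\Sigma_{[k]}^{-2t}$ is diagonal but not a scalar multiple of the identity, and it is expressed in the eigenbasis, so the term $\widetilde{U}_{[k]}\Sigma_{[k]}^{-2t}\widetilde{U}_{[k]}^T-U_{[k]}\Sigma_{[k]}^{-2t}U_{[k]}^T$ cannot be bounded by the projector identity you quote; Davis--Kahan only gives closeness of the subspaces (up to an orthogonal rotation), and inside the top-$k$ block eigenvalues may be nearly degenerate, so individual eigenvectors of $AA^T$ and $M^{2t}$ need not be close at all. One has to compare $\widetilde{U}_{[k]}\widetilde{\Sigma}_{[k]}^{-2}\widetilde{U}_{[k]}^T$ and $U_{[k]}\Sigma_{[k]}^{-2t}U_{[k]}^T$ as the same spectral function (inverse restricted above a threshold inside the gap) applied to $AA^T$ and $M^{2t}$, e.g.\ via a resolvent/contour or block-decomposition argument, with the Lipschitz constant again of order $n^{O(\epsilon/\varphi^2)}$. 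Finally, a smaller point: Theorem~\ref{thm:kim-vu} is a scalar polynomial concentration bound for empirical collision counts and is not the right tool for Step~1; since the columns of $M^tS$ are exact walk distributions, a matrix Bernstein bound combined with the norm bound of Lemma~\ref{lem:Mt-bnd} gives $\|AA^T-M^{2t}\|_2\le\xi_1$ with failure probability $n^{-100k}$ for the stated $s$, which is all you need there.
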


The following lemma from \cite{GluchKLMS21} is instrumental in analyzing collision probabilities
of random walks from every vertex $x \in V$ in a $(k, \varphi, \e)$-clusterable graph.

\begin{restatable}[\cite{GluchKLMS21}]{lemmma}{lemMtbnd}\label{lem:Mt-bnd}
Let $k \geq 2$ be an integer, $\varphi \in (0,1)$ and $\epsilon\in (0,1)$. Let $G=(V,E)$ be a 
$d$-regular and that admits a $(k,\varphi,\epsilon)$-clustering  $C_1, \ldots , C_k$.
Let $M$ be the random walk transition matrix of $G$. For any $t\geq \frac{20\log n}{\varphi^2}$ 
and any $x\in V$ we have 
\[\|M^{t}\mathds{1}_{x}\|_2 \leq O\left(k \cdot n^{-1/2+20\epsilon /\varphi^2}\right) \text{.}\]
\end{restatable}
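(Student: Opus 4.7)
The plan is to decompose $M^t\mathds{1}_x$ in the eigenbasis of $M$ and bound the contribution of the top-$k$ eigenvectors (which have eigenvalues close to $1$) separately from the tail. Since $M$ and $L_G$ share eigenvectors, with $M = U(I-\Lambda/2)U^T$, we have
\[
\|M^t\mathds{1}_x\|_2^2 = \sum_{i=1}^n \left(1-\tfrac{\lambda_i}{2}\right)^{2t} u_i(x)^2
= \underbrace{\sum_{i \leq k}\left(1-\tfrac{\lambda_i}{2}\right)^{2t} u_i(x)^2}_{\text{(A)}}
+ \underbrace{\sum_{i>k}\left(1-\tfrac{\lambda_i}{2}\right)^{2t} u_i(x)^2}_{\text{(B)}}.
\]

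For (A), I would use Lemma~\ref{lem:bnd-lambda} to note $\lambda_i \leq \lambda_k \leq 2\epsilon$ for $i \leq k$, so each bottom eigenvector is one to which Lemma~\ref{lem:l-inf-bnd} applies. Thus $u_i(x)^2 \leq \|u_i\|_\infty^2 \leq \frac{160\, n^{40\epsilon/\varphi^2}}{\min_j |C_j|}$. Using $\min_j |C_j| \geq \eta \cdot \max_j |C_j| \geq \eta n/k$ and bounding $(1-\lambda_i/2)^{2t} \leq 1$, we get
\[
\text{(A)} \leq k \cdot \frac{160\, k\, n^{40\epsilon/\varphi^2}}{\eta\, n} = O\!\left(\frac{k^2\, n^{40\epsilon/\varphi^2}}{\eta\, n}\right).
\]

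For (B), Lemma~\ref{lem:bnd-lambda} gives $\lambda_i \geq \lambda_{k+1} \geq \varphi^2/2$ for $i > k$, so $(1-\lambda_i/2)^{2t} \leq e^{-\varphi^2 t/2} \leq n^{-10}$ by the choice $t \geq 20\log n/\varphi^2$. Since $\sum_{i>k} u_i(x)^2 \leq \|\mathds{1}_x\|_2^2 = 1$, we obtain $\text{(B)} \leq n^{-10}$, which is dominated by (A).

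Combining the two bounds and taking square roots yields the claimed estimate $\|M^t\mathds{1}_x\|_2 \leq O\!\left(\frac{k}{\sqrt{\eta}} \cdot n^{-1/2 + 20\epsilon/\varphi^2}\right)$. There is no real obstacle here: the argument reduces to the $\ell_\infty$ bound on bottom eigenvectors (already proved as Lemma~\ref{lem:l-inf-bnd}) plus the spectral gap between $\lambda_k$ and $\lambda_{k+1}$ (Lemma~\ref{lem:bnd-lambda}); the only bookkeeping step is tracking the $1/\eta$ factor that arises from the disparity in cluster sizes, which accounts for the difference between the $\eta=\Theta(1)$ version in~\cite{GluchKLMS21} and the present statement.
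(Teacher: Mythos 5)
The paper itself does not prove this lemma---it is imported verbatim from \cite{GluchKLMS21}---and your argument is correct and is exactly the standard route the cited proof takes: expand $\mathds{1}_x$ in the eigenbasis of $M$, bound the contribution of the bottom $k$ eigenvectors via the $\ell_\infty$ bound of Lemma~\ref{lem:l-inf-bnd} together with $\min_j|C_j|\geq \eta n/k$, and kill the tail using $\lambda_{k+1}\geq \varphi^2/2$ and $t\geq 20\log n/\varphi^2$. The one caveat is that Lemma~\ref{lem:l-inf-bnd} carries the hypothesis $\epsilon\leq \varphi^2/100$, which the restatement here (allowing arbitrary $\epsilon,\varphi\in(0,1)$) omits; so strictly speaking your proof establishes the bound only under the paper's standing assumption that $\epsilon/\varphi^2$ is sufficiently small, which is the only regime in which the lemma is ever invoked.
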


To prove the correctness of weighted dot product of spectral embedding of vertices, we use a similar proof strategy to \cite{GluchKLMS21},  
which, albeit, develops an 
estimator for the \emph{unweighted} dot product  between spectral embeddings i.e., $\rdp{f_x, f_y}$. In Lemma \ref{lem:bnd-e1}, we show that the weighted dot product of
spectral embeddings i.e.,  $\rdp{f_x, \Sigma_{[k]}f_y}$ can be estimated by the appropriate linear transformation of the random walk transition matrix. Since we seek weighted dot products unlike \cite{GluchKLMS21},  we run a $t$-step random walk from $x$, and a $t+1$-step walk from $y$. The one-step longer walk helps us to inject the matrix of eigenvalues in between the dot product of spectral embedding of vertex $x$ and $y$.

\begin{restatable}{lemmma}{lembndeone}
\label{lem:bnd-e1}
Let $k \geq 2$ be an integer, $\varphi \in (0,1)$ and $\epsilon\in (0,1)$. Let $G=(V,E)$ be a $d$-regular 
graph that admits a $(k,\varphi,\epsilon)$-clustering $C_1, \ldots C_k$. Let $M$ be  the random walk 
transition matrix of $G$. Let  $1/n^5 < \xi < 1$, $t\geq  \frac{20\log n}{\varphi^2}$. Let $c>1$ be a 
large enough constant and let 
$s\geq c
\cdot n^{480\epsilon / \varphi^2}\cdot \log n \cdot \mycolor{k^{13}}/({\xi}^2) $. 
Let $I_S=\{i_1,\ldots, i_s\}$ be a multiset of $s$ indices chosen independently and uniformly at random from
$\{1,\dots,n\}$. Let $S$ be the $n\times s$ matrix whose $j$-th column equals $\mathds{1}_{i_j}$.  
Let $M^t=U\Sigma^tU^T$ be the eigendecomposition of $M^t$ and 
$\sqrt{\frac{n}{s}} \cdot M^tS=\widetilde{U}\widetilde{\Sigma}\widetilde{W}^T$  be the SVD of 
$\sqrt{\frac{n}{s}} \cdot M^tS$. If $\frac{\epsilon}{\varphi^2}\leq \frac{1}{10^5}$ then with probability 
at least $1-n^{-100\cdot k}$ we have
\[\left|\mathds{1}_x^T U_{[k]}\Sigma_{[k]}{U}_{[k]}^T  \mathds{1}_y - (M^{t+1}\mathds{1}_{x})^T  (M^tS)\left(\frac{n}{s}\cdot\widetilde{W}_{[k]} \widetilde{\Sigma}^{-4}_{[k]} \widetilde{W}^T_{[k]}\right) (M^tS)^T (M^{t}\mathds{1}_{y}) \right| \leq  \frac{\xi}{\mycolor{n k^2}}\text{.}
\]
\end{restatable}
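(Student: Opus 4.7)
\begin{proofof}{Plan for Lemma~\ref{lem:bnd-e1}}
The plan is to reduce the claim to Lemma~\ref{lem:v-close} by an algebraic identity that rewrites the sampled expression in a form amenable to that lemma, and then bound the remaining error using the random-walk mass bound in Lemma~\ref{lem:Mt-bnd}.

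\textbf{Step 1 (algebraic simplification).} First, I would show that the sampled quantity
\[
A \;:=\; (M^t S)\cdot \bigl(\tfrac{n}{s}\cdot \widetilde{W}_{[k]} \widetilde{\Sigma}_{[k]}^{-4} \widetilde{W}_{[k]}^T\bigr)\cdot (M^t S)^T
\]
is exactly equal to $\widetilde{U}_{[k]} \widetilde{\Sigma}_{[k]}^{-2} \widetilde{U}_{[k]}^T$. This is a pure computation using the SVD $\sqrt{n/s}\cdot M^t S = \widetilde{U}\widetilde{\Sigma}\widetilde{W}^T$: substituting $M^tS = \sqrt{s/n}\cdot \widetilde{U}\widetilde{\Sigma}\widetilde{W}^T$ and using $\widetilde{W}^T \widetilde{W}_{[k]} = \begin{bmatrix} I_k \\ 0\end{bmatrix}$, the factors of $\widetilde{\Sigma}$ telescope to $\widetilde{\Sigma}_{[k]}^{-2}$.

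\textbf{Step 2 (rewrite the target).} Second, I would rewrite the target quantity using the spectral theorem for $M$. Since $G$ is $d$-regular, $M$ is symmetric with eigendecomposition $M^r = U\Sigma^r U^T$ for any $r$, and hence
\[
M^{t+1}\cdot U_{[k]}\Sigma_{[k]}^{-2t} U_{[k]}^T\cdot M^t \;=\; U_{[k]}\,\Sigma_{[k]}^{t+1-2t+t}\,U_{[k]}^T \;=\; U_{[k]}\,\Sigma_{[k]}\,U_{[k]}^T.
\]
The asymmetric walk lengths $t+1$ and $t$ in Algorithm~\ref{alg:weighted-dot} are precisely what creates the ``missing'' factor of $\Sigma_{[k]}$ needed to simulate the weighted dot product, as opposed to the unweighted dot product estimator from~\cite{GluchKLMS21}.

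\textbf{Step 3 (combine via Cauchy--Schwarz).} Combining Steps 1 and 2, the error becomes
\[
\Bigl|\,\mathds{1}_x^T M^{t+1} \bigl(\widetilde{U}_{[k]} \widetilde{\Sigma}_{[k]}^{-2} \widetilde{U}_{[k]}^T - U_{[k]}\Sigma_{[k]}^{-2t} U_{[k]}^T\bigr) M^t \mathds{1}_y\,\Bigr| \;\leq\; \|M^{t+1}\mathds{1}_x\|_2\cdot \|M^t\mathds{1}_y\|_2\cdot \Delta,
\]
where $\Delta$ is the spectral-norm error controlled by Lemma~\ref{lem:v-close}. By Lemma~\ref{lem:Mt-bnd}, each of $\|M^{t+1}\mathds{1}_x\|_2$ and $\|M^t\mathds{1}_y\|_2$ is at most $O\bigl(\tfrac{k}{\sqrt{\eta}}\cdot n^{-1/2+20\epsilon/\varphi^2}\bigr)$, so the product of the two norms is $O\bigl(\tfrac{k^2}{\eta}\cdot n^{-1+40\epsilon/\varphi^2}\bigr)$. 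To achieve a final error $\xi/n$ it then suffices to invoke Lemma~\ref{lem:v-close} with internal precision $\xi' \asymp \tfrac{\xi\cdot \eta}{k^2\cdot n^{40\epsilon/\varphi^2}}$; the sample size $s$ stated in the hypothesis has been chosen precisely to make this feasible (the $k^8\cdot n^{480\epsilon/\varphi^2}/(\eta^4\xi^2)$ dependence comes from squaring this precision inside Lemma~\ref{lem:v-close}'s $s \gtrsim k^4 n^{400\epsilon/\varphi^2}/(\eta^2\xi'^2)$ requirement).

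\textbf{Anticipated obstacle.} The main subtlety is Step~1: verifying that $\Psi = \tfrac{n}{s}\widetilde{W}_{[k]}\widetilde{\Sigma}_{[k]}^{-4}\widetilde{W}_{[k]}^T$ collapses the sandwich $(M^tS)\Psi(M^tS)^T$ to exactly $\widetilde{U}_{[k]}\widetilde{\Sigma}_{[k]}^{-2}\widetilde{U}_{[k]}^T$, so that the spectral approximation guarantee of Lemma~\ref{lem:v-close} is directly applicable. Everything else is then a routine Cauchy--Schwarz tail-bound computation against the $\ell_2$-mass bound from Lemma~\ref{lem:Mt-bnd}, which forces the $1/\eta$ overhead flagged in Remark~\ref{rem:eta:in:runningtime}.
\end{proofof}
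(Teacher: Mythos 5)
Your proposal is correct and follows essentially the same route as the paper's proof: the two algebraic identities (collapsing $(M^tS)\cdot\frac{n}{s}\widetilde{W}_{[k]}\widetilde{\Sigma}_{[k]}^{-4}\widetilde{W}_{[k]}^T\cdot(M^tS)^T$ to $\widetilde{U}_{[k]}\widetilde{\Sigma}_{[k]}^{-2}\widetilde{U}_{[k]}^T$, and telescoping $\Sigma^{t+1}\Sigma_{[k]}^{-2t}\Sigma^{t}$ to $\Sigma_{[k]}$ via the asymmetric walk lengths), combined with Lemma~\ref{lem:v-close} at precision $\xi'\asymp\frac{\xi\eta}{k^2 n^{40\epsilon/\varphi^2}}$ and the $\ell_2$-mass bound of Lemma~\ref{lem:Mt-bnd} through Cauchy--Schwarz, are exactly the paper's three steps, merely presented in a different order.
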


\begin{proof}
Let $m_x=M^{t+1}\mathds{1}_{x}$ and $m_y=M^{t}\mathds{1}_{y}$.  Let $c'>1$ be a large enough constant we will set later. Let $\xi' = \frac{\xi}{c'\cdot \mycolor{k^4} \cdot n^{40 \e/\varphi^2}}$. Let $c_1$ be the constant in front of $s$ in Lemma  \ref{lem:v-close}. Thus for large enough $c$ we have $s\geq c
\cdot n^{480\epsilon / \varphi^2}\cdot \log n \cdot \mycolor{k^{13}}/({\xi}^2) \geq c_1\cdot n^{400\epsilon / \varphi^2}\cdot \log n \cdot \mycolor{k^8}/({\xi'}^2)$, and therefore by Lemma \ref{lem:v-close} applied with 
$\xi'$,  with probability at least $1-n^{-100\cdot k}$ we have
\[\left|\left|U_{[k]} {\Sigma}_{[k]}^{-2t} U_{[k]}^T - \widetilde{U}_{[k]} \widetilde{\Sigma}_{[k]}^{-2} \widetilde{U}_{[k]}^T \right|\right|_2\leq \xi' \]
By Cauchy-Schwarz and submultiplicativity of the spectral norm we have
\begin{align}
\left|m_x^T U_{[k]} {\Sigma}_{[k]}^{-2t} U_{[k]}^Tm_y - m_x^T\widetilde{U}_{[k]} \widetilde{\Sigma}_{[k]}^{-2} \widetilde{U}_{[k]}^T m_y \right| 
&\leq  \left|\left|U_{[k]} {\Sigma}_{[k]}^{-2t} U_{[k]}^T - \widetilde{U}_{[k]} \widetilde{\Sigma}_{[k]}^{-2} \widetilde{U}_{[k]}^T \right| \right|_2\|m_x\|_2\|m_y\|_2 \nonumber\\
&\leq \xi' \|m_x\|_2\|m_y\|_2\label{eq:pxpyv-til-close}
\end{align}
In the rest of the proof we will show $m_{x}^T (U_{[k]} {\Sigma}_{[k]}^{-2t} U_{[k]}^T)m_{y}=\mathds{1}_x^T U_{[k]}\Sigma_{[k]} U_{[k]}^T \mathds{1}_y$ ({\bf Step 1}) and $m_x^T  \widetilde{U}_{[k]}\widetilde{\Sigma}^{-2}_{[k]} \widetilde{U}_{[k]}^T m_y = m_x^T  (M^tS)(\widetilde{W}_{[k]} \widetilde{\Sigma}^{-4}_{[k]} \widetilde{W}^T_{[k]}) (M^tS)^T m_y $ ({\bf Step 2}), and finally obtain the result by combining these facts with~\eqref{eq:pxpyv-til-close} and the upper bound on $\|m_x\|_2$ provided by Lemma~\ref{lem:Mt-bnd}.

\paragraph{Step $1$:} Note that $M^t=U\Sigma^tU^T$. Therefore we get $M^{t+1}\mathds{1}_x= U\Sigma^{t+1}U^T  \mathds{1}_x$, and $M^t\mathds{1}_y= U\Sigma^tU^T  \mathds{1}_y$. Thus we have
\begin{equation}
\label{eq:pxto1x}
m_x^T U_{[k]} {\Sigma}_{[k]}^{-2t} U_{[k]}^Tm_y = \mathds{1}_x^T \left(\left(U\Sigma^{t+1}U^T \right) \left(U_{[k]} {\Sigma}_{[k]}^{-2t} U_{[k]}^T\right) \left(U\Sigma^tU^T \right) \right)\mathds{1}_y
\end{equation}
Note that $U^T U_{[k]}$ is a $n\times k$ matrix such that the top $k\times k$ matrix is $I_{k\times k}$ and the rest is zero. Also $U_{[k]}^T U$ is a $k\times n$ matrix such that the left $k\times k$ matrix is $I_{k\times k}$ and the rest is zero. Therefore we have 
\[U{\Sigma}^{t+1}\left(U^T U_{[k]}\right) {\Sigma}^{-2t}_{[k]} \left(U^T_{[k]} U\right){\Sigma}^{t}U^T = UHU^T \text{,}\]
where $H$ is a $n\times n$ matrix such that the top left $k\times k$ matrix is $\Sigma_{k\times k}$ and the rest is zero. Hence, we have
\[ U H U^T =U_{[k]} \Sigma_{[k]} U_{[k]}^T\text{.}\]
Thus we have
\begin{equation}
\label{eq:p-vk}
m_{x}^T (U_{[k]} {\Sigma}_{[k]}^{-2t} U_{[k]}^T)m_{y} = \mathds{1}_x^T U_{[k]}\Sigma_{[k]} U_{[k]}^T \mathds{1}_y
\end{equation}

\paragraph{Step $2$:} We have $\sqrt{\frac{n}{s}} \cdot M^tS=\widetilde{U}\widetilde{\Sigma}\widetilde{W}^T$ where $\widetilde{U}\in \R^{n\times n}$, $\widetilde{\Sigma}\in \R^{n\times n}$ and $\widetilde{W}\in \R^{s\times n}$. Therefore, 
\begin{align}
\label{eq:pxpy}
 &(m_x)^T  (M^tS)\left(\frac{n}{s}\cdot\widetilde{W}_{[k]} \widetilde{\Sigma}^{-4}_{[k]} \widetilde{W}^T_{[k]}\right) (M^tS)^T (m_{y}) \nonumber\\
 &= m_x^T  \left(\sqrt{\frac{s}{n}}\cdot \widetilde{U}\widetilde{\Sigma}\widetilde{W}^T\right)\left(\frac{n}{s}\cdot\widetilde{W}_{[k]} \widetilde{\Sigma}^{-4}_{[k]} \widetilde{W}^T_{[k]}\right) \left(\sqrt{\frac{s}{n}}\cdot\widetilde{W}\widetilde{\Sigma}\widetilde{U}^T\right) m_y \nonumber\\
 &= m_x^T  \left( \widetilde{U}\widetilde{\Sigma}\widetilde{W}^T\right)\left(\widetilde{W}_{[k]} \widetilde{\Sigma}^{-4}_{[k]} \widetilde{W}^T_{[k]}\right) \left(\widetilde{W}\widetilde{\Sigma}\widetilde{U}^T\right) m_y
\end{align}
Note that $\widetilde{W}^T\widetilde{W}_{[k]}$ is a $n\times k$ matrix such that the top $k\times k$ matrix is $I_{k\times k}$ and the rest is zero. Also $\widetilde{W}_{[k]}^T\widetilde{W}$ is a $k\times n$ matrix such that the left $k\times k$ matrix is $I_{k\times k}$ and the rest is zero. Therefore we have 
\[\widetilde{\Sigma}\left(\widetilde{W}^T\widetilde{W}_{[k]}\right) \widetilde{\Sigma}^{-4}_{[k]} \left(\widetilde{W}^T_{[k]} \widetilde{W}\right)\widetilde{\Sigma} = \widetilde{H} \text{,}\]
where $\widetilde{H}$ is a $n\times n$ matrix such that the top left $k\times k$ matrix is $\widetilde{\Sigma}^{-2}_{[k]}$ and the rest is zero. Hence, we have
\begin{equation}
\label{eq:vsigv}
(\widetilde{U}\widetilde{\Sigma}\widetilde{W}^T)\left(\frac{n}{s}\cdot\widetilde{W}_{[k]} \widetilde{\Sigma}^{-4}_{[k]} \widetilde{W}^T_{[k]}\right) (\widetilde{W}\widetilde{\Sigma}\widetilde{U}^T) = \widetilde{U}\widetilde{H}\widetilde{U}^T = \widetilde{U}_{[k]}\widetilde{\Sigma}^{-2}_{[k]} \widetilde{U}_{[k]}^T
\end{equation}
Putting \eqref{eq:vsigv} and \eqref{eq:pxpy} together we get
\begin{equation}
\label{eq:vtild}
 m_x^T  (M^tS)(\widetilde{W}_{[k]} \widetilde{\Sigma}^{-4}_{[k]} \widetilde{W}^T_{[k]}) (M^tS)^T m_y = m_x^T  \widetilde{U}_{[k]}\widetilde{\Sigma}^{-2}_{[k]} \widetilde{U}_{[k]}^T m_y
\end{equation}
\paragraph{Putting it together.} By \eqref{eq:pxpyv-til-close}, \eqref{eq:p-vk} and \eqref{eq:vtild} we have
\begin{align}
&\left|m_x^T  (M^tS)\left(\frac{n}{s}\cdot\widetilde{W}_{[k]} \widetilde{\Sigma}^{-4}_{[k]} \widetilde{W}^T_{[k]}\right) (M^tS)^T m_y - \mathds{1}_x^T U_{[k]}{U}_{[k]}^T  \mathds{1}_y \right| \nonumber \\
&= \left| m_x^T\widetilde{U}_{[k]} \widetilde{\Sigma}_{[k]}^{-2} \widetilde{U}_{[k]}^T m_y - m_x^T U_{[k]} {\Sigma}_{[k]}^{-2t} U_{[k]}^Tm_y\right| && \text{By \eqref{eq:p-vk} and \eqref{eq:vtild}}\nonumber \\
& \leq  \xi' \cdot\|m_x\|_2\|m_y\|_2 && \text{By \eqref{eq:pxpyv-til-close}} 
\end{align}
By Lemma \ref{lem:Mt-bnd} for any vertex $x\in V$ we have 
\begin{equation}
\|m_x\|^2_2=\|M^{t}\mathds{1}_{x}\|^2_2 \leq O\left(k^2 \cdot n^{-1+40\epsilon /\varphi^2}\right)  \text{.}
\end{equation}
Therefore by choice of $c'$ as a large enough constant and choosing $\xi' = \frac{\xi}{c'\cdot \mycolor{k^4} \cdot n^{40 \e/\varphi^2}}$ we have
\begin{equation}
\label{eq:e2-done}
\left|m_x^T  (M^tS)\left(\frac{n}{s}\cdot\widetilde{W}_{[k]} \widetilde{\Sigma}^{-4}_{[k]} \widetilde{W}^T_{[k]}\right) (M^tS)^T m_y - \mathds{1}_x^T U_{[k]}\Sigma_{[k]}{U}_{[k]}^T  \mathds{1}_y \right|\leq O\left(\xi'\cdot k^2 \cdot n^{-1+40\epsilon /\varphi^2}\right) \leq  \frac{\xi}{\mycolor{nk^2}} \text{.}
\end{equation}
\end{proof}

Finally, Lemma \ref{lem:u-abs-close} bounds the absolute deviation between $\adp{f_x, \Sigma_{[k]} f_y}$
and our estimator. We put the two together using triangle inequality to prove Theorem 
\ref{thm:wdp} 

\begin{restatable}[A higher success probability version of Lemma 29 from \cite{us} with improved estimation error.]{lemmma}{lemuabsclose}\label{lem:u-abs-close}
	Let $G=(V,E)$ be a $d$-regular that admits a $(k,\varphi,\epsilon)$-clustering $C_1, \ldots C_k$. Let
	$1/n^5 < \xi < 1$. Let $\mathcal{D}$ denote the  data structure constructed by the procedure \textsc{InitializeOracle($G,\delta,\xi$)} (Algorithm \ref{alg:LearnEmbedding}). 
	Let $x,y\in V$. Let $\adp{f_x, \Sigma_{[k]}f_y}\in \R$ denote the value returned by the procedure %\newline
	$\textsc{WeightedDotProductOracle}(G,x,y,  \xi, \mathcal{D})$ (Algorithm \ref{alg:weighted-dot}).
	Let  $t\geq  \frac{20\log n}{\varphi^2}$. Let $c>1$ be a large enough constant and let 
	$s\geq c\cdot n^{240\cdot\epsilon / \varphi^2}\cdot \log n \cdot k^{4}$. Let 
	$I_S=\{i_1,\ldots, i_s\}$ be a multiset of $s$ indices chosen independently and uniformly at random from
	$\{1,\dots,n\}$. Let $S$ be the $n\times s$ matrix whose $j$-th column equals $\mathds{1}_{i_j}$. 
	Let $M$ be the random walk transition matrix of $G$. Let 
	$\sqrt{\frac{n}{s}} \cdot M^tS=\widetilde{U}\widetilde{\Sigma}\widetilde{W}^T$  be  the SVD of 
	$\sqrt{\frac{n}{s}} \cdot M^tS$. If $\frac{\epsilon}{\varphi^2}\leq \frac{1}{10^5}$, and Algorithm 
	\ref{alg:LearnEmbedding} succeeds,  then with probability at least $1-n^{-100k}$ we have
	\[\left|\adp{f_{x},\Sigma_{[k]} f_y} - (M^{t+1}\mathds{1}_{x})^T  (M^tS)\left(\frac{n}{s}\cdot\widetilde{W}_{[k]} \widetilde{\Sigma}^{-4}_{[k]} \widetilde{W}^T_{[k]}\right) (M^tS)^T (M^{t}\mathds{1}_{y}) \right|<
		\frac{\xi }{\mycolor{nk^2}} \text{.}
	\]
\end{restatable}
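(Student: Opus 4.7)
The plan is to bound $|\adp{f_x,\Sigma_{[k]}f_y} - A^*|$, where
\[
A^* = (M^{t+1}\mathds{1}_x)^T(M^tS)\,\Psi_{\mathrm{true}}\,(M^tS)^T(M^t\mathds{1}_y), \quad \Psi_{\mathrm{true}} = \tfrac{n}{s}\widetilde{W}_{[k]}\widetilde{\Sigma}_{[k]}^{-4}\widetilde{W}_{[k]}^T,
\]
by interpolating in four steps between $A^*$ and the algorithm's estimate $(\m_x^T\Q)\Psi(\Q^T\m_y)$, replacing one approximate quantity at a time. Specifically, I will replace (i) the true walk distribution $M^{t+1}\mathds{1}_x$ by its empirical version $\m_x$; (ii) $M^t\mathds{1}_y$ by $\m_y$; (iii) the ``true'' sampled-walk matrix $M^tS$ by $\Q$; and finally (iv) $\Psi_{\mathrm{true}}$ by $\Psi$. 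Each replacement produces an error term that is controlled by Cauchy--Schwarz and the spectral/operator norms of the other factors, so I will first collect the required a priori bounds.

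First I will establish operator-norm bounds on the stable factors. By Lemma~\ref{lem:Mt-bnd} one has $\|M^{t}\mathds{1}_x\|_2, \|M^{t+1}\mathds{1}_x\|_2 \leq O(k/\sqrt{\eta})\cdot n^{-1/2+20\e/\varphi^2}$. For the columns of $M^tS$ the same Lemma gives $\|(M^tS)_j\|_2 \leq O(k/\sqrt{\eta})\cdot n^{-1/2+20\e/\varphi^2}$, and hence $\|M^tS\|_2\leq \|M^tS\|_F \leq O(k\sqrt{s/\eta})\cdot n^{-1/2+20\e/\varphi^2}$. For $\Psi_{\mathrm{true}}$ I will use the Davis--Kahan / Weyl-type argument of Lemma~\ref{lem:v-close}: since $\|U_{[k]}\Sigma_{[k]}^{-2t}U_{[k]}^T - \widetilde{U}_{[k]}\widetilde{\Sigma}_{[k]}^{-2}\widetilde{U}_{[k]}^T\|_2$ is small, and $U_{[k]}\Sigma_{[k]}^{-2t}U_{[k]}^T$ has top eigenvalue at most $(1-\lambda_k/2)^{-2t}\leq n^{O(\e/\varphi^2)}$ (using Lemma~\ref{lem:bnd-lambda}), the spectral norm of $\widetilde{\Sigma}_{[k]}^{-2}$ is at most $n^{O(\e/\varphi^2)}$, giving $\|\Psi_{\mathrm{true}}\|_2 \leq \tfrac{n}{s}\cdot n^{O(\e/\varphi^2)}$.

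Next, I will control the sampling errors. For steps (i)--(iii), Corollary~\ref{cor:kim:vu:cor} (with error parameter $\sigma_{\mathrm{err}} = \xi' / (n\cdot\mathrm{poly}(k,n^{\e/\varphi^2},1/\eta))$ for an appropriate small $\xi'$) yields that with probability at least $1-n^{-99k}$, simultaneously for all $O(s)$ relevant inner products, $|\m_x^T \Q_j - (M^{t+1}\mathds{1}_x)^T(M^tS)_j| \leq \sigma_{\mathrm{err}}$ and analogously for $\m_y$; this requires $R_{\mathrm{query}}, R_{\mathrm{init}} \geq n^{1/2+O(\e/\varphi^2)}\cdot(k\log n/(\xi\eta))^{O(1)}$, which matches the parameters in Algorithm~\ref{alg:LearnEmbedding-tt}. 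Chaining these estimates via Cauchy--Schwarz and the norm bounds above, the cumulative error from steps (i)--(iii) is at most $O(\xi/(2n))$ with probability $\geq 1-n^{-99k}$.

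The main obstacle, and where I expect the analysis to be most delicate, is step (iv): bounding $\|\Psi - \Psi_{\mathrm{true}}\|_2$. Here I will argue that $\tfrac{n}{s}\G$ approximates $\widetilde{W}\widetilde{\Sigma}^2\widetilde{W}^T$ in spectral norm (again by Corollary~\ref{cor:kim:vu:cor} applied to the entries of the collision matrix $\G\approx (M^tS)^T(M^tS)$, together with a union bound over the $s^2$ entries), so by Weyl's inequality $\widehat{\Sigma}_{[k]} \approx \widetilde{\Sigma}_{[k]}^{2}$; crucially, Lemma~\ref{lem:v-close} combined with the eigenvalue gap between the top-$k$ singular values and the rest of $\sqrt{n/s}\cdot M^tS$ (coming from $\lambda_k\le 2\e$ versus $\lambda_{k+1}\ge \varphi^2/2$) allows a Davis--Kahan style bound on the subspace $\widehat{W}_{[k]}$ approximating $\widetilde{W}_{[k]}$. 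Inversion amplifies errors by the inverse of the smallest retained eigenvalue (at most $n^{O(\e/\varphi^2)}$), so choosing $s$ as in line~\ref{ln:sets} of Algorithm~\ref{alg:LearnEmbedding-tt} gives $\|\Psi - \Psi_{\mathrm{true}}\|_2 \leq \tfrac{n}{s}\cdot \xi'\cdot n^{O(\e/\varphi^2)}$ for a sufficiently small $\xi'$, which when sandwiched between $\m_x^T\Q$ and $\Q^T\m_y$ (each of norm $\leq \|M^tS\|_2\cdot \|M^{t}\mathds{1}_x\|_2 + O(\sigma_{\mathrm{err}})$) contributes another $\xi/(2n)$. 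Summing all four contributions yields $|\adp{f_x,\Sigma_{[k]}f_y} - A^*| < \xi/n$, as required.
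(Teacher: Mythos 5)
Your outline is reasonable, but note that the paper does not actually prove Lemma~\ref{lem:u-abs-close} from scratch: it imports the corresponding statement from \cite{GluchKLMS21} (adapted from plain dot products $\rdp{f_x,f_y}$ to the weighted form $\rdp{f_x,\Sigma_{[k]}f_y}$, which only changes one walk length from $t$ to $t+1$), and the only new content is the remark that the success probability is boosted from $1-n^{-100}$ to $1-n^{-99k}$ by replacing the collision-probability concentration step with Corollary~\ref{cor:kim:vu:cor}. Your hybrid argument — replacing $M^{t+1}\mathds{1}_x$, $M^t\mathds{1}_y$, $M^tS$ and $\Psi_{\mathrm{true}}$ one at a time, controlling each swap by Cauchy--Schwarz together with the norm bounds of Lemma~\ref{lem:Mt-bnd} and Lemma~\ref{lem:v-close}, concentration of bilinear collision counts via the Kim--Vu corollary, and a Weyl/Davis--Kahan perturbation bound for the truncated inverted Gram matrix — is essentially the strategy of that underlying proof, so in spirit you are reconstructing the cited argument rather than taking a new route. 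Your handling of (i)--(iii) is sound precisely because you apply the concentration to the inner products $\m_x^T\Q_j$ rather than to $\|\m_x - M^{t+1}\mathds{1}_x\|_2$ (a naive $\ell_2$ swap would be far too lossy at $R \approx n^{1/2+O(\e/\varphi^2)}$), and the parameter accounting ($\sigma_{\mathrm{err}} \approx \xi\eta/(k^{O(1)}n^{1+O(\e/\varphi^2)})$ forcing $R \approx n^{1/2+O(\e/\varphi^2)}\cdot\mathrm{poly}(k\log n/(\xi\eta))$) is consistent with Algorithm~\ref{alg:LearnEmbedding-tt}.

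What remains only asserted, and is where all the real work sits, is step (iv): you would need to carry out the spectral-norm comparison of $\frac{n}{s}\G$ with $\widetilde{W}\widetilde{\Sigma}^2\widetilde{W}^T$ (union bound over the $s^2$ entries, each estimated via the Kim--Vu corollary), verify that the eigengap of $\sqrt{n/s}\,M^tS$ between $\widetilde\sigma_k^2 \geq n^{-O(\e/\varphi^2)}$ and the tail (which is $n^{-\Omega(1)}$ for $t \geq 20\log n/\varphi^2$) indeed supports the Davis--Kahan step, and track the $n^{O(\e/\varphi^2)}$ amplification through the inversion so that the contribution stays below $\xi/(2n)$ after sandwiching between $\m_x^T\Q$ and $\Q^T\m_y$. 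You should also state explicitly the (routine) extension of Corollary~\ref{cor:kim:vu:cor} to walks of lengths $t$ and $t+1$ and to the asymmetric setting where one side uses $R_{\mathrm{init}}$ stored walks and the other $R_{\mathrm{query}}$ fresh walks. None of this looks like a fatal gap — it is exactly the content of the proof in \cite{GluchKLMS21} that the paper leans on — but as written your proposal is a plan whose most delicate component is deferred rather than proved.
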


\begin{remark}
	The result in Gluch et al. above, obtains a success probability of at least $1 - n^{-100}$. It
	can be improved to $1 - n^{-100k}$ with an overhead of $poly(k)$ times as many samples.
\end{remark}

We now prove Theorem~\ref{thm:wdp}

\begin{proof}[Proof of Theorem \ref{thm:wdp}]
\textbf{Correctness:}
Let $s= \varTheta{(n^{480\epsilon / \varphi^2}\cdot \log n \cdot \mycolor{k^{13}}/(\xi^2))}$. Recall that $I_S=\{i_1,\ldots, i_s\}$ is the multiset of $s$ vertices each sampled uniformly at random (see line 3 of Algorithm \ref{alg:LearnEmbedding}).  Let $S$ be the $n\times s$ matrix whose $j$-th column equals $\mathds{1}_{i_j}$. Recall that $M$ is the random walk transition matrix of $G$. Let $\sqrt{\frac{n}{s}} \cdot M^tS=\widetilde{U}\widetilde{\Sigma}\widetilde{W}^T$  be  the eigendecomposition of $\sqrt{\frac{n}{s}} \cdot M^tS$. We define
\[e_1=\left|(M^{t+1}\mathds{1}_{x})^T  (M^tS)\left(\frac{n}{s}\cdot\widetilde{W}_{[k]} \widetilde{\Sigma}^{-4}_{[k]} \widetilde{W}^T_{[k]}\right) (M^tS)^T (M^{t}\mathds{1}_{y}) -  \mathds{1}_x^T U_{[k]}\Sigma_{[k]}{U}_{[k]}^T  \mathds{1}_y \right| \]
and 
\[e_2=\left| \adp{f_{x},\Sigma_{[k]} f_y} - (M^{t+1}\mathds{1}_{x})^T  (M^tS)\left(\frac{n}{s}\cdot\widetilde{W}_{[k]} \widetilde{\Sigma}^{-4}_{[k]} \widetilde{W}^T_{[k]}\right) (M^tS)^T (M^{t}\mathds{1}_{y}) \right| \]
By triangle inequality we have 
\[\left|\adp{f_{x},\Sigma_{[k]}f_y}  - \langle f_x, \Sigma_{[k]}f_y \rangle \right|  = \left|\adp{f_{x},\Sigma_{[k]}f_y} - \mathds{1}_x^T U_{[k]}\Sigma_{[k]}{U}_{[k]}^T  \mathds{1}_y \right| \leq e_1 + e_2 \text{.}\]
Let $\xi'=\xi/2$.  Let $c$ be a constant in front of $s$ in Lemma \ref{lem:bnd-e1} and  $c'$ be a 
constant in front of $s$ in Lemma \ref{lem:u-abs-close}. Recall, line 3 of Algorithm 
\ref{alg:LearnEmbedding} sets $s= \varTheta{(n^{480\epsilon / \varphi^2}\cdot \log n \cdot \mycolor{k^{13}}/(\xi^2))}$. 

Since $\frac{\epsilon}{\varphi^2}\leq \frac{1}{10^5}$ and $s\geq c\cdot n^{480\epsilon / \varphi^2}\cdot \log n \cdot \mycolor{k^{13}}/(\xi'^2)$, by Lemma \ref{lem:bnd-e1} with probability at least $1-n^{-100\cdot k}$ we have $e_1\leq \frac{\xi'}{\mycolor{nk^2}}=\frac{\xi}{\mycolor{2\cdot nk^2}} \text{.}$ 
Since  
$s \geq c'\cdot n^{240\epsilon / \varphi^2}\cdot \log n \cdot k^{4}$, by 
Lemma \ref{lem:u-abs-close}, with probability at least $1-2\cdot n^{-100\cdot k}$ we have 
$e_2 \leq \frac{\xi}{\mycolor{2\cdot nk^2}}\text{.}$ Thus with probability at least $1-3\cdot n^{-100\cdot k}$ we 
have 
\[\left| \adp{f_{x},\Sigma_{[k]}f_y}  - \langle f_x, \Sigma_{[k]}f_y \rangle \right|  \leq e_1 + e_2 \leq \frac{\xi}{\mycolor{2\cdot n k^2}}+\frac{\xi}{\mycolor{2\cdot nk^2}} \leq \frac{\xi}{\mycolor{nk^2}}\text{.}\]

\textbf{Running time of $\textsc{InitializeOracle}$:}
The algorithm first samples a 
set $I_S$. Then, as per line \ref{ln:setQi} of Algorithm \ref{alg:LearnEmbedding}, it 
estimates the empirical probability distribution of $t$-step random walks starting from 
any vertex $x\in I_S$. %As observed in \cite{GluchKLMS21}, the  
The $\textsc{EstimateTransitionMatrix}$ procedure runs $R_{\text{init}}$ random walks of length  $t$ from each vertex $x\in I_S$. So it takes $O(\log n\cdot s\cdot R_{\text{init}} \cdot t)$ time 
and requires $O(\log n\cdot s\cdot R_{\text{init}})$ space to store endpoints of random walks. Then as per line 6 of Algorithm \ref{alg:LearnEmbedding} it estimates matrix $\G$ such that 
the entry corresponding to the $x^\text{th}$ row and $y^{\text{th}}$ column of $\G$ is an 
estimation of pairwise collision probability of random walks starting from $x,y \in I_S$. To 
compute $\G$ we call Algorithm \textsc{EstimateCollisionProbabilities}($G,I_S,R_{\text{init}},t$) 
(from \cite{GluchKLMS21}) for $O(\log n)$ times. This procedure takes 
$O(s\cdot R_{\text{init}}\cdot t \cdot \log n)$ time and it requires $O(s^2\cdot \log n)$ space to 
store matrix $\G$. Computing the SVD of $\G$ (done in line 7 of Algorithm \ref{alg:LearnEmbedding}) 
takes time $O(s^3)$. Thus overall Algorithm \ref{alg:LearnEmbedding} runs in 
time $O \left(\log n\cdot s\cdot R_{\text{init}} \cdot t+ s^3\right)$. Thus, by choice of 
$t= \varTheta\left(\frac{\log n}{\varphi^2}\right)$, 
$R_{\text{init}}=\varTheta{(n^{1/2+ O(\epsilon / \varphi^2)}  \cdot \log^{O(1)} n \cdot k^{O(1)}/(\xi)^{O(1)})}$ 
%$R_{\text{init}}=\varTheta{(n^{1-\delta +3\cdot 10^3 \cdot\epsilon / \varphi^2}  \cdot k^{33}/{\xi}^{6})}$ 
and $s= \varTheta(n^{O(\epsilon / \varphi^2)}\cdot (\log n)^{O(1)} \cdot k^{O(1)}/{\xi}^{O(1)})$  as in 
Algorithm \ref{alg:LearnEmbedding} we get that Algorithm \ref{alg:LearnEmbedding} runs in time 
$\log n\cdot s\cdot R_{\text{init}} \cdot t+ s^3=(\frac{k\cdot \log n}{\xi\cdot \varphi})^{O(1)}\cdot n^{1/2+O(\epsilon/\varphi^2)}$ 
and returns a data structure of size 
$O \left(s^2+\log n\cdot s \cdot R_{\text{init}}\right)=(\frac{k\cdot \log n}{\xi})^{O(1)}\cdot n^{1/2+O(\epsilon/\varphi^2)}\text{.}$

\textbf{Running time of $\textsc{WeightedDotProductOracle}$:} 
Algorithm  $\textsc{WeightedDotProductOracle}$ runs $R_{\text{query}}$ random walks of length $t, t+1$ from vertex $x$ and vertex $y$, then it computes $(\m_x^T  \widehat{Q})$ and $(\widehat{Q}^T \m_y)$. Since $\widehat{Q}\in \R^{n\times s}$ has $s$ columns and since $\m_x$ has at most $R_{\text{query}}$ non-zero entries, thus one can compute $\m_x^T \cdot \widehat{Q}$ in time $R_{\text{query}}\cdot s$. Finally Algorithm \ref{alg:weighted-dot}  returns value $(\m_x^T \widehat{Q}) \Psi (\widehat{Q}^T \m_y)$.  Since $(\m_x^T \widehat{Q}), (\widehat{Q}^T \m_y)\in \R^{s}$ and $\Psi\in \R^{s\times s}$ one can compute  $(\m_x^T \widehat{Q}) \Psi (\widehat{Q}^T \m_y)$ in time $O(s^2)$. Thus overall Algorithm \ref{alg:weighted-dot} takes $O\left(t\cdot R_{\text{query}} + s\cdot R_{\text{query}}  + s^2\right)$ time. Thus, by choice of $t= O\left(\frac{\log n}{\varphi^2}\right)$	, $R_{\text{query}}=n^{1/2+O(\epsilon / \varphi^2)}  \cdot \left(\frac{k}{\xi} \right)^{O(1)}$ and $s= n^{O(\epsilon / \varphi^2)}\cdot (\frac{k\cdot \log n}{\xi})^{O(1)}$  we get that the  Algorithm  \ref{alg:weighted-dot} runs in  time $(\frac{k\cdot \log n}{\xi\cdot \varphi})^{O(1)}\cdot n^{1/2+O(\epsilon/\varphi^2)}$.
\end{proof}

\end{document}